\documentclass{article}





     \usepackage[nonatbib, final]{neurips_2020}
     
\usepackage[utf8]{inputenc} 
\usepackage[T1]{fontenc}    
\usepackage{hyperref}       
\usepackage{url}            
\usepackage{booktabs}       
\usepackage{amsfonts}       
\usepackage{nicefrac}       
\usepackage{microtype}      

\usepackage{bbm}
\usepackage{amssymb}
\usepackage{amsthm}
\usepackage{mathtools}
\usepackage{graphicx}
\usepackage{epstopdf}
\DeclareGraphicsExtensions{.eps}
\usepackage{algorithm, amsmath}
\usepackage{algpseudocode}
\usepackage{float}
\usepackage{subcaption}
\usepackage{etoolbox}
\usepackage[T1]{fontenc}
\usepackage{multirow}
\usepackage{enumitem}
\usepackage{array}
\usepackage[numbers,sort&compress]{natbib}
\usepackage{diagbox}
\usepackage{authblk}

\DeclareMathOperator*{\argmax}{arg\,max}

\newcommand{\V}{\mathcal{V}}

\newcommand{\U}{\mathcal{U}}

\newcommand{\Z}{\mathcal{Z}}

\newcommand{\HH}{\mathcal{H}}

\newtheorem{theorem}{Theorem}
\newtheorem{lemma}{Lemma}
\newtheorem{corollary}{Corollary}

\newtheorem{assumption}{Assumption}

\newtheorem*{remark}{Remark}

\usepackage{tikz}
\usetikzlibrary{arrows.meta}
\tikzset{>={Latex[width=2mm,length=2mm]}}
\tikzstyle{vertex}=[circle, draw]

\tikzstyle{stuff_fill}=[vertex, style=green, fill=black!10]
\usetikzlibrary{calc}

\title{Robust Sequence Submodular Maximization}
\author[1]{Gamal Sallam}
\author[2]{Zizhan Zheng}
\author[1]{Jie Wu}
\author[1, 3]{Bo Ji }
\affil[1]{Department of Computer and Information Sciences, Temple University}
\affil[2]{Department of Computer Science, Tulane University }
\affil[3]{Department of Computer Science, Virginia Tech}

\begin{document}

\maketitle

\begin{abstract}
    Submodularity is an important property of set functions and has been extensively studied in the literature. It models set functions that exhibit a diminishing returns property, where the marginal value of adding an element to a set decreases as the set expands. This notion has been generalized to considering sequence functions, where the order of adding elements plays a crucial role and determines the function value; the generalized notion is called sequence (or string) submodularity. In this paper, we study a new problem of robust sequence submodular maximization with cardinality constraints. The robustness is against the removal of a subset of elements in the selected sequence (e.g., due to malfunctions or adversarial attacks). 
    Compared to robust submodular maximization for set function, new challenges arise when sequence functions are concerned. 
    Specifically, there are multiple definitions of submodularity for sequence functions, which exhibit subtle yet critical differences. Another challenge comes from two directions of monotonicity: forward monotonicity and backward monotonicity, both of which are important to proving performance guarantees. 
    To address these unique challenges, we design two robust greedy algorithms: while one algorithm achieves a constant approximation ratio but is robust only against the removal of a subset of contiguous elements, the other is robust against the removal of an arbitrary subset of the selected elements but requires a stronger assumption and achieves an approximation ratio that depends on the number of the removed elements.
    Finally, we generalize the analyses to considering sequence functions under weaker assumptions based on approximate versions of sequence submodularity and backward monotonicity. 
    
\end{abstract}

\section{Introduction}

Submodularity is an important property of set functions and has been extensively studied in the literature \cite{nemhauser1978best, khuller1999budgeted, calinescu2011maximizing, chekuri2014submodular}. It models set functions that exhibit a diminishing returns property, where the marginal value of adding an element to a set decreases as the set expands (i.e., contains more elements). The notion of submodularity has been generalized to considering sequence functions, where the order of adding elements plays a crucial role and determines the function value; the generalized notion is called sequence (or string) submodularity~\cite{zhang2016string, tschiatschek2017selecting, streeter2009online, zhang2012submodularity, alaei2010maximizing, jawaid2015submodularity, zhang2013near}. Several real-world applications, including machine learning based recommendation systems, ads allocation, and automation and control, involve the selection of elements in sequence. 
In this paper, we study a new problem of robust sequence submodular maximization with cardinality constraints. The robustness is against the removal of a subset of elements in the selected sequence (e.g., due to malfunctions or adversarial attacks).
To motivate the new problem studied in this paper, we begin with the discussions about two concrete applications (sensor activation and movie recommendation) and use them to illustrate the key differences between set functions and sequence functions. 

\textbf{Sensor Activation}. Consider the problem of sensor activation for moving target detection \cite{zhang2016string}, where the objective is to sequentially activate a certain number of sensors to maximize the probability of detecting a moving target. Suppose that each sensor covers a certain area. Without any prior knowledge of the target location or the probability of detection at each location, it is plausible to maximize the total area covered by the activated sensors. If the coverage area of each sensor remains constant over time, then it is sufficient to decide which subset of sensors to activate without concerning the order in which the sensors are activated. This scenario can typically be modeled as maximizing a (set) submodular function.

In practice, however, the coverage area of each sensor may decay over time for several reasons, such as battery decay and corrosive environment. Accounting for such factors, the coverage area of a sensor may be modeled as a decreasing function, e.g., in the form of $C e^{-t/{T}}$, where $C$ is the initial coverage area, $T$ is the sensor's lifetime, and $t = 0, 1, \dots$ is the time index. In this scenario, the sequence in which the sensors are activated is of critical importance. The activation sequence determines the total coverage area and thus impacts the probability of successful detection.

The above problem becomes more challenging if some sensors may be malfunctioning after they are activated or, even worse, if there is an adversary that may attack some sensors and render them non-working. Depending on how critical the application scenario is, one must ensure resilience of the sensor activation plan such that a certain performance (i.e., probability of successful detection) can still be guaranteed even in the worst-case failure scenario.

\textbf{Movie Recommendation}. Consider that a service provider (e.g., Netflix) would like to recommend movies to a user \cite{tschiatschek2017selecting, mitrovic2019adaptive}. 
It is common that the service provider recommends movies of similar flavor to a user. To some user, however, the incremental level of entertainment of watching a movie may decrease if the user had watched more similar movies, which exhibits a diminishing returns property.

In reality, however, the order in which the movies are recommended to the user may also impact how the user perceives a specific movie. In fact, movie recommendation and TV show recommendation have been modeled as sequence functions in \cite{tschiatschek2017selecting} and \cite{mitrovic2019adaptive}, respectively. As noted in the motivating example in \cite{mitrovic2019adaptive}, if the model determines that the user might be interested in \emph{The Lord of the Rings} series, then recommending \emph{The Return of the King} first and \emph{The
Fellowship of the Ring} last could make the user unsatisfied with an otherwise excellent recommendation. Moreover, the user may not watch all the recommended videos possibly because the user has already watched some of them or does not like them (e.g., due to low ratings and/or unfavorable reviews). 

\begin{table}[t]
  \caption{Representative work on submodular maximization \vspace{3pt}}
\label{table:related_work}
  \centering 
  \begin{tabular}{|c|c|c|}
  \cline{2-3}
\multicolumn{1}{c|}{} & (Set) Submodular Maximization & Sequence Submodular Maximization \\ \hline 
Non-robust& {\cite{nemhauser1978best, khuller1999budgeted, calinescu2011maximizing, chekuri2014submodular}} & {\cite{zhang2016string, tschiatschek2017selecting, streeter2009online, zhang2012submodularity, alaei2010maximizing, jawaid2015submodularity, zhang2013near}} \\ \hline 
Robust& {\raggedright \cite{orlin2018robust, bogunovic2017robust, mitrovic2017streaming, tzoumas2017resilient, iyer2019unified, tzoumas2018resilient, tzoumas2019robust, mirzasoleiman2017deletion, kazemi2018scalable, krause2008robust, powers2016constrained, powers2017constrained, anari2019structured}
} & {\raggedright \textbf{This paper}}\\ \hline
\end{tabular}
\end{table}

While the problem of submodular maximization is generally NP-hard due to its combinatorial nature, the property of submodularity has been exploited to design efficient approximation algorithms. Since the seminal work in \cite{nemhauser1978best}, it has been well known that a simple greedy algorithm and its variants can achieve an optimal approximation ratio\footnote{The approximation ratio is defined as the ratio of the objective value achieved by an algorithm over that achieved by an optimal algorithm.} of $(1-1/e)$ in various settings. Specifically, a variant of the greedy algorithm has also been shown to achieve the same approximation ratio for the problem of sequence submodular maximization \cite{streeter2009online}. Recently, robust versions of the submodular maximization problem have aroused a lot of research interests (e.g., \cite{orlin2018robust, tzoumas2017resilient, mitrovic2017streaming}). The focus of these studies is on selecting a set of elements that is robust against the removal of a subset of them. 


In this paper, we take one step further and consider a robust version of the sequence submodular maximization problem. The goal is to select a sequence of elements (i.e., elements in a specific order) with cardinality constraints such that the value of the sequence function is maximized when a certain number of the selected elements may be removed. 
In Table~\ref{table:related_work}, we position our work along with the literature on submodular maximization. The generalization to robust sequence submodular maximization introduces new challenges. As far as sequence functions are concerned, not only do the notions of submodularity and monotonicity involve variants of subtle yet critical differences, but the design and analysis of robust algorithms are also faced with novel technical difficulties, which render the proofs more challenging. In the sequel, we elaborate on these unique challenges.

First, there are two definitions of submodularity for set functions: (i) the marginal value of adding an element to a set decreases as the set expands; (ii) the marginal value of adding a set to another set decreases as the latter set expands. It is trivial to show that these two definitions are equivalent.
Replacing ``set'' with ``sequence'' in the above definitions gives two similar definitions for sequence functions. We will show that while (ii) still trivially implies (i) for sequence functions, the opposite does not hold in general. 
This leads to the following important question: Which definition of submodularity should one consider for sequence functions? Interestingly, while the weaker form (i) is sufficient for establishing provable approximation guarantees for non-robust sequence submodular maximization, one needs the stronger form (ii) to obtain similar results for the robust counterpart.

Second, while monotonicity is a straightforward notion for set functions (i.e., the value of a set function does not decrease as the set expands), there are two directions of monotonicity for sequence functions: forward monotonicity and backward monotonicity. Forward (resp., backward) monotonicity means that adding a sequence to the end (resp., beginning) of another sequence does not decrease the overall value. Both monotonicity properties are important to proving performance guarantees. 



Third, the impact of removing an element from a sequence depends both on the element itself and on its position in the sequence. This makes the robust algorithms designed for set functions inapplicable here and calls for new robust algorithms that are better suited for sequence functions. Besides, one needs stronger assumptions to proceed with performance analysis for sequence functions. Therefore, it is more important to prove performance guarantees under weaker assumptions based on approximate versions of submodularity and monotonicity, which are more likely to hold in practice.


Due to these unique challenges, it is unclear what conditions are sufficient for establishing provable approximation ratios for robust sequence submodular maximization, how to design efficient and robust algorithms, and how to prove performance guarantees for the designed algorithms. We aim to answer these questions in this paper. Our contributions are summarized as follows.
\begin{itemize}
    \item To the best of our knowledge, this is the first work that considers the problem of robust sequence submodular maximization. It is well known that the traditional (set) submodular maximization problem is already NP-hard. Accounting for sequence functions and robustness guarantees adds extra layers of difficulty, as the submodular and monotone properties of sequence functions involve variants with subtle yet critical differences.

    \item To address these unique challenges, we design two robust greedy algorithms for maximizing a forward-monotone, backward-monotone, and sequence-submodular function with cardinality constraints. While one algorithm achieves a constant approximation ratio but is robust only against the removal of a subset of contiguous elements, the other is robust against the removal of an arbitrary subset of the selected elements but requires a stronger assumption and achieves an approximation ratio that depends on the number of the removed elements. Although our proposed greedy algorithms are quite intuitive, the theoretical analysis is more challenging, and the presented approximation guarantees are highly nontrivial.
    
    \item We consider different definitions of submodularity and monotonicity and investigate their impacts on the derived theoretical results. 
    Our study reveals that compared to set functions, one needs more properties of sequence functions to establish similar approximation results.
    On the other hand, we introduce general versions of such properties, such as approximate sequence submodularity and approximate backward monotonicity, and leverage them to prove approximation results of the proposed algorithms under weaker assumptions, which are more likely to hold in practice. 
    We hope that this work serves as an important first step towards the
    design and analysis of efficient algorithms for robust sequence submodular maximization, which is worth further investigation through empirical evaluations for specific applications.

\end{itemize}


Due to space limitations, we provide all the proofs in the supplementary document.

\section{System Model and Problem Formulation}
\label{sec:system_model}
Consider a set of elements $\V$, with $V = |\V|$, where $|\cdot|$ denotes the cardinality of a set. Let $(v_1, \dots, v_m)$ be a sequence of non-repeated\footnote{This definition can be easily generalized to allow repetition by augmenting the ground set $\V$ as follows. Assume that each element $v_i \in \V$ can be repeated $z_i$ times. Let $v_i^j$ denote the $j$-th copy of element $v_i$. We use $\bar{\V}$ to denote the augmented ground set, which is defined as $\bar{\V} \triangleq \cup_{v_i \in \V} \{v_i^1, \dots, v_i^{z_i}\}$. Therefore, we can replace $\V$ with the augmented ground set $\bar{\V}$, which essentially allows the repetition of elements in $\V$.} elements selected over $m$ steps, where $v_i \in \V$ for $i = 1, \dots, m$, $v_i \neq v_j$ for $i \neq j$, and $m = 0, \dots, |\V|$. When $m = 0$, the sequence is empty and is denoted by $()$. We use $\HH(\V)$ to denote the set of all possible sequences of non-repeated elements in $\V$, and we use $\V(S)$ to denote the set of elements in sequence $S \in \HH(\V)$. By slightly abusing the notation, we use $|S|$ to denote the number of elements in sequence $S$, i.e., $|S|=|\V(S)|$.
Consider a sequence $S \in  \HH(\V)$ and a set $\U \subseteq \V$. We use $S - \U$ to denote the sequence that is constructed by removing all the elements in $\U$ from sequence $S$ without changing the order of the remaining elements. For instance, suppose $S = (v_2, v_1, v_5, v_3)$ and $\U = \{v_2, v_4, v_5\}$. Then, we have $S - \U = (v_1, v_3)$.
For two sequences $S_1, S_2 \in  \HH(\V)$, sequence $S_1$ is said to be a subsequence of sequence $S_2$ if we can write $S_1$ as $S_2 - \U$ for some $\U \subseteq \V(S_2)$.
Consider two sequences $S_1 = (v_1, \dots, v_{m_1})$ and $S_2 = (u_1, \dots, u_{m_2})$ in $\HH(\V)$, and let $S_2 - \V(S_1) = (w_1, \dots, w_{m_3})$. We define a concatenation of $S_1$ and $S_2$ as
    \begin{equation}
        \label{eq:concatenation_unique_elements}
        S_1 \oplus S_2  \triangleq (v_1, \dots, v_{m_1}, w_1, \dots, w_{m_3}).
    \end{equation}
Note that the concatenated sequence $S_1 \oplus S_2$ has no repeated elements.
We write $S_1 \preceq S_2$ if we can write $S_2$ as $S_1 \oplus S_3$ for some $S_3 \in \HH(\V)$. 

Before we define the problem of sequence submodular maximization, which was first considered in \cite{streeter2009online}, we introduce some important definitions. Consider a sequence function $h: \HH(\V) \rightarrow \mathbb{R}^+$, where $\mathbb{R}^+$ is the set of non-negative real numbers. Without loss of generality, we assume that the value of an empty sequence is zero, i.e., $h(())=0$. We define the marginal value of appending sequence $S_2$ to sequence $S_1$ as $h(S_2|S_1) \triangleq h(S_1 \oplus S_2) - h(S_1)$.
Function $h$ is said to be sequence-submodular if for all $S_3 \in \HH(\V)$, we have
        \begin{equation}
        \label{eq:sequence_submodular}
             h(S_3|S_1) \geq h(S_3|S_2), ~\forall S_1, S_2 \in \HH(\V)~ \text{ such that }  S_1 \preceq S_2.
        \end{equation}
The above inequality represents a diminishing returns property. Similarly, function $h$ is said to be element-sequence-submodular if for all $v \in \V$, we have
     \begin{equation}
        \label{eq:element_sequence_submodular}
         h((v)|S_1) \geq h((v)|S_2), ~\forall S_1, S_2 \in \HH(\V)~ \text{ such that }  S_1 \preceq S_2.
        \end{equation}

    From the above definitons, it is easy to see that a sequence-submodular function must also be element-sequence-submodular. However, an element-sequence-submodular function may not necessarily be sequence-submodular; we provide such an example in our supplementary material. This is in contrary to submodular (set) functions, for which one can easily verify that similar definitions of Eqs. \eqref{eq:sequence_submodular} and \eqref{eq:element_sequence_submodular} imply each other. Although it is noted (without a proof) in \cite{zhang2016string} that using an induction argument, one can show that an element-sequence-submodular function must also be sequence-submodular, we find this claim false due to the counterexample we find (see our supplementary material). 
    
    Also, function $h$ is said to be forward-monotone if
        \begin{equation}
            \label{eq:forward_monotone}
            h(S_1 \oplus S_2) \geq h(S_1),~ \forall S_1, S_2 \in \HH(\V), 
        \end{equation}    
   and is said to be backward-monotone if
        \begin{equation}
         \label{eq:backward_monotone}
            h(S_1 \oplus S_2) \geq h(S_2),~ \forall S_1, S_2 \in \HH(\V).
        \end{equation}
For the sensor activation example we discussed in the introduction, forward monotonicity (resp., backward monotonicity) means that adding a sequence of sensors to the end (resp., the beginning) of another sequence of sensors does not reduce the total coverage area. We will later introduce approximate versions of sequence submodularity and backward monotonicity and generalize the theoretical results under weaker assumptions based on such generalized properties (see Section~\ref{sec:approx}).

The problem of selecting a sequence $S \in \HH(\V)$ with an objective of maximizing function $h$ with cardinality constraints (i.e., selecting no more than $k$ elements for $k>0$) can be formulated as
\begin{equation}\tag{$P$}
    \label{eq:sequence_selection}
    \begin{aligned}
    & \underset{S \in \HH(\V),~ |S| \leq k}{\max} \quad   h(S).
    \end{aligned}
    \end{equation}

Next, we propose a robust version of Problem \eqref{eq:sequence_selection}, which accounts for the removal of some of the selected elements. Consider $\tau \leq k$. The robust version of Problem \eqref{eq:sequence_selection} can be formulated as
\begin{equation}\tag{$R$}
    \label{eq:Robust_sequence_selection}
    \begin{aligned}
    \underset{S \in \HH(\V),~ |S| \leq k~}{\max} \underset{\V^\prime  \subseteq \V(S),~ |\V^\prime| \leq \tau}{\min} \quad   h(S - \V^\prime).
    \end{aligned}
    \end{equation}

Without loss of generality, we assume $k>1$ for Problem \eqref{eq:Robust_sequence_selection}. In the next section, we discuss the challenges of Problem \eqref{eq:Robust_sequence_selection} and present the proposed robust algorithms. 

\section{Proposed Robust Algorithms}
\label{sec:proposed}

We begin with a discussion about the non-robust sequence submodular maximization problem (Problem~\eqref{eq:sequence_selection}), through which we provide useful insights into the understanding of the challenges of Problem~\eqref{eq:Robust_sequence_selection}. Although Problem \eqref{eq:sequence_selection} is NP-hard, it can be approximately solved using a simple \emph{Sequence Submodular Greedy} (SSG) algorithm \cite{streeter2009online}. Under the SSG algorithm, we begin with an empty sequence $S$; in each iteration, we choose an element that leads to the largest marginal value with respect to $S$ and append it to sequence $S$, i.e., $S = S \oplus \argmax_{v \in \V \setminus \V(S)} h((v)| S)$. We repeat the above procedure until $k$ elements have been selected. It has been shown in \cite{streeter2009online} that the SSG algorithm achieves an approximation ratio of $(1 - 1/e)$ for maximizing a forward-monotone, backward-monotone, and element-sequence-submodular function with cardinality constraints. 

Although the SSG algorithm approximately solves Problem~\eqref{eq:sequence_selection}, it can perform very poorly if one directly applies it to solving its robust counterpart (Problem~\eqref{eq:Robust_sequence_selection}). The intuition behind this is the following. The SSG algorithm tends to concentrate the value of the selected sequence on the first few elements. Selecting elements in this manner leaves the overall sequence vulnerable as removing some of these elements would have a high impact on the overall value. Consider the following example, where we assume $\tau = 1$ for simplicity. Let $\V_1 = \{v\}$, $ \V_2 = \{u_1, \dots, u_n\}$, $ \V_3 = \{w_1, \dots, w_n\}$, and $\V = \V_1 \cup \V_2 \cup \V_3$. Assume $h((v)) = 1$, $h((u_i)) = 1/n$ for all $u_i \in \V_2$, and $h((w_i)) = \epsilon$ for all $w_i \in \V_3$, where $\epsilon$ is an arbitrarily small positive number. Also, assume that for any $S_1, S_2 \in \HH(\V)$ such that $v \in \V(S_1)$ and $v \notin \V(S_2)$, we have $h((u_i) | S_1) = 0$ and $h((u_i) | S_2) = 1/n$ for all $u_i \in \V_2$ and $h((w_i) | S_1) = h((w_i) | S_2) = \epsilon$ for all $w_i \in \V_3$. Suppose $k = n$. Then, the SSG algorithm will select $v$ as the first element and select the subsequent $n-1$ elements from $\V_3$. The value of the selected sequence will be $1 + (n-1)\epsilon$. If element $v$ is removed, then the value of the remaining sequence will be $(n-1)\epsilon$, which can be arbitrarily small. In contrast, a sequence consisting of $n$ elements from $\V_2$ will be robust against the removal of any element. This is because the overall value is equally distributed across all the elements in the sequence and the value of the sequence after removing any element is $(n-1)/n$. 

The above example shows that the SSG algorithm may perform arbitrarily bad for Problem~\eqref{eq:Robust_sequence_selection}. To that end, we propose two greedy algorithms that can address this limitation and ensure robustness of the selected sequence for Problem~\eqref{eq:Robust_sequence_selection}. First, we propose an algorithm that achieves a constant approximation ratio but is robust only against the removal of $\tau$ contiguous elements (Section~\ref{sec:contiguous}). Then, we further propose an algorithm that works in a general setting without the contiguous restriction and is robust against the removal of an arbitrary subset of $\tau$ selected elements, but it requires a stronger assumption and achieves an approximation ratio that depends on the number of removed elements (Section~\ref{sec:general}).

\subsection{Robustness Against the Removal of Contiguous Elements} \label{sec:contiguous}

In this subsection, we wish to design an algorithm that is robust against the removal of $\tau$ contiguous elements. The assumption of the removal of contiguous elements can model a spatial relationship such as sensors in close proximity or a temporal relationship such as consecutive episodes of a TV show. We design a variant of the SSG algorithm that approximately solves Problem~\eqref{eq:Robust_sequence_selection}. The algorithm is presented in Algorithm~\ref{alg:robustGreedy_tau_contiguous}. As we discussed earlier, the limitation of the SSG algorithm is that the selected sequence is vulnerable because the overall value might be concentrated in the first few elements. Algorithm~\ref{alg:robustGreedy_tau_contiguous} is motivated by this key observation and works in two steps. In Step 1, we select a sequence $S_1$ of $\tau$ elements from elements in $\V$ in a greedy manner as in SSG. In Step 2, we select another sequence $S_2$ of $k - \tau$ elements from elements in $\V \setminus \V(S_1)$, again in a greedy manner as in SSG. Note that when we select sequence $S_2$, we perform the greedy selection as if sequence $S_1$ does not exist at all. This ensures that the value of the final sequence $S = S_1 \oplus S_2$ is not concentrated in either $S_1$ or $S_2$. The complexity of Algorithm~\ref{alg:robustGreedy_tau_contiguous} is $O(k V)$, which is in terms of the number of function evaluations used in the algorithm.


We first state the following assumption that is needed for deriving the main results in this subsection.
\begin{assumption}
\label{assump:forward_backward_sequence}
Function $h$ is forward-monotone, backward-monotone, and sequence-submodular.
\end{assumption}

In Theorem~\ref{theorem:robustGreedy_tau_1_case_a}, we state the approximation result of Algorithm~\ref{alg:robustGreedy_tau_contiguous} in a special case of $\tau =1$. We consider this special case for two reasons: (i) it is easier to explain the key ideas in the proof of this special case; (ii) we can prove better approximation ratios in this special case, which may not be obtained from the analysis in the case of $1 \le \tau \le k$.


\begin{theorem}
\label{theorem:robustGreedy_tau_1_case_a}
Consider $\tau =1$. Under Assumption~\ref{assump:forward_backward_sequence}, Algorithm~\ref{alg:robustGreedy_tau_contiguous} achieves an approximation ratio of $\max \left\{ \frac{e-1}{2e}, \frac{e^{\frac{k-2}{k-1}} - 1}{2 e^{\frac{k-2}{k-1}} - 1} \right\}$, which is lower bounded by a constant $\frac{e-1}{2e}$.
\end{theorem}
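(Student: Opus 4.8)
The plan is to compare the output $S=S_1\oplus S_2$ of Algorithm~\ref{alg:robustGreedy_tau_contiguous} (with $S_1=(s_1)$ a maximizer of $h((\cdot))$ and $S_2=(u_1,\dots,u_{k-1})$ built greedily over $\V\setminus\{s_1\}$) against an optimal robust sequence $S^\ast$ of value $\mathrm{OPT}:=\min_{|\V'|\le1}h(S^\ast-\V')$, controlling how much a single deletion can hurt $S$. Let $m_i:=h((u_i)\,|\,(u_1,\dots,u_{i-1}))$ be the greedy marginals while building $S_2$. Three elementary facts drive everything: (i) sequence-submodularity forces $m_1\ge\cdots\ge m_{k-1}\ge0$ (the element taken at step $i$ had the largest marginal, with $u_{i+1}$ among the candidates); (ii) telescoping $h(S_2)-h(S_2-\{u_j\})$ and applying sequence-submodularity gives $h(S_2)-h(S_2-\{u_j\})\le m_j\le m_1$; and (iii) $m_1=h((u_1))\le h((s_1))=h(S_1)$. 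Now, if the adversary deletes $s_1$ the survivor is $S_2$; if it deletes some $u_j$, the survivor is $S_1\oplus(S_2-\{u_j\})$, worth at least $\max\{h(S_1),h(S_2-\{u_j\})\}\ge\max\{h(S_1),h(S_2)-m_1\}$ by forward/backward monotonicity and (ii). Hence the robust value of $S$ is at least $\min\{h(S_2),\ \max\{h(S_1),h(S_2)-m_1\}\}$, and by (iii) together with $\max\{a,b\}\ge(a+b)/2$ this is already $\ge\tfrac12 h(S_2)$.

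Next I would lower-bound $h(S_2)$ by the standard sequence-greedy argument: since $S_2$ is greedy over $\V\setminus\{s_1\}$ with budget $k-1$, forward-monotonicity and sequence-submodularity give $h(S_2)\ge\big(1-(1-\tfrac1{k-1})^{k-1}\big)h(\hat S)$ for $\hat S$ a best sequence of length $\le k-1$ over $\V\setminus\{s_1\}$. The crucial point is $h(\hat S)\ge\mathrm{OPT}$: if $s_1\in\V(S^\ast)$, then $\{s_1\}$ is a feasible deletion from $S^\ast$, so $\mathrm{OPT}\le h(S^\ast-\{s_1\})\le h(\hat S)$; if $s_1\notin\V(S^\ast)$, then $S^\ast$ minus its worst element already avoids $s_1$, has $\le k-1$ elements, and has value $\mathrm{OPT}$. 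Since $(1-\tfrac1{k-1})^{k-1}<1/e$, combining this with the previous paragraph yields robust value $\ge\tfrac12 h(S_2)>\tfrac{e-1}{2e}\,\mathrm{OPT}$, which is the first alternative in the stated bound and in particular the claimed constant lower bound.

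For the sharper term I would improve the estimate of $\max\{h(S_1),h(S_2)-m_1\}$. Observe that $h(S_2)-m_1=h'\!\big((u_2,\dots,u_{k-1})\big)$ for the residual function $h'(\cdot):=h(\cdot\,|\,(u_1))$, which is again forward-monotone and sequence-submodular, and that $(u_2,\dots,u_{k-1})$ is exactly the greedy sequence for $h'$ over $\V\setminus\{s_1,u_1\}$, but run for only $k-2$ steps. Comparing a $(k-2)$-step greedy run to a length-$\le(k-1)$ competitor obtained from $S^\ast$ (removing $s_1$ from it costs at most $h(S_1)$ by (ii)) produces a factor $1-(1-\tfrac1{k-1})^{k-2}\ge 1-e^{-(k-2)/(k-1)}$, so that $h(S_2)-m_1$ is at least roughly $\big(1-e^{-(k-2)/(k-1)}\big)(\mathrm{OPT}-h(S_1))$. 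Feeding this and (iii) into robust value $\ge\min\{h(S_2),\,\max\{h(S_1),h(S_2)-m_1\}\}$ and balancing the alternatives $h(S_1)$ versus $\big(1-e^{-(k-2)/(k-1)}\big)(\mathrm{OPT}-h(S_1))$ gives a bound of the form $\tfrac{x}{1+x}\,\mathrm{OPT}$ with $x=1-e^{-(k-2)/(k-1)}$, which is precisely the second alternative; since the algorithm achieves both bounds, it achieves the maximum.

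I expect the main obstacle to be this last refinement — specifically, making the residual greedy run ``see'' $\mathrm{OPT}-h(S_1)$ cleanly. The natural competitor (an optimal robust sequence minus its worst element) may contain $s_1$ and/or $u_1$ and must be re-routed through the residual ground set $\V\setminus\{s_1,u_1\}$ while losing value at most $h(S_1)$, and the exponent must come out as exactly $\tfrac{k-2}{k-1}$ rather than something slightly worse. The very small cases (in particular $k=2$, where the second term is $0$ and $S_2$ consists of a single element) should be handled separately.
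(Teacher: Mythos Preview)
Your derivation of the first term $\frac{e-1}{2e}$ is correct and matches the paper's argument in spirit: both use $h(S_2)\ge(1-1/e)\,\mathrm{OPT}$ (via Lemma~\ref{lemma:classic_greedy} and Lemma~\ref{lemma:Optimal_value_from_subset}) together with a balancing step. The paper parametrizes by $\eta=\bigl(h(S_2)-h(S_2-\{z\})\bigr)/h(S_2)$ for the \emph{specific} deleted element $z$ and uses $\max\{\eta,1-\eta\}\ge\tfrac12$, whereas you use the uniform bound $m_1$ on the damage and $\max\{h(S_1),h(S_2)-m_1\}\ge\tfrac12 h(S_2)$; the two are equivalent for this purpose.

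For the second term, the obstacle you flag is genuine and is precisely where your sketch breaks. The residual function $h'(\cdot)=h(\cdot\mid(u_1))$ is indeed forward-monotone and sequence-submodular, but it is \emph{not} backward-monotone in general: one would need $h((u_1)\oplus T_1\oplus T_2)\ge h((u_1)\oplus T_2)$, which is not implied by backward monotonicity of $h$. (The paper makes exactly this observation in the Remark following Lemma~\ref{lemma:S1_given_S2}.) Since the greedy guarantee of Lemma~\ref{lemma:classic_greedy} requires backward monotonicity, you cannot invoke it for $h'$ as you propose. The paper circumvents this by proving Lemma~\ref{lemma:S1_given_S2} directly: it unrolls the greedy recursion for $h(S_1\oplus S_2^{\,i})$ and at the key step uses backward monotonicity of $h$ itself (not of $h'$) to get $h(S_1\oplus S_2^{\,i-1}\oplus S^\ast)\ge h(S^\ast)$, yielding $h(S_2\mid S_1)\ge\bigl(1-e^{-k'/k}\bigr)\bigl(h(S^\ast)-h(S_1)\bigr)$. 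This is then repackaged as the ``value-concentration'' Lemma~\ref{lemma:value_concentration}: if the first element of $S_2$ already carries a fraction $c$ of $h(S_2)$, then $h(S_2)\ge\frac{e^{(k-2)/(k-1)}-1}{e^{(k-2)/(k-1)}-c}\,\mathrm{OPT}$. Since $h((u_1))\ge h((z))\ge\eta\,h(S_2)$ (your fact (ii) applied to $z$), the paper plugs $c=\eta$ and balances $\eta$ against $1-\eta$ to obtain the second term.

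Your route can be repaired: once you replace ``apply standard greedy to $h'$'' by an appeal to Lemma~\ref{lemma:S1_given_S2} over the ground set $\V\setminus\{s_1\}$, you get $h(S_2)-m_1\ge\bigl(1-e^{-(k-2)/(k-1)}\bigr)(\mathrm{OPT}-m_1)\ge\bigl(1-e^{-(k-2)/(k-1)}\bigr)(\mathrm{OPT}-h(S_1))$, and your balancing then yields exactly $\frac{e^{(k-2)/(k-1)}-1}{2e^{(k-2)/(k-1)}-1}$. The paper's $\eta$-parametrization is a bit more direct because it avoids the extra $\min\{h(S_2),\cdot\}$ layer and ties the concentration parameter to the actual damage inflicted by $z$ rather than to the worst-case $m_1$.
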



\begin{remark}
The two terms of the approximation ratio in Theorem \ref{theorem:robustGreedy_tau_1_case_a} have different advantages. While the first term remains constant (i.e., $\frac{e-1}{2e} \approx 0.316$), the second term (i.e., $(e^{\frac{k-2}{k-1}} - 1)/(2 e^{\frac{k-2}{k-1}} - 1)$) is a monotonically increasing function of $k$. The first term is larger for a small value of $k$ (when $k < 4$); the second term becomes larger for a wide range of $k$ (when $k \ge 4$). 
\end{remark}

In Theorem~\ref{theorem:robustGreedy_tau_contigous_case_a}, we state the approximation result of Algorithm~\ref{alg:robustGreedy_tau_contiguous} in the case of $1 \le \tau \le k$.

\begin{theorem}
\label{theorem:robustGreedy_tau_contigous_case_a}
Consider $1 \le \tau \le k$. Under Assumption~\ref{assump:forward_backward_sequence}, Algorithm~\ref{alg:robustGreedy_tau_contiguous} achieves an approximation ratio of $\max \left \{ \frac{(e - 1)^2}{e(2e-1)}, \frac{(e - 1)(e^{\frac{k-2\tau}{k-\tau}} -1)}{(2e - 1)e^{\frac{k-2\tau}{k-\tau}}- (e - 1)} \right \}$, which is lower bounded by a constant $\frac{(e - 1)^2}{e(2e-1)}$.
\end{theorem}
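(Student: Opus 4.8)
The plan is to generalize the blueprint of the $\tau=1$ case. Write the algorithm's output as $S=S_1\oplus S_2$ with $|S_1|=\tau$ and $|S_2|=k-\tau$, let $S_1=(s_1,\dots,s_\tau)$ and $S_2=(u_1,\dots,u_{k-\tau})$, and let $\V'\subseteq\V(S)$, $|\V'|\le\tau$, be the adversary's worst contiguous removal. The starting point is that $S-\V'=(S_1-\V')\oplus(S_2-\V')$, and that, since $\V'$ is contiguous and $|S_1|=\tau$, the removal has a rigid shape: either $\V'\subseteq\V(S_1)$ (so $S_2$ survives intact), or $\V'\subseteq\V(S_2)$ (so $S_1$ survives intact), or $\V'$ straddles the boundary, in which case $S_1-\V'$ is a \emph{prefix} of $S_1$ and the portion deleted from $S_2$ is a \emph{prefix} of $S_2$. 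Forward monotonicity gives $h(S-\V')\ge h(S_1-\V')$ and backward monotonicity gives $h(S-\V')\ge h(S_2-\V')$; both bounds will be combined.

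Three ingredients drive the estimate. First, sequence-submodularity (Eq.~\eqref{eq:sequence_submodular}) forces the greedy marginal gains $\gamma_i=h((s_i)\mid(s_1,\dots,s_{i-1}))$ and $\delta_i=h((u_i)\mid(u_1,\dots,u_{i-1}))$ to be non-increasing, so that any length-$j$ prefix of $S_1$ (resp.\ $S_2$) carries at least a $j/\tau$ (resp.\ $j/(k-\tau)$) fraction of $h(S_1)$ (resp.\ $h(S_2)$), and the binding case is the one in which a prefix of $S_2$ is deleted. Second, the first $r$ picks of $S_1$ constitute an SSG run of length $r$ on all of $\V$, so the guarantee of \cite{streeter2009online} yields $h((s_1,\dots,s_r))\ge\bigl(1-(1-\tfrac1r)^r\bigr)\max_{|T|\le r}h(T)\ge(1-1/e)\,h((u_1,\dots,u_r))$, i.e.\ the surviving prefix of $S_1$ recovers a $(1-1/e)$ fraction of whatever prefix of $S_2$ the adversary destroyed. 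Third, sequence-submodularity (now used with a multi-element prefix, which is exactly why the strong form~\eqref{eq:sequence_submodular} rather than the element-wise form~\eqref{eq:element_sequence_submodular} is needed) gives $h(S_2-\V')\ge h(S_2)-h((u_1,\dots,u_r))$ when a length-$r$ prefix of $S_2$ is removed.

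Putting these together in the straddling case, with $\ell:=h((u_1,\dots,u_r))$ we get $h(S-\V')\ge\max\{(1-1/e)\ell,\ h(S_2)-\ell\}\ge\frac{1-1/e}{2-1/e}\,h(S_2)=\frac{e-1}{2e-1}\,h(S_2)$, and the pure-$S_1$ and pure-$S_2$ cases are no worse once the prefix-retention bound is also invoked. It then remains to show $h(S_2)\ge(1-1/e)\,\mathrm{OPT}$: one exhibits a sequence of at most $k-\tau$ elements over $\V\setminus\V(S_1)$ of value at least $\mathrm{OPT}$ — a suitably chosen subsequence of an optimal robust solution with $S_1$'s elements carved out — and applies the bound of \cite{streeter2009online} to the length-$(k-\tau)$ greedy run that builds $S_2$; the product $\frac{e-1}{2e-1}\cdot(1-1/e)$ equals $\frac{(e-1)^2}{e(2e-1)}$. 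To obtain the sharper $k$-dependent term one replaces the crude $1-1/e$ factors by the exact $1-(1-1/m)^t$ bounds and tracks step counts: the surviving suffix of $S_2$ has at least $k-2\tau$ elements, so the comparison uses $1-(1-\tfrac1{k-\tau})^{k-2\tau}\approx 1-e^{-(k-2\tau)/(k-\tau)}$ in place of $1-1/e$, and re-optimizing the $\max$ yields the stated expression; the theorem then reports the maximum of the two bounds, the constant one dominating precisely when $\tau$ is a large fraction of $k$ (where the second degenerates).

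The step I expect to be the main obstacle — and the reason the constant here, $\frac{(e-1)^2}{e(2e-1)}$, is weaker than the $\tau=1$ constant $\frac{e-1}{2e}$ — is the comparison $h(S_2)\ge(1-1/e)\,\mathrm{OPT}$ for general $\tau$: when $\tau=1$, deleting the single element $s_1$ from the optimal solution is itself an admissible contiguous removal, but for general $\tau$ the elements $\V(S_1)\cap\V(S^\ast)$ need not lie inside a contiguous length-$\tau$ window of the optimal sequence $S^\ast$, and the adversary is not forced to spend its whole budget, so one cannot simply argue ``delete $\V(S_1)$ from $S^\ast$ and the value stays $\ge\mathrm{OPT}$.'' Navigating this (very possibly at the cost of the extra slack separating the two constants) and carrying out the step-count book-keeping for the refined bound is where the real work lies.
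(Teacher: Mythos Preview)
Your overall strategy for the constant bound $\frac{(e-1)^2}{e(2e-1)}$ matches the paper's: split by where the contiguous block lands, bound $h(S-\V')$ below by both $h(S_1-\V')$ (forward monotonicity) and $h(S_2-\V')$ (backward monotonicity), balance a $\max\{(1-1/e)\cdot\text{loss},\ h(S_2)-\text{loss}\}$ to get $\tfrac{e-1}{2e-1}h(S_2)$, and finish with $h(S_2)\ge(1-1/e)\,\mathrm{OPT}$. Three corrections:

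\textbf{The ``main obstacle'' is not one.} The step $h(S_2)\ge(1-1/e)\,\mathrm{OPT}$ does not require carving a contiguous window out of $S^\ast$. The paper's Lemma~\ref{lemma:Optimal_value_from_subset} shows $g_\tau(S^\ast(\V,k,\tau))\le h(S^\ast(\V\setminus\V',k-\tau,0))$ for \emph{any} $\V'$ with $|\V'|\le\tau$: its proof removes from $S^\ast$ the set $\bigl(\V(S^\ast)\cap\V'\bigr)\cup\Z_{\tau-\tau'}(S^\ast-\V')$, an \emph{arbitrary} $\tau$-subset. This is legitimate because $g_\tau(S^\ast)$ is the value after the worst arbitrary removal, so any $\tau$ elements you name leave value $\ge g_\tau(S^\ast)$. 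Taking $\V'=\V(S_1)$ and applying the SSG guarantee to $S_2$ gives the bound directly, with no extra slack. The gap between the $\tau=1$ and general-$\tau$ constants comes instead from comparing the surviving prefix $S_1^{\tau_2}$ to the removed block $Z_2$ via the greedy guarantee (a $(1-1/e)$ loss), whereas for $\tau=1$ one compares $h((v_1))\ge h((z))$ with no loss.

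\textbf{Your pure-$S_2$ case is glossed over.} When the removal lies entirely inside $S_2$, the deleted block $Z_2$ need not be a prefix, so your inequality $h(S_2-\V')\ge h(S_2)-h((u_1,\dots,u_r))$ does not apply as written. The paper writes $S_2=S_2^1\oplus Z_2\oplus S_2^2$ and uses sequence submodularity twice to get $h(S_2)-h(S_2-\Z^2_\tau(S))\le h(Z_2|S_2^1)\le h(Z_2)$, which works wherever $Z_2$ sits and feeds both halves of the $\max$. Your ``binding case is the prefix'' claim via non-increasing marginals is neither proved nor needed once you argue this way.

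\textbf{The refined $k$-dependent bound needs a different idea.} Your sketch (``surviving suffix of $S_2$ has $\ge k-2\tau$ elements, replace $1-1/e$ by $1-e^{-(k-2\tau)/(k-\tau)}$, re-optimize'') does not lead to the stated expression. The paper's route is a value-concentration lemma (Lemma~\ref{lemma:value_concentration}): if the first $j$ picks of an SSG run of length $m$ carry a $c$-fraction of the run's value, then the run achieves $\frac{e^{(m-j)/m}-1}{e^{(m-j)/m}-c}$ of its optimum rather than only $1-1/e$. In Case~II-b the paper shows $h(S_2^{\tau_2})\ge(1-1/e)\,h(Z_2)\ge(1-1/e)\eta\,h(S_2)$, feeds $c=(1-1/e)\eta$ into Lemma~\ref{lemma:value_concentration} to sharpen the lower bound on $h(S_2)$ itself, and then re-optimizes the two-term $\max$ in $\eta$. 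That self-improvement---a damaging removal forces value concentration in the prefix of $S_2$, which in turn forces $h(S_2)$ closer to OPT---is the missing ingredient in your second bound.
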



\begin{remark}
The two terms of the approximation ratio in Theorem~\ref{theorem:robustGreedy_tau_contigous_case_a} have different advantages. While the first term remains constant (i.e., $\frac{(e - 1)^2}{e(2e-1)} \approx 0.245$), the second term is a monotonically increasing (resp., decreasing) function of $k$ (resp., $\tau$). 
The first term is larger when $k < \frac{2 - \ln(\frac{e^2 + e - 1}{2e - 1})}{1 - \ln(\frac{e^2 + e - 1}{2e - 1})} \tau$; the second term becomes larger when $k \ge \frac{2 - \ln(\frac{e^2 + e - 1}{2e - 1})}{1 - \ln(\frac{e^2 + e - 1}{2e - 1})} \tau$. We provide the approximation ratio under different values of $\tau$ and $k$ in Table~\ref{tab:theorem_2_values} in the Appendix of the supplementary document.
\end{remark}

\subsection{Robustness Against the Removal of Arbitrary Elements} \label{sec:general}

\begin{figure}[t]
\begin{minipage}{0.50\textwidth}
\begin{algorithm}[H]
    \centering
     \caption{Robust greedy algorithm against the removal of contiguous elements}
        \label{alg:robustGreedy_tau_contiguous}
        \begin{algorithmic}[1]
            \State {Input: $\V, k, \tau$; Output: $S$}
            \State {Initialization: $S = S_1 = S_2 = ()$}
		    \item[] $//$\textbf{Step 1:}
            \While {$|S_1| < \tau$}
            \State {$S_1 = S_1 \oplus \argmax_{v \in \V \setminus \V(S_1)} h((v)| S_1)$}
            \EndWhile
            \item[] $//$\textbf{Step 2:}
            \While {$|S_2| < k-
            \tau$}
            \State {$S_2 = \newline S_2 \oplus \argmax_{v \in \V \setminus (\V(S_1) \cup \V(S_2))} h((v)| S_2)$}
            \EndWhile
            \State{$S = S_1 \oplus S_2$}
    \end{algorithmic}
\end{algorithm}
\end{minipage}
\hfill
\begin{minipage}{0.48\textwidth}
\begin{algorithm}[H]
    \centering
     \caption{Robust greedy algorithm against the removal of arbitrary elements}
        \label{alg:robustGreedy_tau_general}
    \begin{algorithmic}[1]
        \State {Input: $\V, k, \tau$; Output: $S$}
            \State {Initialization: $S = S_1 = S_2 = ()$}
            \item[] $//$\textbf{Step 1:}
            \While {$|S_1| < \tau$}
            \State {$S_1 = S_1 \oplus \argmax_{v \in \V \setminus \V(S_1)} h((v))$}
            \EndWhile
           \item[] $//$\textbf{Step 2:}
            \While {$|S_2| < k-\tau$}
            \State {$S_2 = \newline S_2 \oplus \argmax_{v \in \V \setminus (\V(S_1) \cup \V(S_2))} h((v)| S_2)$}
            \EndWhile
            \State{$S = S_1 \oplus S_2$}
    \end{algorithmic}
\end{algorithm}
\end{minipage}
\end{figure}
In this subsection, we wish to design an algorithm that is robust against the removal of an arbitrary subset of $\tau$ selected elements, which are not necessarily contiguous.    
One weakness of Algorithm \ref{alg:robustGreedy_tau_contiguous} is that the value of the selected sequence could be concentrated in the first few elements of subsequences $S_1$ and $S_2$. If we allow the removal of an arbitrary subset of $\tau$ selected elements, the removal of the first few elements of subsequences $S_1$ and $S_2$ could leave the remaining sequence with little or no value. By considering a special case in Section~\ref{sec:contiguous} where the removed elements are restricted to be contiguous, we have managed to prevent such worst case from happening. However, the problem becomes more challenging when we consider a more general case without such a restriction. In the following, we propose an algorithm that is robust against the removal of an arbitrary subset of $\tau$ selected elements, but it requires a stronger assumption and achieves an approximation ratio that depends on the value of $\tau$. This new algorithm is presented in Algorithm~\ref{alg:robustGreedy_tau_general}.

 Similar to Algorithm~\ref{alg:robustGreedy_tau_contiguous}, Algorithm~\ref{alg:robustGreedy_tau_general} works in two steps. However, there is a subtle yet critical difference in Step 1, which is the key to ensuring robustness in the general case. Specifically, in Step 1 of Algorithm~\ref{alg:robustGreedy_tau_general}, we select a sequence $S_1$ of $\tau$ elements from $\V$ by iteratively choosing an element $v$ in a greedy manner, based on its absolute value $h((v))$ instead of its marginal value $h((v)|S_1)$ as in Algorithm~\ref{alg:robustGreedy_tau_contiguous}. We then select a sequence $S_2$ in Step 2, which is the same as that of Algorithm~\ref{alg:robustGreedy_tau_contiguous}. The final output is $S = S_1 \oplus S_2$. Algorithm~\ref{alg:robustGreedy_tau_general} also has a complexity of $O(k V)$.


Before we state the approximation results of Algorithm~\ref{alg:robustGreedy_tau_general}, we introduce a generalized definition of sequence submodularity.
Function $h$ is said to be general-sequence-submodular if for all $S_3 \in \HH(\V)$, we have
    \begin{equation}
        \label{eq:general_sequence_submodular}
        h(S_3|S_1) \geq h(S_3|S_2), ~\forall S_1, S_2 \in \HH(\V)~ \text{such that $S_1$ is a subsequence of $S_2$}.
    \end{equation}
Note that $S_1 \preceq S_2$ implies that $S_1$ is a subsequence of $S_2$, but not vice versa. Therefore, the general sequence submodularity defined above generalizes the sequence submodularity defined in Eq.~\eqref{eq:sequence_submodular} as a special case with  $S_1 \preceq S_2$.
Next, we state Assumption~\ref{assump:forward_backward_general} and Theorem~\ref{theorem:robustGreedy_general_tau_case_a}.

\begin{assumption}
\label{assump:forward_backward_general}
Function $h$ is forward-monotone, backward-monotone, and general-sequence-submodular.
\end{assumption}

\begin{theorem}
\label{theorem:robustGreedy_general_tau_case_a}
Consider $1 \le \tau \le k$. Under Assumption~\ref{assump:forward_backward_general}, Algorithm \ref{alg:robustGreedy_tau_general} achieves an approximation ratio of $\frac{1- 1/e}{1 + \tau}$.
\end{theorem}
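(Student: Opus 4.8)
The plan is to bound the objective $h(S - \V')$ from below for the output sequence $S = S_1 \oplus S_2$ of Algorithm~\ref{alg:robustGreedy_tau_general}, where $\V' \subseteq \V(S)$ is the worst-case removed set with $|\V'| \le \tau$. Write $\V' = \V'_1 \cup \V'_2$ where $\V'_1 = \V' \cap \V(S_1)$ and $\V'_2 = \V' \cap \V(S_2)$. The first key observation is that, because $S_1$ is selected greedily by \emph{absolute} value $h((v))$, the $\tau$ elements of $S_1$ are the $\tau$ elements of largest singleton value among all of $\V$ (argued inductively, element by element). In particular, for \emph{any} element $w \in \V(S) \setminus \V(S_1) = \V(S_2)$ and any $v \in \V(S_1) \setminus \V'_1$, we have $h((v)) \ge h((w))$; since $|\V(S_1) \setminus \V'_1| \ge \tau - |\V'_1| \ge \tau - |\V'|$ and $|\V'_2| \le |\V'| \le \tau$, there are enough ``surviving'' large elements in $S_1$ to dominate the removed elements of $S_2$. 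Using backward-monotonicity to isolate a single surviving element $v^\star \in \V(S_1)\setminus\V'_1$ at the front of $S - \V'$, we get $h(S - \V') \ge h((v^\star)) \ge \max_{w \in \V'_2} h((w)) \ge \frac{1}{|\V'_2|}\sum_{w\in\V'_2} h((w))$. This is the mechanism by which surviving high-value singletons ``pay for'' the damage done inside $S_2$.

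Next I would handle the $S_2$ part. Let $O$ be an optimal solution to the non-robust Problem~\eqref{eq:sequence_selection} of size $k$, so $h(O) \ge \text{OPT}_R$, the optimal value of the robust problem~\eqref{eq:Robust_sequence_selection} (since the robust optimum is at most the non-robust optimum). Step~2 of Algorithm~\ref{alg:robustGreedy_tau_general} runs the standard sequence-submodular greedy on $\V \setminus \V(S_1)$ for $k - \tau$ steps, so by the usual argument (using forward-monotonicity, backward-monotonicity and general-sequence-submodularity, exactly as in the SSG analysis of \cite{streeter2009online}, applied to the ground set $\V \setminus \V(S_1)$) one obtains $h(S_2) \ge (1 - 1/e)\,h(O')$ for an appropriate comparison sequence $O'$; the point is that removing $\tau$ elements from $\V$ reduces the attainable optimum only mildly, so one still gets a bound of the form $h(S_2) \gtrsim (1-1/e)\,\text{OPT}_R$ up to controllable loss, and then $h(S - \V') \ge h(S_2 - \V'_2)$ by backward-monotonicity applied to $S_1 - \V'_1$ in front. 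Now comes the crucial telescoping: for any sequence $T$ and any subset $\V'_2$ of its elements, general-sequence-submodularity gives $h(T) - h(T - \V'_2) \le \sum_{w \in \V'_2} h((w) \mid T_{<w}) \le \sum_{w \in \V'_2} h((w))$, where the last step uses general-sequence-submodularity against the empty prefix (this is exactly why Assumption~\ref{assump:forward_backward_general} needs the \emph{general} version — the prefix $T_{<w}$ need not satisfy $() \preceq T_{<w}$ in the $\oplus$ sense after deletions, only the subsequence relation). Hence $h(S_2 - \V'_2) \ge h(S_2) - \sum_{w\in\V'_2} h((w))$.

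Finally I would combine the two bounds. From the $S_1$ side, $h(S - \V') \ge \frac{1}{|\V'_2|}\sum_{w \in \V'_2} h((w))$, i.e. $\sum_{w\in\V'_2} h((w)) \le |\V'_2|\cdot h(S-\V') \le \tau\, h(S - \V')$. From the $S_2$ side, $h(S - \V') \ge h(S_2) - \sum_{w\in\V'_2} h((w)) \ge (1-1/e)\,\text{OPT}_R - \tau\,h(S-\V')$. Rearranging, $(1 + \tau)\,h(S - \V') \ge (1 - 1/e)\,\text{OPT}_R$, which is the claimed ratio $\frac{1 - 1/e}{1 + \tau}$. I expect the main obstacle to be the second paragraph: carefully verifying that running greedy on the \emph{reduced} ground set $\V \setminus \V(S_1)$ (rather than all of $\V$) still yields a $(1-1/e)$-type guarantee against the robust optimum — one must argue that an optimal robust sequence restricted to $\V \setminus \V(S_1)$, or one obtained by swapping out the at most $\tau$ elements it shares with $S_1$, loses little value, and this swapping argument must be compatible with the sequence (order-dependent) structure and with general-sequence-submodularity rather than ordinary submodularity. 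The clean handling of deletions inside an ordered sequence via the subsequence relation is exactly where the stronger Assumption~\ref{assump:forward_backward_general} is used, and getting every prefix/subsequence bookkeeping right will be the delicate part.
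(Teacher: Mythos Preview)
Your proposal is correct and follows essentially the same approach as the paper: split the removed set across $S_1$ and $S_2$, use a surviving top-$\tau$ singleton in $S_1$ to dominate each removed element of $S_2$, and bound $h(S_2)-h(S_2-\V'_2)\le\sum_{w\in\V'_2}h((w))$ by the same telescoping (the paper's Eq.~\eqref{eq:z2_general_tau}), which is exactly where general-sequence-submodularity (subsequence rather than prefix comparison) is needed. Two small remarks: the inequality $h(S-\V')\ge h((v^\star))$ for the first surviving $v^\star$ uses \emph{forward} monotonicity, not backward; and the ``main obstacle'' you flag is handled cleanly in the paper by Lemma~\ref{lemma:Optimal_value_from_subset}, which shows $g_\tau(S^\ast(\V,k,\tau))\le h(S^\ast(\V\setminus\V(S_1),k-\tau,0))$ using only forward monotonicity (no order-sensitive swapping is needed---one simply observes that the robust optimum, after deleting its overlap with $\V(S_1)$ and then enough additional worst-case elements, is a feasible length-$(k-\tau)$ sequence over $\V\setminus\V(S_1)$). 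The only cosmetic difference is that the paper parameterizes by $\eta=(h(S_2)-h(S_2-\V'_2))/h(S_2)$ and bounds $\max\{\eta/\tau,\,1-\eta\}\ge 1/(\tau+1)$, whereas you substitute one bound into the other; both yield $\frac{1-1/e}{1+\tau}$.
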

\begin{remark}
While we only need the simplest form of the diminishing returns definition (element-sequence-submodularity) to establish provable approximation guarantees for the non-robust sequence submodular maximization, for its robust counterpart, we require stronger assumptions (sequence-submodularity and general-sequence-submodularity vs. element-sequence-submodularity) to show provable performance guarantees. In addition, consider a set function $r$ and ground set $\V$. While monotonicity of set function $r$ implies monotonicity of the same function with respect to the marginal value of adding a set to another set (i.e., $r(\V_2|\V_1) \triangleq r(\V_1 \cup \V_2) - r(\V_1)$ for any $\V_1, \V_2 \subseteq \V$), a similar property does not hold for backward monotonicity of sequence functions. This subtle difference results in a more involved analysis of showing similar results for sequence functions.
\end{remark}

\section{Robust Approximate Sequence Submodular Maximization} \label{sec:approx}

In this section, we introduce generalized versions of sequence submodularity and backward monotonicity, which are called approximate sequence submodularity and approximate backward monotonicity. Then, we show that Algorithms \ref{alg:robustGreedy_tau_contiguous} and \ref{alg:robustGreedy_tau_general} can also approximately solve Problem~\eqref{eq:Robust_sequence_selection} under weaker assumptions based on such generalized properties. 

We begin with some additional definitions.
Consider $\mu_1 \in (0, 1]$. Function $h$ is said to be $\mu_1$-element-sequence-submodular if for all $v \in \V$, we have
        \begin{equation}
        \label{eq:element_sequence_submodular_approximate}
        h((v)|S_1) \geq \mu_1 h((v)|S_2), \forall S_1, S_2 \in \HH(\V)~ \text{ such that }  S_1 \preceq S_2.
        \end{equation}
Also, consider $\mu_2 \in (0, 1]$. Function $h$ is said to be $\mu_2$-sequence-submodular if for all $S_3 \in \HH(\V)$, we have
\begin{equation}
        \label{eq:sequence_submodular_approximate}
         h(S_3|S_1) \geq \mu_2 h(S_3|S_2), ~\forall S_1, S_2 \in \HH(\V)~ \text{ such that }  S_1 \preceq S_2.
\end{equation}
Note that $\mu_1$ could be greater than $\mu_2$ for some function $h$. We distinguish $\mu_1$ and $\mu_2$ as some of our results depend on $\mu_1$ only.
Similarly, consider $\mu_3 \in (0, 1]$. Function $h$ is said to be $\mu_3$-general-sequence-submodular if for all $S_3 \in \HH(\V)$, we have
\begin{equation}
        \label{eq:general_sequence_submodular_approximate}
       h(S_3|S_1) \geq \mu_3 h(S_3|S_2), ~\forall S_1, S_2 \in \HH(\V)~ \text{ such that $S_1$ is a subsequence of $S_2$}.
\end{equation}
Consider $\alpha \in (0, 1]$. Function $h$ is said to be $\alpha$-backward-monotone if
        \begin{equation}
          \label{eq:backward_monotone_approximate}
           h(S_1 \oplus S_2) \geq \alpha h(S_2),~ \forall S_1, S_2 \in \HH(\V).
        \end{equation}

Next, we state several assumptions that will be needed for deriving the main results in this section.

\begin{assumption}
\label{assump:forward_backward_mu1_mu2}
Function $h$ is forward-monotone, backward-monotone,  $\mu_1$-element-sequence-submodular, and $\mu_2$-sequence-submodular.
\end{assumption}

\begin{assumption}
\label{assump:forward_alpha_mu1_mu2}
Function $h$ is forward-monotone, $\alpha$-backward-monotone,  $\mu_1$-element-sequence-submodular, and $\mu_2$-sequence-submodular.
\end{assumption}

\begin{assumption}
\label{assump:forward_alpha_mu1_mu3}
Function $h$ is forward-monotone, $\alpha$-backward-monotone,  $\mu_1$-element-sequence-submodular, and $\mu_3$-general-sequence-submodular.
\end{assumption}



We are now ready to state the generalized approximation results of Algorithm~\ref{alg:robustGreedy_tau_contiguous} under Assumptions \ref{assump:forward_backward_mu1_mu2} and \ref{assump:forward_alpha_mu1_mu2}, respectively.

\begin{theorem}
\label{theorem:robustGreedy_tau_1_case_b}
Consider $\tau =1$. Under Assumption~\ref{assump:forward_backward_mu1_mu2}, Algorithm~\ref{alg:robustGreedy_tau_contiguous} achieves an approximation ratio of $\frac{a(e^{b} - 1)}{e^{b} - a}$, where $a = \frac{\mu_1 \mu_2}{\mu_1 + 1}$ and $b = \mu_1 \cdot \frac{k -2}{k-1}$; under Assumption~ \ref{assump:forward_alpha_mu1_mu2}, Algorithm~\ref{alg:robustGreedy_tau_contiguous} achieves an approximation ratio of $\frac{\alpha^2 \mu_1 \mu_2 (e^{\mu_1} - 1)}{(\mu_1  + \alpha  )e^{\mu_1}}$.
\end{theorem}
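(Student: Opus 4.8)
The plan is to mirror the structure of the $\tau=1$ analysis behind Theorem~\ref{theorem:robustGreedy_tau_1_case_a}, now carrying the constants $\mu_1$, $\mu_2$ (and $\alpha$) through every inequality. Write the output of Algorithm~\ref{alg:robustGreedy_tau_contiguous} as $S = S_1 \oplus S_2$ with $S_1 = (v^\star)$, $v^\star = \argmax_{v \in \V} h((v))$, and $S_2 = (s_1, \dots, s_{k-1})$ the sequence produced by Step~2 over the ground set $\V \setminus \{v^\star\}$. For $\tau = 1$ a contiguous block of size at most $\tau$ is a single element, so the adversary's best response leaves one of $h(S)$ (remove nothing), $h(S - \{v^\star\}) = h(S_2)$ (remove $v^\star$), or $h(S - \{s_j\}) = h\bigl((v^\star) \oplus (S_2 - \{s_j\})\bigr)$ (remove $s_j$). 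Forward and (approximate) backward monotonicity make $h(S)$ no smaller than $h((v^\star))$ and than $h(S_2)$ (up to the factor $\alpha$), so the binding terms are $h(S_2)$ and $\min_j h(S - \{s_j\})$, and it suffices to lower bound each of them by the claimed ratio times $\mathrm{OPT}$, the optimal value of Problem~\eqref{eq:Robust_sequence_selection}.

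I would first treat the ``remove $s_j$'' term. Decomposing $S = P \oplus (s_j) \oplus Q$ with $P = (v^\star, s_1, \dots, s_{j-1})$ and $Q = (s_{j+1}, \dots, s_{k-1})$ gives $h(S) - h(S - \{s_j\}) = h((s_j) \mid P) + \bigl( h(Q \mid P \oplus (s_j)) - h(Q \mid P) \bigr)$. Applying $\mu_2$-sequence-submodularity to the \emph{sequence} $Q$ (this is exactly the step that fails under mere element-sequence-submodularity, as the Remark warns) controls the parenthesized difference, and $\mu_1$-element-sequence-submodularity with $() \preceq P$ gives $h((s_j) \mid P) \le \tfrac{1}{\mu_1} h((s_j)) \le \tfrac{1}{\mu_1} h((v^\star))$. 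Rearranging, and using $h(Q \mid P) \le h(S - \{s_j\})$, yields $h(S - \{s_j\}) \ge \mu_2\bigl( h(S) - \tfrac{1}{\mu_1} h((v^\star)) \bigr)$; together with the forward-monotonicity bound $h(S - \{s_j\}) \ge h((v^\star))$ and with $h(S) \ge h(S_2)$ (resp. $h(S) \ge \alpha h(S_2)$), this gives $h(S - \{s_j\}) \ge \max\{ h((v^\star)),\ \mu_2 h(S_2) - \tfrac{\mu_2}{\mu_1} h((v^\star)) \}$ (resp. with $\alpha h(S_2)$ in place of $h(S_2)$). Optimizing over the unknown $h((v^\star))$ — where $\mu_2 \le 1$ is used — shows this is at least $a\, h(S_2)$ with $a = \tfrac{\mu_1 \mu_2}{\mu_1 + 1}$ under Assumption~\ref{assump:forward_backward_mu1_mu2} (and the corresponding $\alpha$-degraded constant under Assumption~\ref{assump:forward_alpha_mu1_mu2}); hence the algorithm's robust value is at least $a\, h(S_2)$, with strictly more available when $h((v^\star))$ is large.

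Next I would bound $h(S_2)$ with an approximate version of the SSG analysis. Pick $O$, a subsequence of the optimal robust solution $S^\star$ of length at most $k-1$ with $v^\star \notin \V(O)$ and $h(O) \ge \mathrm{OPT}$ (delete $v^\star$, or any single element, from $S^\star$). For each prefix $S_2^{(i)} = (s_1, \dots, s_i)$, (approximate) backward monotonicity gives $\alpha\, h(O) \le h\bigl(S_2^{(i)} \oplus O\bigr) \le h(S_2^{(i)}) + \tfrac{1}{\mu_1} \sum_{\ell} h((o_\ell) \mid S_2^{(i)})$, and greediness of Step~2 bounds each summand by $h((s_{i+1}) \mid S_2^{(i)})$, producing the contraction $\Delta_{i+1} \le (1 - \tfrac{\mu_1}{k-1}) \Delta_i$ on $\Delta_i := \alpha h(O) - h(S_2^{(i)})$. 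Unrolling over the $k-1$ iterations yields the coarse, $k$-uniform bound $h(S_2) \ge \alpha(1 - e^{-\mu_1})\, \mathrm{OPT}$, used under Assumption~\ref{assump:forward_alpha_mu1_mu2}; separating out the first iteration (whose gain is $h((s_1))$) sharpens the exponent to $b = \mu_1 \tfrac{k-2}{k-1}$ and, under the full backward monotonicity of Assumption~\ref{assump:forward_backward_mu1_mu2}, couples the bound back to the robust value $\mathrm{rv}$, yielding an inequality of the form $\mathrm{rv} \ge a(1 - e^{-b})\, \mathrm{OPT} + a\, e^{-b}\, \mathrm{rv}$. Solving this self-referential inequality gives $\mathrm{rv} \ge \tfrac{a(e^{b} - 1)}{e^{b} - a}\, \mathrm{OPT}$, and substituting $h(S_2) \ge \alpha(1 - e^{-\mu_1})\,\mathrm{OPT}$ into $\mathrm{rv} \ge (\alpha\text{-degraded } a)\, h(S_2)$ gives $\mathrm{rv} \ge \tfrac{\alpha^2 \mu_1 \mu_2 (e^{\mu_1} - 1)}{(\mu_1 + \alpha) e^{\mu_1}}\, \mathrm{OPT}$, with the two factors of $\alpha$ coming from the backward-monotonicity step inside the SSG bound and from $h(S) \ge \alpha h(S_2)$.

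The step I expect to be the main obstacle is the sharpening that produces the exponent $b = \mu_1 \tfrac{k-2}{k-1}$ together with the self-referential coupling to $\mathrm{rv}$: one must carefully isolate the contribution of the first greedy element and the role of the excluded anchor $v^\star$, and keep the ``remove $v^\star$'' and ``remove $s_j$'' accounts mutually consistent, so that the chain of inequalities closes to exactly the stated constants rather than to a weaker expression. Once that is in place, propagating $\mu_1$ (greedy contraction and the bound on $h((s_j) \mid P)$), $\mu_2$ (the sequence-submodularity step on $Q$), and $\alpha$ (the two backward-monotonicity invocations) is routine bookkeeping.
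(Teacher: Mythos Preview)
Your skeleton is close to the paper's --- same case split, same idea of balancing two lower bounds --- but two places don't close.

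\textbf{The $\alpha$ in the denominator for Assumption~\ref{assump:forward_alpha_mu1_mu2}.} Your ``remove $s_j$'' bound $h(S-\{s_j\})\ge \mu_2\bigl(h(S)-\tfrac{1}{\mu_1}h((v^\star))\bigr)$ together with $h(S)\ge\alpha h(S_2)$ balances against $h(S-\{s_j\})\ge h((v^\star))$ to give the worst-case factor $\alpha\mu_1\mu_2/(\mu_1+\mu_2)$ (or $\alpha\mu_1\mu_2/(\mu_1+1)$ after your $\mu_2\le 1$ weakening), not $\alpha\mu_1\mu_2/(\mu_1+\alpha)$. The paper instead parametrizes by $\eta=\bigl(h(S_2)-h(S_2-\{z\})\bigr)/h(S_2)$, decomposes $S_2$ (not $S$) to get $h((z))\ge\mu_1(\eta+\mu_2-1)h(S_2)$, and balances this against the \emph{direct} $\alpha$-backward-monotonicity bound $h(S-\{z\})\ge\alpha\,h(S_2-\{z\})=\alpha(1-\eta)h(S_2)$. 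That second branch is where the $\alpha$ in $(\mu_1+\alpha)$ comes from; your route never invokes it, so you cannot reproduce the stated constant.

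\textbf{The self-referential sharpening for Assumption~\ref{assump:forward_backward_mu1_mu2}.} The inequality ``$\mathrm{rv}\ge a(1-e^{-b})\,\mathrm{OPT}+a\,e^{-b}\,\mathrm{rv}$'' would need $h((s_1))\ge \mathrm{rv}$, but forward monotonicity only gives $h((s_1))\le h((v^\star))\le h(S-\{s_j\})$, i.e.\ $h((s_1))\le \mathrm{rv}$ whenever an $s_j$-removal is binding (and $h((s_1))\le h(S_2)=\mathrm{rv}$ otherwise) --- the wrong direction. The paper's mechanism is different: from the same $\eta$-parametrized bound $h((z))\ge\mu_1(\eta+\mu_2-1)h(S_2)$ one also gets $h((s_1))\ge h((z))\ge c\,h(S_2)$ with $c=\mu_1(\eta+\mu_2-1)$, and then a \emph{value-concentration lemma} (Lemma~\ref{lemma:approximate_value_concentration}: if the first element of a greedy sequence carries fraction $\ge c$ of its value, the greedy ratio improves to $(e^{b}-1)/(e^{b}-c)$) upgrades the bound on $h(S_2)$. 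Multiplying by $\max\{c,1-\eta\}$ and minimizing over $\eta$ then yields $a(e^{b}-1)/(e^{b}-a)$. The coupling is $h((s_1))\to h(S_2)$ via concentration, not $h((s_1))\to\mathrm{rv}$.
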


\begin{theorem}
\label{theorem:robustGreedy_tau_contigous_case_b}
Consider $1 \le \tau \le k$. Under Assumption~\ref{assump:forward_backward_mu1_mu2}, Algorithm~\ref{alg:robustGreedy_tau_contiguous} achieves an approximation ratio of $\frac{ a \mu_2 (e^b - 1)}{(a+1)e^b - a \mu_2}$, where $a = \mu_1 \cdot (1-1/e^{\mu_1 })$ and $b = \mu_1 \cdot \frac{k-2\tau}{k-\tau}$; under Assumption~ \ref{assump:forward_alpha_mu1_mu2}, Algorithm~\ref{alg:robustGreedy_tau_contiguous} achieves an approximation ratio of $\frac{\alpha^2 \mu_1 \mu_2 (e^{\mu_1} - 1)^2}{\mu_1 e^{\mu_1}(e^{\mu_1} - 1) + e^{2\mu_1}}$.
\end{theorem}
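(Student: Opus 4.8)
The plan is to replay the argument behind Theorem~\ref{theorem:robustGreedy_tau_contigous_case_a} (with the $\tau=1$ estimates of Theorem~\ref{theorem:robustGreedy_tau_1_case_b} as a guide), re-deriving each inequality while substituting $\mu_1$-element-sequence-submodularity~\eqref{eq:element_sequence_submodular_approximate} for element-sequence-submodularity, $\mu_2$-sequence-submodularity~\eqref{eq:sequence_submodular_approximate} for sequence-submodularity, and --- for the second claim --- $\alpha$-backward monotonicity~\eqref{eq:backward_monotone_approximate} for backward monotonicity, then tracking the three constants to the end. It suffices to lower bound $h(S-\V')$ for the worst admissible $\V'$ against $\mathrm{OPT}:=\max_{S\in\HH(\V),\,|S|\le k}h(S)$, since taking $\V'=\emptyset$ shows the optimum of Problem~\eqref{eq:Robust_sequence_selection} is at most $\mathrm{OPT}$.

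\textbf{Step 1 (approximate greedy lemma).} I would first prove a $(\mu_1,\alpha)$-version of the SSG guarantee that drives Theorem~\ref{theorem:robustGreedy_tau_contigous_case_a}: if $T_0=(), T_1, \dots, T_\ell$ are the partial sequences produced by appending the element of maximal marginal value, then for any $O\in\HH(\V)$ supported on the ground set available to that greedy run, telescoping $h(O\mid T_i)$ into single-element marginals, applying~\eqref{eq:element_sequence_submodular_approximate} to each term, and using maximality of the $(i{+}1)$-st pick gives $h(O\mid T_i)\le\tfrac{|O|}{\mu_1}\bigl(h(T_{i+1})-h(T_i)\bigr)$, while ($\alpha$-)backward monotonicity gives $h(O\mid T_i)\ge\alpha\,h(O)-h(T_i)$; the induced geometric recursion yields $h(T_\ell)\ge\alpha\bigl(1-e^{-\mu_1\ell/|O|}\bigr)h(O)$ (with $\alpha=1$ under Assumption~\ref{assump:forward_backward_mu1_mu2}). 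This is the only place $\mu_1$ enters the exponents: applied with $\ell=\tau$ to $S_1$, with $\ell=k-\tau$ to $S_2$ over $\V\setminus\V(S_1)$, and to the prefixes of $S_2$ (the length-$\approx(k-2\tau)$ chunk of $S_2$ that can survive a contiguous removal, benchmarked against a length-$(k-\tau)$ reference extracted from $\mathrm{OPT}$), it produces the exponent $b=\mu_1\frac{k-2\tau}{k-\tau}$ and the factor $1-e^{-\mu_1}$ inside $a=\mu_1(1-e^{-\mu_1})$; the outer $\mu_1$ in $a$ is one further use of~\eqref{eq:element_sequence_submodular_approximate}.

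\textbf{Step 2 (case analysis and optimisation).} Since $|S_1|=\tau$ equals the size of the removed block, $S-\V'$ has one of three shapes: (a) all of $S_1$ is removed, so $h(S-\V')=h(S_2)$; (b) the block lies inside $S_2$, so $S-\V'=S_1\oplus S_2'$ with $S_2'$ a subsequence of $S_2$ on $\ge k-2\tau$ elements, whence $h(S-\V')\ge h(S_1)$ by forward monotonicity~\eqref{eq:forward_monotone} (and, in the sub-case where $S_2'$ is a prefix of $S_2$, also $h(S-\V')\ge h(S_2')$ by ($\alpha$-)backward monotonicity, which Step~1 turns into an $e^b$-type estimate --- this is where the $b$-dependence comes from); (c) the block straddles the $S_1/S_2$ junction, $S-\V'=S_1[1{:}j]\oplus S_2[j{+}1{:}k-\tau]$ for some $1\le j\le\tau-1$, and here I bound $h(S-\V')$ below two ways --- by $h(S_1[1{:}j])\ge(1-e^{-\mu_1})h(S_2[1{:}j])$ (forward monotonicity, then Step~1 on the greedy prefix of $S_1$ over the larger ground set versus the length-$j$ sequence $S_2[1{:}j]$) and by $\alpha\,h(S_2[j{+}1{:}k-\tau])\ge\alpha\mu_2\bigl(h(S_2)-h(S_2[1{:}j])\bigr)$ (($\alpha$-)backward monotonicity, then $\mu_2$-sequence-submodularity~\eqref{eq:sequence_submodular_approximate} with $()\preceq S_2[1{:}j]$, $S_3=S_2[j{+}1{:}k-\tau]$) --- and a suitably weighted average of the two cancels the unknown $h(S_2[1{:}j])$, giving $h(S-\V')\ge\frac{\alpha\mu_2(1-e^{-\mu_1})}{\alpha\mu_2+1-e^{-\mu_1}}h(S_2)$ uniformly in $j$. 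Taking the minimum over the three shapes, inserting the Step-1 bounds on $h(S_1)$ and $h(S_2)$ against the benchmark drawn from $\mathrm{OPT}$ (and separately handling the degenerate case where $\mathrm{OPT}$ is essentially carried by a single element, in which $h(S_1)\ge h(S_1[1{:}1])=\max_v h((v))$ already suffices), and maximising the guaranteed fraction over the one remaining free parameter, collapses to $\frac{a\mu_2(e^b-1)}{(a+1)e^b-a\mu_2}$ with $a=\mu_1(1-e^{-\mu_1})$, $b=\mu_1\frac{k-2\tau}{k-\tau}$ under Assumption~\ref{assump:forward_backward_mu1_mu2}; setting $\mu_1=\mu_2=1$ recovers the second term of Theorem~\ref{theorem:robustGreedy_tau_contigous_case_a}. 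Under Assumption~\ref{assump:forward_alpha_mu1_mu2} the same chain runs with $\alpha$-backward monotonicity throughout: it is invoked at two multiplicatively-combined places (once in Step~1 for $S_2$, once in the suffix bound of case~(c)), producing the $\alpha^2$, while the $e^b$ sharpening of case~(b) --- which relied on \emph{exact} backward monotonicity applied to a prefix of $S_2$ --- is no longer available, so the bound degrades to the constant $\frac{\alpha^2\mu_1\mu_2(e^{\mu_1}-1)^2}{\mu_1 e^{\mu_1}(e^{\mu_1}-1)+e^{2\mu_1}}$ (which equals $\frac{(e-1)^2}{e(2e-1)}$, the constant term of Theorem~\ref{theorem:robustGreedy_tau_contigous_case_a}, at $\mu_1=\mu_2=\alpha=1$).

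\textbf{Main obstacle.} The recursion is routine; the real work is the bookkeeping of three approximation parameters along a long chain, and in particular respecting the fact --- noted in the Remark after Theorem~\ref{theorem:robustGreedy_general_tau_case_a} --- that $\alpha$-backward monotonicity does \emph{not} descend to marginal values: one must apply~\eqref{eq:backward_monotone_approximate} only to whole-sequence quantities $h(\cdot)$ and never to expressions $h(\cdot\mid\cdot)$, which blocks several tempting shortcuts and is precisely why an extra factor of $\alpha$ (and the loss of the $b$-dependence) appears under Assumption~\ref{assump:forward_alpha_mu1_mu2}. The second delicate point is the balancing in case~(c): getting the \emph{tight} $a$ (rather than a crude $O(1/\tau)$ bound on the surviving prefix of $S_1$) requires comparing the two greedy prefixes $S_1[1{:}j]$ and $S_2[1{:}j]$ over the nested ground sets $\V$ and $\V\setminus\V(S_1)$, then choosing the weights in the average so that $h(S_2[1{:}j])$ cancels exactly.
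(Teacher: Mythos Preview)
Your high-level plan --- re-derive the Theorem~\ref{theorem:robustGreedy_tau_contigous_case_a} argument while tracking the approximate constants --- is exactly what the paper does, and your Step~1 (the $(\mu_1,\alpha)$-greedy recursion) matches Lemma~\ref{lemma:approximate_classic_greedy}. But your Step~2 case analysis has a genuine gap, and your account of where $b$ comes from is wrong.

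The paper does \emph{not} split on block position. In the nontrivial regime $\Z^2_\tau(S)\neq\emptyset$ with removal hurting $S_2$, it introduces a single parameter $\eta=\bigl(h(S_2)-h(S_2-\Z^2_\tau(S))\bigr)/h(S_2)$ and proves, by telescoping $S_2=S_2^1\oplus Z_2\oplus S_2^2$ and invoking both $\mu_2$- and $\mu_1$-submodularity, that $h(Z_2)\ge\mu_1(\eta+\mu_2-1)h(S_2)$ \emph{regardless of where $Z_2$ sits inside $S_2$}. This drives the ``Part~(i)'' branch of the max, and it is the ingredient your proposal lacks. Your case~(c) happens to work because there $Z_2=S_2[1{:}j]$ is a prefix, so the surviving piece of $S_2$ is the suffix and your $\mu_2$-inequality applies directly. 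But your case~(b), with the block \emph{interior} to $S_2$, is not handled: the bound $h(S-\V')\ge h(S_1)$ is correct but useless alone, since the Step-1 bound on $h(S_1)$ against a length-$k$ reference gives only $\alpha(1-e^{-\mu_1\tau/k})\,\mathrm{OPT}$, which vanishes as $\tau/k\to0$. To salvage~(b) you must compare $h(S_1)$ to $h(Z_2)$ (both of length $\le\tau$, so the greedy lemma applies with exponent $\mu_1$), and then relate $h(Z_2)$ to $h(S_2)$ via $\eta$ --- which is exactly the telescoping estimate above. Your weighted-average cancellation cannot run until this link is in place.

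Second, your source of the exponent $b$ is not what the paper uses. You claim $e^b$ arises from applying the greedy lemma to a length-$(k-2\tau)$ prefix of $S_2$ in the sub-case where $S_2'$ is a prefix. But the adversary may remove a \emph{middle} block of $S_2$, avoiding that sub-case entirely, so this cannot yield a worst-case guarantee. The paper obtains $e^b$ through a separate ``value-concentration'' lemma (Lemma~\ref{lemma:approximate_value_concentration}, which needs exact backward monotonicity --- hence available only under Assumption~\ref{assump:forward_backward_mu1_mu2}): if the first $\tau_2$ elements of $S_2$ already capture a fraction $c$ of $h(S_2)$, then $h(S_2)$ itself beats the basic $(1-e^{-\mu_1})$ guarantee by a factor $\tfrac{e^{b'}-1}{e^{b'}-c}$ with $b'=\mu_1\tfrac{k-\tau-\tau_2}{k-\tau}$. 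The value $c=a(\eta+\mu_2-1)$ is supplied by the telescoping step (via $h(S_2^{\tau_2})\ge(1-e^{-\mu_1})h(Z_2)$), and optimising the resulting two-branch max over $\eta$, then replacing $\tau_2$ by $\tau$, produces the stated ratio. Without this lemma there is no $b$-dependence in the interior-block case, which is why the constant bound under Assumption~\ref{assump:forward_alpha_mu1_mu2} loses it.
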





Finally, we state the approximation result of Algorithms~\ref{alg:robustGreedy_tau_general} under Assumption~\ref{assump:forward_alpha_mu1_mu3}.


\begin{theorem}
\label{theorem:robustGreedy_tau_general_case_b}
Consider $1 \le \tau \le k$. Under Assumption~\ref{assump:forward_alpha_mu1_mu3}, Algorithm~\ref{alg:robustGreedy_tau_general} achieves an approximation ratio of $\frac{\alpha^2 \mu_1 \mu_3 (e^{\mu_1} - 1)}{(\mu_1  + \alpha   \tau)e^{\mu_1}}$.
\end{theorem}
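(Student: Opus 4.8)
The plan is to fix an arbitrary removal set $\V'\subseteq\V(S)$ with $|\V'|\le\tau$, where $S=S_1\oplus S_2$ is the output of Algorithm~\ref{alg:robustGreedy_tau_general}, and to show $h(S-\V')\ge\frac{\alpha^2\mu_1\mu_3(e^{\mu_1}-1)}{(\mu_1+\alpha\tau)e^{\mu_1}}\,\Gamma$, where $\Gamma$ denotes the optimal value of Problem~\eqref{eq:Robust_sequence_selection}. Write $\V'_1=\V'\cap\V(S_1)$, $\V'_2=\V'\cap\V(S_2)$, and $j=|\V'_1|$, so that $|\V'_2|\le\tau-j$; also, since Step~1 greedily takes the $\tau$ elements of largest absolute value, $S_1=(s_1,\dots,s_\tau)$ with $h((s_1))\ge\cdots\ge h((s_\tau))\ge h((v))$ for every $v\notin\V(S_1)$, and I set $M:=h((s_\tau))$. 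The proof will produce two lower bounds on $h(S-\V')$, one lower bound on $h(S_2)$ in terms of $\Gamma$, and then combine them; this mirrors the argument for Theorem~\ref{theorem:robustGreedy_general_tau_case_a} with each exact property replaced by its approximate counterpart from Assumption~\ref{assump:forward_alpha_mu1_mu3}, while tracking how $\alpha,\mu_1,\mu_3$ propagate.

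First, if $j<\tau$ then some element of $S_1$ survives; letting $s_i$ be the first surviving one, $(s_i)$ is a prefix of $S-\V'$, so forward monotonicity gives $h(S-\V')\ge h((s_i))\ge M$ (when $j=\tau$ one has $\V'_2=\emptyset$ and $S-\V'=S_2$, handled by the $h(S_2)$ bound below). Second, $\alpha$-backward monotonicity gives $h(S-\V')\ge\alpha\,h(S_2-\V'_2)$. I bound $h(S_2-\V'_2)$ by writing $h(S_2)$ as the telescoping sum of greedy marginals and comparing term by term with $h(S_2-\V'_2)$: each surviving marginal in $S_2-\V'_2$ is taken over a \emph{subsequence} of the corresponding prefix of $S_2$ rather than over that prefix, so $\mu_3$-general-sequence-submodularity costs only a factor $\mu_3$, while each deleted marginal is at most $\mu_1^{-1}h((v))\le\mu_1^{-1}M$ by $\mu_1$-element-sequence-submodularity (because the deleted $v\notin\V(S_1)$). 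This yields $h(S_2-\V'_2)\ge\mu_3 h(S_2)-\mu_1^{-1}\tau M$, hence $h(S-\V')\ge\alpha\mu_3 h(S_2)-\alpha\mu_1^{-1}\tau M$.

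To bound $h(S_2)$, let $O$ be an optimal solution of Problem~\eqref{eq:Robust_sequence_selection} (WLOG $|O|>\tau$, else $\Gamma=0$ and the claim is trivial), and let $D^\dagger$ consist of the $\tau$ elements of $\V(O)$ with the largest absolute values. Robustness of $O$ gives $h(O-D^\dagger)\ge\Gamma$, and a counting argument shows $\V(S_1)\cap\V(O)\subseteq D^\dagger$ (any element of $\V(O)$ lying in the global top-$\tau$ set $\V(S_1)$ must be among the top $\tau$ of $O$), so $Q:=O-D^\dagger$ lies in $\HH(\V\setminus\V(S_1))$, has length $|O|-\tau\le k-\tau$, and satisfies $h(Q)\ge\Gamma$. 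Since $S_2$ is precisely SSG run on ground set $\V\setminus\V(S_1)$ for $k-\tau$ steps, the SSG analysis applied against the competitor $Q$ — whose length is at most the number of greedy steps, and using forward monotonicity, $\mu_1$-element-sequence-submodularity, and $\alpha$-backward monotonicity for the step $h(G_\ell\oplus Q)\ge\alpha h(Q)$ — gives $h(S_2)\ge\alpha(1-e^{-\mu_1})h(Q)\ge\alpha(1-e^{-\mu_1})\Gamma$. For the combination, when $j<\tau$ we have $h(S-\V')\ge\max\{M,\ \alpha\mu_3 h(S_2)-\alpha\mu_1^{-1}\tau M\}$; viewing the right-hand side as a function of $M\ge 0$, the first term increases and the second decreases, so the maximum is at least their common value at the crossover $M=\frac{\alpha\mu_1\mu_3 h(S_2)}{\mu_1+\alpha\tau}$, giving $h(S-\V')\ge\frac{\alpha\mu_1\mu_3 h(S_2)}{\mu_1+\alpha\tau}\ge\frac{\alpha^2\mu_1\mu_3(1-e^{-\mu_1})}{\mu_1+\alpha\tau}\,\Gamma$, which is the claimed ratio; when $j=\tau$, $h(S-\V')=h(S_2)\ge\alpha(1-e^{-\mu_1})\Gamma$, at least the same quantity since $\alpha\mu_1\mu_3\le\mu_1\le\mu_1+\alpha\tau$.

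I expect the main obstacle to be the term-by-term control of $h(S_2-\V'_2)$: the deleted elements break $S_2$ into a subsequence that is no longer a prefix-chain, which is exactly where general-sequence-submodularity (and nothing weaker) must be invoked, and one must be careful that the discarded greedy marginals are each bounded by $\mu_1^{-1}M$ and not by the possibly much larger values $h((s_1)),\dots,h((s_{\tau-1}))$. The second delicate point is verifying that $Q=O-D^\dagger$ automatically lands in $\HH(\V\setminus\V(S_1))$ \emph{and} already has length at most $k-\tau$, which is what keeps the exponent at $\mu_1$ rather than degrading it by a factor $(k-\tau)/k$; once these two facts are in place, the remainder is the one-variable balancing computation carried out above.
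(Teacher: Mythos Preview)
Your proof is correct and follows essentially the same strategy as the paper's: both bound $h(S_2)$ by $\alpha(1-e^{-\mu_1})\Gamma$ via the approximate SSG guarantee, use $\mu_3$-general-sequence-submodularity in a telescoping argument to control the damage to $S_2$ in terms of the absolute value of a surviving $S_1$-element, and then balance the two resulting lower bounds. The only cosmetic differences are that you parametrize by $M=h((s_\tau))$ rather than by the loss fraction $\eta=(h(S_2)-h(S_2-\Z^2_\tau(S)))/h(S_2)$, and that you exhibit the competitor $Q=O-D^\dagger$ explicitly rather than invoking Lemma~\ref{lemma:Optimal_value_from_subset}.
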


\section{Related Work}
Since the seminal work in \cite{nemhauser1978best}, submodular maximization has been extensively studied in the literature. Several efficient approximation algorithms have been developed for maximizing a submodular set function in various settings (e.g., \cite{nemhauser1978best, khuller1999budgeted, calinescu2011maximizing, chekuri2014submodular}). The concept of sequence (or string) submodularity for sequence functions is a generalization of submodularity, which has been introduced recently in several studies (e.g., \cite{zhang2016string, tschiatschek2017selecting, streeter2009online, zhang2012submodularity, alaei2010maximizing, jawaid2015submodularity, zhang2013near, mitrovic2018submodularity}). In \cite{streeter2009online}, it has been shown that a simple greedy algorithm can achieve an approximation ratio of $(1 - 1/e)$ for maximizing a forward-monotone, backward-monotone, and element-sequence-submodular function.

On the other hand, robust versions of submodular maximization has been considered in some recent studies (e.g., \cite{orlin2018robust, bogunovic2017robust, mitrovic2017streaming, tzoumas2017resilient}), where the focus is on selecting a set of elements that is robust against the removal of a subset of them. In \cite{orlin2018robust}, the authors propose the first algorithm with a constant approximation ratio for the problem of robust submodular maximization with cardinality constraints, where the selected set is of size $k$ and the robustness is against the removal of any $\tau$ elements of the selected set. The constant approximation ratio derived in \cite{orlin2018robust} is valid as long as the number of removed elements is small compared to the selected set (i.e., $\tau = o(\sqrt{k})$). An extension that guarantees the same constant approximation ratio but allows the removal of a larger number of elements (i.e., $\tau = o(k)$) is presented in \cite{bogunovic2017robust}. Another algorithm that allows the removal of an arbitrary number of elements under a mild assumption is presented in \cite{mitrovic2017streaming}. The work in \cite{tzoumas2017resilient} relaxes the restriction on $\tau$, but the achieved approximation ratio depends on the value of $\tau$. The work in \cite{iyer2019unified} considers the same problem under different types of constraints, such as matroid and knapsack constraints. The work in \cite{tzoumas2018resilient, tzoumas2019robust} extends the work in \cite{tzoumas2017resilient} to a multi-stage setting, where the decision at one stage takes into account the failures that happened in the previous stages. Other extensions that consider fairness and privacy issues are studied in \cite{mirzasoleiman2017deletion, kazemi2018scalable}. It is unclear whether all of these algorithms for robust set submodularity can be properly extended to our problem, as converting a set into a sequence could result in an arbitrarily bad performance. Even if so, it is more likely that establishing their approximation guarantees would require a more sophisticated analysis, which calls for more in-depth investigations. Note that the analysis of our simple greedy algorithms is already very sophisticated.

In \cite{krause2008robust}, a different notion of robustness is considered, which is referred to as maximizing the minimum of multiple submodular functions. This work proposes a bicriterion approximation algorithm for the studied problem with cardinality constraints. Moreover, the work in \cite{powers2016constrained, powers2017constrained, anari2019structured} extends that of \cite{krause2008robust} to accommodate a wide variety of constraints, including matroid and knapsack constraints. 
The work in \cite{bogunovic2018robust} develops an approximation algorithm for robust non-submodular maximization, using other characterizations such as the submodularity ratio and the inverse curvature. The work in \cite{mitrovic2019adaptive} introduces the idea of adaptive sequence submodular maximization, which aims to utilize the feedback obtained in previous iterations to improve the current decision. Note that while the work in \cite{tschiatschek2017selecting, mitrovic2019adaptive, mitrovic2018submodularity} assumes that the sequential relationship among elements is encoded as a directed acyclic graph, we consider a general setting without such structures. It would indeed be interesting to explore our algorithms when the sequential relationship is encoded in a specific graphical form.


\section{Conclusion}
In this paper, we investigated a new problem of robust sequence submodular maximization. We discussed the unique challenges introduced by considering sequence functions and ensuring robustness guarantees. 
To address these novel challenges, we proposed two robust greedy algorithms and proved that they can achieve certain approximation ratios for the considered problem, assuming forward-monotone, backward-monotone, and sequence-submodular functions. We further introduced approximate versions of sequence submodularity and backward monotonicity and showed that the proposed algorithms can also provide performance guarantees under a larger class of weaker assumptions based on such generalized properties. Our future work includes developing more efficient algorithms with better approximation ratios in the general settings and investigating the possibility of obtaining similar results under the assumption of generalized/approximate forward monotonicity.



\section{Broader Impact}
This work contributes to the state-of-the-art theory of submodular optimization. The proposed algorithms and the presented approximation results can be applied to real-world applications where the stated assumptions of sequence submodularity and monotonicity or their approximate versions are satisfied. Several real-world applications,  including machine learning based recommendation systems, ads allocation, and automation and control, involve the selection of elements in sequence.

\section*{Acknowledgement}
This work was supported in part by the NSF under Grants CNS-1651947, CNS-1824440, CNS-1828363, and CNS-1757533.

\bibliographystyle{IEEEtran}
\bibliography{references}

\begin{thebibliography}{10}
\providecommand{\url}[1]{#1}
\csname url@samestyle\endcsname
\providecommand{\newblock}{\relax}
\providecommand{\bibinfo}[2]{#2}
\providecommand{\BIBentrySTDinterwordspacing}{\spaceskip=0pt\relax}
\providecommand{\BIBentryALTinterwordstretchfactor}{4}
\providecommand{\BIBentryALTinterwordspacing}{\spaceskip=\fontdimen2\font plus
\BIBentryALTinterwordstretchfactor\fontdimen3\font minus
  \fontdimen4\font\relax}
\providecommand{\BIBforeignlanguage}[2]{{%
\expandafter\ifx\csname l@#1\endcsname\relax
\typeout{** WARNING: IEEEtran.bst: No hyphenation pattern has been}%
\typeout{** loaded for the language `#1'. Using the pattern for}%
\typeout{** the default language instead.}%
\else
\language=\csname l@#1\endcsname
\fi
#2}}
\providecommand{\BIBdecl}{\relax}
\BIBdecl

\bibitem{nemhauser1978best}
G.~L. Nemhauser and L.~A. Wolsey, ``Best algorithms for approximating the
  maximum of a submodular set function,'' \emph{Mathematics of operations
  research}, vol.~3, no.~3, pp. 177--188, 1978.

\bibitem{khuller1999budgeted}
S.~Khuller, A.~Moss, and J.~S. Naor, ``The budgeted maximum coverage problem,''
  \emph{Information Processing Letters}, vol.~70, no.~1, pp. 39--45, 1999.

\bibitem{calinescu2011maximizing}
G.~Calinescu, C.~Chekuri, M.~Pal, and J.~Vondr{\'a}k, ``Maximizing a monotone
  submodular function subject to a matroid constraint,'' \emph{SIAM Journal on
  Computing}, vol.~40, no.~6, pp. 1740--1766, 2011.

\bibitem{chekuri2014submodular}
C.~Chekuri, J.~Vondr{\'a}k, and R.~Zenklusen, ``Submodular function
  maximization via the multilinear relaxation and contention resolution
  schemes,'' \emph{SIAM Journal on Computing}, vol.~43, no.~6, pp. 1831--1879,
  2014.

\bibitem{zhang2016string}
Z.~Zhang, E.~K. Chong, A.~Pezeshki, and W.~Moran, ``String submodular functions
  with curvature constraints,'' \emph{IEEE Transactions on Automatic Control},
  vol.~61, no.~3, pp. 601--616, 2016.

\bibitem{tschiatschek2017selecting}
S.~Tschiatschek, A.~Singla, and A.~Krause, ``Selecting sequences of items via
  submodular maximization,'' in \emph{Thirty-First AAAI Conference on
  Artificial Intelligence}, 2017.

\bibitem{streeter2009online}
M.~Streeter and D.~Golovin, ``An online algorithm for maximizing submodular
  functions,'' in \emph{Advances in Neural Information Processing Systems},
  2009, pp. 1577--1584.

\bibitem{zhang2012submodularity}
Z.~Zhang, E.~K. Chong, A.~Pezeshki, W.~Moran, and S.~D. Howard, ``Submodularity
  and optimality of fusion rules in balanced binary relay trees,'' in
  \emph{IEEE Conference on Decision and Control}, 2012, pp. 3802--3807.

\bibitem{alaei2010maximizing}
S.~Alaei, A.~Makhdoumi, and A.~Malekian, ``Maximizing sequence-submodular
  functions and its application to online advertising,'' \emph{arXiv preprint
  arXiv:1009.4153}, 2010.

\bibitem{jawaid2015submodularity}
S.~T. Jawaid and S.~L. Smith, ``Submodularity and greedy algorithms in sensor
  scheduling for linear dynamical systems,'' \emph{Automatica}, vol.~61, pp.
  282--288, 2015.

\bibitem{zhang2013near}
Z.~Zhang, Z.~Wang, E.~K. Chong, A.~Pezeshki, and W.~Moran, ``Near optimality of
  greedy strategies for string submodular functions with forward and backward
  curvature constraints,'' in \emph{IEEE Conference on Decision and Control},
  2013, pp. 5156--5161.

\bibitem{mitrovic2019adaptive}
M.~Mitrovic, E.~Kazemi, M.~Feldman, A.~Krause, and A.~Karbasi, ``Adaptive
  sequence submodularity,'' in \emph{Advances in Neural Information Processing
  Systems}, 2019, pp. 5353--5364.

\bibitem{orlin2018robust}
J.~B. Orlin, A.~S. Schulz, and R.~Udwani, ``Robust monotone submodular function
  maximization,'' \emph{Mathematical Programming}, vol. 172, no. 1-2, pp.
  505--537, 2018.

\bibitem{bogunovic2017robust}
I.~Bogunovic, S.~Mitrovi{\'c}, J.~Scarlett, and V.~Cevher, ``Robust submodular
  maximization: A non-uniform partitioning approach,'' in \emph{International
  Conference on Machine Learning}, 2017, pp. 508--516.

\bibitem{mitrovic2017streaming}
S.~Mitrovic, I.~Bogunovic, A.~Norouzi-Fard, J.~M. Tarnawski, and V.~Cevher,
  ``Streaming robust submodular maximization: A partitioned thresholding
  approach,'' in \emph{Advances in Neural Information Processing Systems},
  2017, pp. 4557--4566.

\bibitem{tzoumas2017resilient}
V.~Tzoumas, K.~Gatsis, A.~Jadbabaie, and G.~J. Pappas, ``Resilient monotone
  submodular function maximization,'' in \emph{IEEE Conference on Decision and
  Control}, 2017, pp. 1362--1367.

\bibitem{iyer2019unified}
R.~Iyer, ``A unified framework of robust submodular optimization,'' \emph{arXiv
  preprint arXiv:1906.06393}, 2019.

\bibitem{tzoumas2018resilient}
V.~Tzoumas, A.~Jadbabaie, and G.~J. Pappas, ``Resilient monotone sequential
  maximization,'' in \emph{IEEE Conference on Decision and Control}, 2018, pp.
  7261--7268.

\bibitem{tzoumas2019robust}
------, ``Robust and adaptive sequential submodular optimization,'' \emph{arXiv
  preprint arXiv:1909.11783}, 2019.

\bibitem{mirzasoleiman2017deletion}
B.~Mirzasoleiman, A.~Karbasi, and A.~Krause, ``Deletion-robust submodular
  maximization: Data summarization with the right to be forgotten,'' in
  \emph{International Conference on Machine Learning}, 2017, pp. 2449--2458.

\bibitem{kazemi2018scalable}
E.~Kazemi, M.~Zadimoghaddam, and A.~Karbasi, ``Scalable deletion-robust
  submodular maximization: Data summarization with privacy and fairness
  constraints,'' in \emph{International Conference on Machine Learning}, 2018,
  pp. 2544--2553.

\bibitem{krause2008robust}
A.~Krause, H.~B. McMahan, C.~Guestrin, and A.~Gupta, ``Robust submodular
  observation selection,'' \emph{Journal of Machine Learning Research}, vol.~9,
  pp. 2761--2801, 2008.

\bibitem{powers2016constrained}
T.~Powers, J.~Bilmes, S.~Wisdom, D.~W. Krout, and L.~Atlas, ``Constrained
  robust submodular optimization,'' in \emph{NIPS OPT2016 workshop}, 2016.

\bibitem{powers2017constrained}
T.~Powers, D.~W. Krout, J.~Bilmes, and L.~Atlas, ``Constrained robust
  submodular sensor selection with application to multistatic sonar arrays,''
  \emph{IET Radar, Sonar \& Navigation}, vol.~11, no.~12, pp. 1776--1781, 2017.

\bibitem{anari2019structured}
N.~Anari, N.~Haghtalab, S.~Naor, S.~Pokutta, M.~Singh, and A.~Torrico,
  ``Structured robust submodular maximization: Offline and online algorithms,''
  in \emph{International Conference on Artificial Intelligence and Statistics},
  2019.

\bibitem{mitrovic2018submodularity}
M.~Mitrovic, M.~Feldman, A.~Krause, and A.~Karbasi, ``Submodularity on
  hypergraphs: From sets to sequences,'' \emph{arXiv preprint
  arXiv:1802.09110}, 2018.

\bibitem{bogunovic2018robust}
I.~Bogunovic, J.~Zhao, and V.~Cevher, ``Robust maximization of non-submodular
  objectives,'' in \emph{International Conference on Artificial Intelligence
  and Statistics}, 2018.

\bibitem{sallam2019placement}
G.~Sallam, Z.~Zheng, and B.~Ji, ``Placement and allocation of virtual network
  functions: Multi-dimensional case,'' \emph{arXiv preprint arXiv:1910.06299},
  2019.

\end{thebibliography}

\newpage

\section{Appendix}
\subsection{Element-sequence-submodular vs. Sequence-submodular: A Counterexample}
\label{counterexample:general_sequence_submodular_defintion}

In \cite{zhang2016string}, it is noted (without a proof) that an element-sequence-submodular function must also be sequence-submodular. However, we find this claim false and present a counterexample in Table~\ref{tab:sequence_vs_element_submodular} to show that an element-sequence-submodular function is not necessarily sequence-submodular. In the presented example, it is easy to verify that function $h$ is element-sequence-submodular, i.e., Eq.~\eqref{eq:element_sequence_submodular} is satisfied; however, it is not sequence-submodular, i.e., Eq.~\eqref{eq:sequence_submodular} is not satisfied. Specifically, we have $h((v_2, v_3)|()) < h((v_2, v_3) | (v_1))$ due to
$h((v_2, v_3)|()) = h((v_2, v_3)) - h(()) = 1.2 - 0 = 1.2$ and $h((v_2, v_3) | (v_1)) = h((v_1, v_2, v_3)) - h((v_1)) = 2.2 - 0.2 = 2$.

\begin{table}[h]
    \caption{A counterexample showing that an element-sequence-submodular function is not necessarily sequence-submodular, where $\V = \{v_1, v_2, v_3\}$ and the columns are input sequences, values of function $h$, and marginal values of $v_1$, $v_2$, and $v_3$ with respect to sequence $S$, respectively \vspace{3pt}}
    \label{tab:sequence_vs_element_submodular}
    \centering
    \begin{tabular}{|c|c|c|c|c|} \hline 
\backslashbox{Sequence $S$}{Function $h$} & $h(S)$ & $h((v_1)|S)$ & $h((v_2)|S)$& $h((v_3)|S)$  \\ \hline
$()$ & 0 & 0.2 & 1.2 & 1  \\ \hline
$(v_1)$ & 0.2 & 0 & 1 & 1   \\ \hline
$(v_2)$ & 1.2 & 0 & 0 & 0 \\ \hline
$(v_3)$ & 1 & 0.2 & 1 & 0 \\ \hline
$(v_1, v_2)$ & 1.2 & 0 & 0& 1 \\ \hline
$(v_2, v_1)$ & 1.2 & 0 & 0& 0 \\ \hline
$(v_1, v_3)$ & 1.2 & 0 & 1 & 0 \\ \hline
$(v_3, v_1)$ & 1.2 & 0 & 1 & 0 \\ \hline
$(v_2, v_3)$ & 1.2 & 0 & 0 & 0 \\ \hline
$(v_3, v_2)$ & 2 & 0.2 & 0 & 0 \\ \hline
$(v_1, v_2, v_3)$ & 2.2 & 0 & 0 & 0 \\ \hline
$(v_1, v_3, v_2)$ & 2.2 & 0 & 0 & 0 \\ \hline
$(v_2, v_1, v_3)$ & 1.2 & 0 & 0 & 0 \\ \hline
$(v_2, v_3, v_1)$ & 1.2 & 0 & 0 & 0 \\ \hline
$(v_3, v_1, v_2)$ & 2.2 & 0 & 0 & 0 \\ \hline
$(v_3, v_2, v_1)$ & 2.2 & 0 & 0 & 0 \\ \hline
    \end{tabular}
\end{table}

\subsection{Approximation Ratio of Theorem \ref{theorem:robustGreedy_tau_contigous_case_a} with Different $\tau$ and $k$}
\label{approximation_ratio_theorem_2}

\begin{table}[h]
    \caption{Approximation ratio of Theorem~\ref{theorem:robustGreedy_tau_contigous_case_a} with different values of $\tau$ and $k$ \vspace{3pt}}
    \label{tab:theorem_2_values}
    \centering
    \begin{tabular}{|c|c|c|c|c||c|c|c|c|c|c|} \hline 
\backslashbox{$\tau$}{$k$} & 50 & 52 & 54 & 56  & 58 & 60 & 62 & 64 & 66 & 68 \\ \hline
2 & 0.28& 0.281& 0.281& 0.281& 0.281& 0.281& 0.281& 0.282& 0.282& 0.282  \\ \hline
4 & 0.275& 0.275& 0.275& 0.276& 0.276& 0.277& 0.277& 0.277& 0.277& 0.278   \\ \hline
6 & 0.268& 0.269& 0.269& 0.27& 0.271& 0.271& 0.272& 0.272& 0.273& 0.273 \\ \hline
8 & 0.26& 0.261& 0.262& 0.263& 0.264& 0.265& 0.266& 0.267& 0.267& 0.268 \\ \hline
10 & 0.25& 0.252& 0.254& 0.255& 0.257& 0.258& 0.259& 0.261& 0.262& 0.262 \\ \hline
12 & 0.245& 0.245& 0.245& 0.246& 0.248& 0.25& 0.252& 0.253& 0.255& 0.256 \\ \hline
14 & 0.245& 0.245& 0.245& 0.245& 0.245& 0.245& 0.245& 0.245& 0.247& 0.249 \\ \hline
16 & 0.245& 0.245& 0.245& 0.245& 0.245& 0.245& 0.245& 0.245& 0.245& 0.245 \\ \hline
18 & 0.245& 0.245& 0.245& 0.245& 0.245& 0.245& 0.245& 0.245& 0.245& 0.245 \\ \hline
20 & 0.245& 0.245& 0.245& 0.245& 0.245& 0.245& 0.245& 0.245& 0.245& 0.245 \\ \hline
    \end{tabular}
\end{table}

\subsection{Preliminary Results: Sequence Submodular Maximization}
\label{subsec:preliminary_results}
In this section, we introduce some additional notations and present some important preliminary results that will be used in the proofs of Theorems \ref{theorem:robustGreedy_tau_1_case_a}, \ref{theorem:robustGreedy_tau_contigous_case_a}, and \ref{theorem:robustGreedy_general_tau_case_a}. 

We begin with the additional notations. We use sequence $S^\ast(\V, k, \tau)$ to denote an optimal solution to Problem~\eqref{eq:Robust_sequence_selection}. Note that Problem~\eqref{eq:sequence_selection} is a special case of Problem~\eqref{eq:Robust_sequence_selection} with $\tau = 0$. Therefore, sequence $S^\ast(\V, k, 0)$ denotes an optimal solution to Problem~\eqref{eq:sequence_selection}. For any given sequence $S \in \HH(\V)$, we use $\Z_{\tau}(S)$ to denote an optimal solution to the following minimization problem:
\begin{equation}
\label{eq:Z_S}
 \underset{\V^\prime  \subseteq \V(S),~ |\V^\prime| \leq \tau}{\min} \quad   h(S - \V^\prime).
\end{equation}
Let $g_{\tau}(S)$ denote the value of function $h$ with input $S$ that has elements in $\Z_{\tau}(S)$ removed, i.e., $g_{\tau}(S) \triangleq h(S - \Z_{\tau}(S))$. Therefore, the optimal value of Problem~\eqref{eq:Robust_sequence_selection} is $g_{\tau}(S^\ast(\V, k, \tau)) = h(S^\ast(\V, k, \tau) - \Z_{\tau}(S^\ast(\V, k, \tau)))$. Moreover, it is easy to see that we have
\begin{equation}
\label{eq:g_vs_h}
    g_{\tau}(S^\ast(\V, k, \tau)) \leq h(S^\ast(\V, k - \tau, 0)).
\end{equation}
This is because the left-hand side is the value of a feasible solution to Problem~\eqref{eq:sequence_selection} while the right-hand side is the value of an optimal solution to Problem~\eqref{eq:sequence_selection}.

\begin{algorithm}[t]
        \caption{Sequence Submodular Greedy (SSG)}
        \label{alg:classicGreedy}
        \begin{algorithmic}[1]
            \State {Input: elements $\V$, $k$; Output: $S$}
            \State {Initialization: $S = ()$}
            \While {$|S| < k$}
            \State {$S = S \oplus \argmax_{v \in \V \setminus \V(S)} h((v)| S)$}
            \EndWhile
        \end{algorithmic}
\end{algorithm}

First, we restate the approximation performance of the SSG algorithm (presented in Algorithm~\ref{alg:classicGreedy}) in \cite{streeter2009online}. Let sequence $S$ with $|S|=k$ be the one selected by the SSG algorithm, and let $S^i$ be the sequence consisting of the first $i$ elements of sequence $S$ with $1 \leq i \leq k$. The result in \cite{streeter2009online} is stated for the case of $i = k$ only, but it can be easily generalized for any $i$ with $1 \leq i \leq k$.

 \begin{lemma}[Theorem~3 of \cite{streeter2009online}]
        \label{lemma:classic_greedy}
       Consider $1 \le i \leq k$. Under Assumption~\ref{assump:forward_backward_sequence}, we have $h(S^i) \geq (1 - 1/e^{\frac{i}{k}}) h(S^\ast(\V, k, 0))$.  
\end{lemma}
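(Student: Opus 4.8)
\textbf{Proof proposal for Lemma~\ref{lemma:classic_greedy}.}

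The plan is to adapt the classical Nemhauser--Wolsey--Fisher greedy analysis to the sequence setting, exactly along the lines of the argument in \cite{streeter2009online}, but carried out for an arbitrary prefix length $i$ rather than only for $i=k$. Let $O \triangleq S^\ast(\V, k, 0)$ be an optimal sequence of length at most $k$, and let $S^j$ denote the prefix of the greedy sequence after $j$ steps. The heart of the matter is a per-step progress inequality: for each $j$ with $0 \le j \le i-1$,
\begin{equation}
\label{eq:greedy_progress}
h(S^{j+1}) - h(S^j) \;\ge\; \frac{1}{k}\bigl(h(O) - h(S^j)\bigr).
\end{equation}
Once \eqref{eq:greedy_progress} is in hand, a routine induction on $j$ gives $h(O) - h(S^j) \le (1-1/k)^j h(O)$, and using $(1-1/k)^j \le e^{-j/k}$ with $j = i$ yields $h(S^i) \ge (1 - 1/e^{i/k}) h(O)$, which is the claim.

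To prove \eqref{eq:greedy_progress}, first I would use forward monotonicity together with sequence-submodularity to bound $h(O)$ from above in terms of the marginal gains of the optimal elements over the current greedy prefix. Concretely, write $O = (o_1, \dots, o_\ell)$ with $\ell \le k$. Backward monotonicity gives $h(O) \le h(S^j \oplus O)$, since $O$ is a suffix of $S^j \oplus O$ (one must be slightly careful here: $S^j \oplus O$ drops any elements of $O$ already in $S^j$, so strictly one applies backward monotonicity to the sequence $S^j \oplus O$ whose suffix is $O$ minus those shared elements, and handles the shared elements by forward monotonicity --- this bookkeeping is the one genuinely sequence-specific subtlety). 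Then telescoping $h(S^j \oplus O) - h(S^j)$ over the elements of $O$ and applying element-sequence-submodularity (each $h((o_t)\mid S^j \oplus (o_1,\dots,o_{t-1})) \le h((o_t)\mid S^j)$) gives
\begin{equation}
h(O) - h(S^j) \;\le\; \sum_{t=1}^{\ell} h\bigl((o_t)\mid S^j\bigr) \;\le\; k \cdot \max_{v \in \V \setminus \V(S^j)} h\bigl((v)\mid S^j\bigr),
\end{equation}
where the last step uses $\ell \le k$ and the fact that each term is at most the maximal marginal (elements of $O$ already in $S^j$ contribute nonpositively after the telescoping is arranged, or can be absorbed). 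Since the greedy rule picks precisely the element achieving this maximum, $h(S^{j+1}) - h(S^j) = \max_{v} h((v)\mid S^j) \ge \frac{1}{k}(h(O) - h(S^j))$, which is \eqref{eq:greedy_progress}.

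The main obstacle I anticipate is the bookkeeping around shared elements between the greedy prefix $S^j$ and the optimal sequence $O$ under the concatenation operator $\oplus$, which by definition \eqref{eq:concatenation_unique_elements} silently deletes duplicates: one needs to argue that this deletion only helps (via forward monotonicity and the fact that marginals of already-present elements are zero under the appropriate submodularity bound) so that the telescoping sum still dominates $h(O) - h(S^j)$. The monotonicity and submodularity hypotheses in Assumption~\ref{assump:forward_backward_sequence} are exactly what is needed to push this through, and since element-sequence-submodularity is implied by sequence-submodularity, the weaker form suffices for this lemma --- consistent with the paper's remark that the non-robust guarantee needs only the simplest diminishing-returns property. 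The generalization from $i=k$ to general $i$ is immediate since \eqref{eq:greedy_progress} holds at every step regardless of the final horizon.
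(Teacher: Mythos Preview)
Your proposal is correct and matches the paper's approach. The paper does not give its own proof of Lemma~\ref{lemma:classic_greedy} (it is cited from \cite{streeter2009online}), but the argument is visible in the paper's proof of the more general Lemma~\ref{lemma:S1_given_S2}, which specializes to Lemma~\ref{lemma:classic_greedy} when $S_1 = ()$: there the per-step inequality~\eqref{eq:greedy_progress} is obtained via Lemma~\ref{lemma:node_marginal_value} (the pigeonhole/telescoping step packaged as a separate lemma) together with backward monotonicity, exactly as you do.

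One remark: your anticipated ``main obstacle'' about shared elements between $S^j$ and $O$ under the $\oplus$ operator is a non-issue. Backward monotonicity in this paper is \emph{defined} directly via $\oplus$ (Eq.~\eqref{eq:backward_monotone}): $h(S_1 \oplus S_2) \ge h(S_2)$ for all $S_1, S_2$. So $h(S^j \oplus O) \ge h(O)$ holds immediately by definition, with no bookkeeping for duplicates required. Likewise, in the telescoping sum any $o_t$ already in $\V(S^j)$ has $h((o_t)\mid S^j) = 0$ since $S^j \oplus (o_t) = S^j$, so those terms vanish automatically and the bound by $k \cdot \max_{v \in \V \setminus \V(S^j)} h((v)\mid S^j)$ goes through cleanly.
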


Next, we state in Lemma~\ref{lemma:value_concentration} that the approximation performance of the SSG algorithm can be better characterized if the value of the selected sequence is concentrated in its first few elements. We present the proof of Lemma~\ref{lemma:value_concentration} in Appendix~\ref{proof:value_concentration}.
\begin{lemma}
\label{lemma:value_concentration}
Consider $c \in (0, 1]$ and $1 \le k^\prime \le k$. Suppose that the sequence selected by the SSG algorithm is $S$ with $|S|=k$ and that there exists a sequence $S_1$ with $|S_1| = k - k^\prime$ such that $S_1 \preceq S$ and $h(S_1) \geq c \cdot h(S)$. Then, under Assumption~\ref{assump:forward_backward_sequence},  we have $h(S) \geq \frac{e^{ \frac{k^\prime}{k}} - 1}{e^{ \frac{k^\prime}{k}}-c}~ h(S^\ast(\V, k, 0))$.
\end{lemma}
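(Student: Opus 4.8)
\textbf{Proof plan for Lemma~\ref{lemma:value_concentration}.}

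The plan is to reuse the standard greedy analysis machinery from Lemma~\ref{lemma:classic_greedy}, but to replace the crude lower bound $h(S^i)\ge 0$ used at the start of the iterative argument by the stronger information $h(S_1)\ge c\cdot h(S)$. Write $S=S^k$ and let $S^i$ denote the prefix of $S$ of length $i$, so in particular $S_1=S^{k-k'}$ by the hypothesis $S_1\preceq S$ with $|S_1|=k-k'$. The core inequality behind the $(1-1/e^{i/k})$ bound is the per-step improvement estimate: for each $i$, the greedy choice of the $(i{+}1)$-st element guarantees
\begin{equation}
\label{eq:per_step}
h(S^{i+1})-h(S^i)\;\ge\;\frac{1}{k}\bigl(h(S^\ast(\V,k,0))-h(S^i)\bigr),
\end{equation}
which follows from sequence-submodularity, forward-monotonicity and backward-monotonicity exactly as in the proof of Lemma~\ref{lemma:classic_greedy} (one spreads the optimal sequence element-by-element against the greedy prefix and uses that the best marginal is at least the average). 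I would first extract \eqref{eq:per_step} from that proof, then iterate it only over the last $k'$ steps, i.e.\ from $i=k-k'$ up to $i=k$.

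Iterating \eqref{eq:per_step} from step $k-k'$ gives, with $\mathrm{OPT}\triangleq h(S^\ast(\V,k,0))$,
\begin{equation}
\label{eq:iterate}
\mathrm{OPT}-h(S)\;\le\;\Bigl(1-\tfrac{1}{k}\Bigr)^{k'}\bigl(\mathrm{OPT}-h(S^{k-k'})\bigr)\;\le\;e^{-k'/k}\bigl(\mathrm{OPT}-h(S_1)\bigr).
\end{equation}
Now substitute the concentration hypothesis $h(S_1)\ge c\cdot h(S)$ into the right-hand side: $\mathrm{OPT}-h(S_1)\le \mathrm{OPT}-c\,h(S)$, so \eqref{eq:iterate} becomes $\mathrm{OPT}-h(S)\le e^{-k'/k}\bigl(\mathrm{OPT}-c\,h(S)\bigr)$. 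Solving this linear inequality for $h(S)$: collect terms to get $h(S)\bigl(1-c\,e^{-k'/k}\bigr)\ge \mathrm{OPT}\bigl(1-e^{-k'/k}\bigr)$, and since $c\le 1$ and $k'\le k$ the coefficient $1-c\,e^{-k'/k}$ is positive, hence
\begin{equation}
h(S)\;\ge\;\frac{1-e^{-k'/k}}{1-c\,e^{-k'/k}}\,\mathrm{OPT}\;=\;\frac{e^{k'/k}-1}{e^{k'/k}-c}\,h(S^\ast(\V,k,0)),
\end{equation}
after multiplying numerator and denominator by $e^{k'/k}$. This is exactly the claimed bound.

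The only genuinely delicate point is verifying that \eqref{eq:per_step} holds in the required form for sequence functions — in particular that one may "restart" the telescoping argument at an arbitrary intermediate prefix $S^{k-k'}$ rather than at the empty sequence. This is where backward-monotonicity enters: to compare the greedy marginal $h((v)\mid S^i)$ against marginals of optimal elements appended after $S^i$, one needs that inserting the optimal elements does not lose value relative to $S^i$, and sequence-submodularity (not merely element-sequence-submodularity) is what lets the marginal of the whole optimal block dominate, as flagged in the paper's discussion. I would therefore state \eqref{eq:per_step} as a standalone step citing the internals of Lemma~\ref{lemma:classic_greedy}'s proof, and the rest is the short algebraic manipulation above. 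I expect the write-up to be essentially the four displayed lines plus a sentence justifying positivity of $1-c\,e^{-k'/k}$.
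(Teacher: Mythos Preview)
Your proposal is correct and essentially matches the paper's proof: the paper first packages your iterated per-step bound as a separate lemma (Lemma~\ref{lemma:S1_given_S2}, stating $h(S_2\mid S_1)\ge (1-e^{-k'/k})(\mathrm{OPT}-h(S_1))$), which is algebraically equivalent to your inequality \eqref{eq:iterate}, and then substitutes $h(S_1)\ge c\,h(S)$ and solves the same linear inequality (parametrizing $h(S)=\delta\cdot\mathrm{OPT}$ rather than working with $h(S)$ directly, but this is cosmetic). The only difference is that the paper isolates the ``restart the greedy analysis at $S_1$'' step as a standalone lemma, whereas you fold it directly into the argument.
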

\begin{remark}
Lemma~\ref{lemma:value_concentration} implies that additional information about the sequence selected by the SSG algorithm can be exploited to prove a better approximation ratio.
In the following, we provide a few examples to demonstrate the significance of Lemma~\ref{lemma:value_concentration}. Assume $k=20$ and $k^\prime = 19$. Then, we have $|S_1|=k-k^\prime=1$. If the value of sequence $S$ is evenly distributed over all of its twenty elements, then we have $c = 0.05$ and $h(S) \geq 0.62 h(S^\ast(\V, k, 0))$. This approximation ratio is approximately equal to that in Lemma~\ref{lemma:classic_greedy} (0.62 vs. $1-1/e \approx 0.63$). On the other hand, if the first element is worth a higher portion of the value of sequence $S$, then Lemma~\ref{lemma:value_concentration} leads to a better approximation ratio. For example, if $c = 0.5$ (resp., $c=0.8$), then we have $h(S) \geq 0.76 h(S^\ast(\V, k, 0))$ (resp., $h(S) \geq 0.88 h(S^\ast(\V, k, 0))$). In particular, if $c = 1$, then we have $h(S) \ge h(S^\ast(\V, k, 0))$, which implies that the SSG algorithm yields an optimal solution to Problem~\eqref{eq:sequence_selection}. We state this special case in Corollary~\ref{cor:c_equals_1}, which may be of independent interest. 
\end{remark}

\begin{corollary}
\label{cor:c_equals_1}
Suppose that the sequence selected by the SSG algorithm is $S$ with $|S|=k$ and that there exists a sequence $S_1$ such that $S_1 \preceq S$ and $h(S_1) = h(S)$. Then, under Assumption~\ref{assump:forward_backward_sequence},  we have $h(S) = h(S^\ast(\V, k, 0))$, and thus, sequence $S$ is an optimal solution to Problem~\eqref{eq:sequence_selection}.
\end{corollary}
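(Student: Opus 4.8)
The plan is to read Corollary~\ref{cor:c_equals_1} as the boundary case $c = 1$ of Lemma~\ref{lemma:value_concentration}, combined with the trivial remark that the sequence produced by the SSG algorithm is itself a feasible point of Problem~\eqref{eq:sequence_selection}.

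First I would fix notation. Since $S_1 \preceq S$, the sequence $S_1$ is exactly the length-$|S_1|$ prefix of $S$; set $k^\prime = k - |S_1|$, so that $|S_1| = k - k^\prime$. We may assume $1 \le k^\prime \le k$ (if $k^\prime = 0$ then $S_1 = S$ and the hypothesis $h(S_1) = h(S)$ holds automatically without constraining $S$, so there is nothing of substance to prove). Under this convention the hypotheses of Lemma~\ref{lemma:value_concentration} are met verbatim with constant $c = 1$, namely $|S_1| = k - k^\prime$, $S_1 \preceq S$, and $h(S_1) \ge 1 \cdot h(S)$. Invoking that lemma under Assumption~\ref{assump:forward_backward_sequence} gives
\[
h(S) \;\ge\; \frac{e^{k^\prime/k} - 1}{e^{k^\prime/k} - 1}\, h(S^\ast(\V, k, 0)) \;=\; h(S^\ast(\V, k, 0)),
\]
where the coefficient collapses to $1$ once $c = 1$; note that $e^{k^\prime/k} - 1 > 0$ because $k^\prime \ge 1$, so the expression is well defined.

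For the reverse inequality I would simply note that the SSG output $S$ has exactly $k$ elements and hence is feasible for Problem~\eqref{eq:sequence_selection}; by optimality of $S^\ast(\V, k, 0)$ this forces $h(S) \le h(S^\ast(\V, k, 0))$. Chaining the two bounds yields $h(S) = h(S^\ast(\V, k, 0))$, which is the optimal value of Problem~\eqref{eq:sequence_selection}, so $S$ is an optimal solution, as claimed.

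There is no genuinely hard step here: all the work is carried by Lemma~\ref{lemma:value_concentration}. The only points that need a single line of care are (i) observing that $S_1 \preceq S$ makes $S_1$ a prefix of $S$, so $k^\prime = k - |S_1|$ lies in the admissible range $\{1,\dots,k\}$ once the degenerate case $S_1 = S$ is set aside, and (ii) checking that the approximation coefficient $\tfrac{e^{k^\prime/k}-1}{e^{k^\prime/k}-c}$ is well defined and equals exactly $1$ at $c = 1$ for every such $k^\prime$.
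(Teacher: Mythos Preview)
Your proposal is correct and follows exactly the approach the paper intends: the corollary is stated in the paper precisely as the $c=1$ limit of Lemma~\ref{lemma:value_concentration}, and your argument instantiates that lemma at $c=1$ (so the ratio collapses to $1$) and then closes the equality via feasibility of the SSG output. Your explicit handling of the degenerate case $S_1 = S$ and of the reverse inequality $h(S) \le h(S^\ast(\V,k,0))$ is in fact more careful than what the paper writes out.
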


\begin{remark}
Lemmas~\ref{lemma:classic_greedy} and \ref{lemma:value_concentration} and Corollary~\ref{cor:c_equals_1} can be proven in a similar manner under a weaker assumption that function $h$ is forward-monotone, backward-monotone, and element-sequence-submodular.
\end{remark}

In the following, we compare the impact of removing a certain number of elements from a selected sequence with that of removing the same number of elements from the ground set $\V$ before the selection takes place. We state the result in Lemma~\ref{lemma:Optimal_value_from_subset}.

\begin{lemma}
\label{lemma:Optimal_value_from_subset}
Consider $1 \leq \tau \leq k$. Suppose that function $h$ is forward-monotone. The following holds for any  $\V^\prime \subseteq \V$ with $|\V^\prime| \leq \tau$:
\begin{equation}
    g_{\tau}(S^\ast(\V, k, \tau)) \leq h(S^\ast(\V \setminus \V^\prime, k - \tau, 0)).
\end{equation}
\end{lemma}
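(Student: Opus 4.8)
The plan is to show that removing $\tau$ elements from the ground set $\V$ before selection can only hurt (or leave unchanged) the robust objective value, since the optimal robust solution $S^\ast(\V, k, \tau)$ already contends with a worst-case removal of $\tau$ elements from whatever it picks. First I would let $S = S^\ast(\V, k, \tau)$ and fix an arbitrary $\V^\prime \subseteq \V$ with $|\V^\prime| \leq \tau$. The natural move is to compare $S$ against its own restriction to $\V \setminus \V^\prime$, namely the subsequence $S' \triangleq S - \V^\prime$ (equivalently $S - (\V^\prime \cap \V(S))$). Note $S' \in \HH(\V \setminus \V^\prime)$ and $|S'| \leq |S| \leq k$; in fact $|S'| \geq |S| - \tau$, but we only need $|S'| \leq k - \tau$ or, more carefully, we need a feasible competitor of size at most $k - \tau$ in the ground set $\V \setminus \V^\prime$.

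The subtlety is that $S'$ may have more than $k - \tau$ elements (if $S$ uses $k$ elements but fewer than $\tau$ of them lie in $\V^\prime$). To handle this, I would take any length-$(k-\tau)$ prefix-subsequence: more precisely, since $|S'| \geq k - \tau$ is not even guaranteed in general, the clean argument is as follows. The key inequality is $g_{\tau}(S) \leq h(S - \Z_{\tau}(S))$ where $\Z_{\tau}(S)$ is a worst-case removal; but since $\V^\prime \cap \V(S)$ is itself a candidate removal set of size $\leq \tau$, we get $g_\tau(S) = h(S - \Z_\tau(S)) \leq h(S - (\V^\prime \cap \V(S))) = h(S')$. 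Wait — that inequality goes the wrong way, since $\Z_\tau$ is the \emph{minimizer}. So instead $g_\tau(S) = \min_{|\V''|\le\tau} h(S - \V'') \leq h(S - (\V^\prime\cap\V(S))) = h(S')$. Good, that direction is correct: $g_\tau(S) \leq h(S')$. Now $S'$ lives in $\HH(\V\setminus\V^\prime)$ but may be too long. Here I invoke forward monotonicity: writing $S' = S'' \oplus S'''$ where $S''$ consists of the first $k - \tau$ elements of $S'$ (if $|S'| > k-\tau$; otherwise $S'' = S'$), forward monotonicity gives $h(S') = h(S'' \oplus S''') \geq h(S'')$ — again the wrong direction.

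Let me reconsider: the correct chain should end at $h(S^\ast(\V\setminus\V^\prime, k-\tau, 0))$, the \emph{maximum} over length-$(k-\tau)$ sequences in $\V\setminus\V^\prime$. So I want to exhibit a feasible sequence of length $\leq k-\tau$ in $\V\setminus\V^\prime$ whose $h$-value is $\geq g_\tau(S)$. The right object is: among the $|\V(S)| \geq k$ elements of $S$, remove $\V^\prime\cap\V(S)$ \emph{and} truncate to get exactly $k-\tau$ elements — but truncation via forward monotonicity gives a \emph{lower} bound on the longer sequence, i.e. $h(\text{truncation}) \leq h(S')$, which is still the wrong direction for chaining up to the optimum. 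The resolution is that we do \emph{not} need to chain through $S'$: we directly note that $S - \Z_\tau(S)$ has at most... no. The cleanest path: since $|S| \le k$, the sequence $S$ minus a worst-case $\tau$-removal has at most $k$ elements, but actually for \emph{this} lemma the intended argument (mirroring Eq.~\eqref{eq:g_vs_h}) is: $g_\tau(S^\ast(\V,k,\tau)) \leq g_\tau(S)$ for $S$ the optimal robust solution restricted appropriately — and then the removal $\V^\prime$ forces $S$'s adversary to act within $\V\setminus\V^\prime$ constraints.

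Here is the argument I would actually commit to. Let $S = S^\ast(\V,k,\tau)$ and pick the length-$(k-\tau)$ subsequence $T$ of $S$ obtained by first deleting the elements of $\V^\prime \cap \V(S)$ and then, if necessary, deleting further trailing elements until $|T| = \min(|S| - |\V^\prime\cap\V(S)|,\, k-\tau)$; actually since $|S|\le k$ and we delete $|\V^\prime\cap\V(S)|\le\tau$ elements, $|T|\le k-\tau$ automatically once we've deleted $\V^\prime\cap\V(S)$ \emph{provided} $|\V^\prime\cap\V(S)| = \tau$; if fewer, we delete more. In all cases $T \in \HH(\V\setminus\V^\prime)$ with $|T|\le k-\tau$, so $h(T) \le h(S^\ast(\V\setminus\V^\prime,k-\tau,0))$. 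It remains to show $g_\tau(S) \le h(T)$. Since $T$ was obtained from $S$ by deleting a set $\U$ of elements with $\V^\prime\cap\V(S) \subseteq \U$ and $|\U| = \tau$ (pad $\U$ with trailing elements of $S$ to reach size exactly $\tau$ — possible since $|S| = k \ge \tau$ in the nontrivial case, and WLOG the optimal robust solution uses all $k$ slots), we have $\U$ is a valid removal set of size $\tau$, hence $g_\tau(S) = \min_{|\V''|\le\tau} h(S-\V'') \le h(S-\U) = h(T)$. Combining, $g_\tau(S^\ast(\V,k,\tau)) = g_\tau(S) \le h(T) \le h(S^\ast(\V\setminus\V^\prime,k-\tau,0))$, which is the claim. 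The main obstacle is the bookkeeping around sequence lengths — ensuring the padded removal set $\U$ has size exactly $\tau$ while still containing $\V^\prime\cap\V(S)$, which needs $|S| \ge \tau$; this is where forward monotonicity is genuinely used, since if the optimal robust solution uses fewer than $k$ elements we must argue (via forward monotonicity) that extending it to $k$ elements does not decrease $g_\tau$, so WLOG $|S| = k \ge \tau$.
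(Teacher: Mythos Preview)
Your final committed argument is correct and in fact more direct than the paper's proof. The paper proceeds by setting $\U = \V(S^\ast)\cap\V'$ with $\tau' = |\U|$, then routes the bound through the intermediate robust problem on $\V\setminus\V'$ with parameters $(k-\tau',\,\tau-\tau')$: it shows $g_\tau(S^\ast) \le g_{\tau-\tau'}(S^\ast-\V') \le g_{\tau-\tau'}(S^\ast(\V\setminus\V',\,k-\tau',\,\tau-\tau'))$ and finishes by invoking Eq.~\eqref{eq:g_vs_h}. Your approach instead builds in one shot a single removal set $\U \supseteq \V'\cap\V(S^\ast)$ with $|\U|\le\tau$ so that $T=S^\ast-\U$ is feasible for the non-robust problem on $\V\setminus\V'$ with budget $k-\tau$, giving $g_\tau(S^\ast)\le h(T)\le h(S^\ast(\V\setminus\V',k-\tau,0))$ directly. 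This is shorter and avoids the auxiliary robust problem entirely; the paper's route is more modular in that it reuses Eq.~\eqref{eq:g_vs_h}.

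One minor correction: you assert that forward monotonicity is ``genuinely used'' to assume WLOG $|S^\ast|=k$, but your own construction does not actually need this. For any $|S^\ast|\le k$, taking $\U$ to contain $\V'\cap\V(S^\ast)$ and then padding with further elements of $\V(S^\ast)$ until $|T|\le k-\tau$, one has $|\U|=\max\bigl(|\V'\cap\V(S^\ast)|,\,|S^\ast|-(k-\tau)\bigr)\le\tau$ automatically (both terms are $\le\tau$ since $|\V'|\le\tau$ and $|S^\ast|\le k$), so the removal is valid regardless of whether $|S^\ast|=k$. Thus your argument goes through without invoking forward monotonicity at all, whereas the paper's proof does use it (to assume the worst-case removal has size exactly $\tau$).
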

\begin{proof}
Since function $h$ is forward-monotone, we can assume $|\Z_\tau(S)| = \tau$ for any sequence $S$ and $\tau \leq |S|$. The reason is the following. Suppose $|\Z_\tau(S)| < \tau$. Then, we can continue to remove more elements from the end of sequence $S$ till $|\Z_\tau(S)| = \tau$, which does not increase the value of the remaining sequence due to the forward monotonicity of function $h$.

Let $\U \triangleq \V(S^\ast(\V, k, \tau)) \cap \V^\prime$ and $\tau^\prime = |\U|$. Then, we have $\U \subseteq \V(S^\ast(\V, k, \tau))$. A little thoughts give $\U \cup \Z_{\tau - \tau^\prime}(S^\ast(\V, k, \tau) - \U)) \subseteq \V(S^\ast(\V, k, \tau))$ and $|\U \cup \Z_{\tau - \tau^\prime}(S^\ast(\V, k, \tau) - \U))| = \tau$. This implies that set $\U \cup \Z_{\tau - \tau^\prime}(S^\ast(\V, k, \tau) - \U))$ is a feasible solution to Problem~\eqref{eq:Z_S} with respect to sequence $S^\ast(\V, k, \tau)$, which further implies the following:
\begin{equation}
\label{eq:hZhU}
    h(S^\ast(\V, k, \tau) - \Z_\tau(S^\ast(\V, k, \tau)))  \leq h(S^\ast(\V, k, \tau) - \U \cup \Z_{\tau - \tau^\prime}(S^\ast(\V, k, \tau) - \U)).
\end{equation}
Also, from the definition of $\U$, we have $S^\ast(\V, k, \tau) - \U = S^\ast(\V, k, \tau) - \V^\prime$, and thus, $\Z_{\tau - \tau^\prime}(S^\ast(\V, k, \tau) - \U) = \Z_{\tau - \tau^\prime}(S^\ast(\V, k, \tau) - \V^\prime)$. Again, from the definition of $\U$, we have that the elements in $\V^\prime \setminus \U$ are not in sequence $S^\ast(\V, k, \tau)$. Then, we have the following:
\begin{equation}
\label{eq:hUhV}
    h(S^\ast(\V, k, \tau) - \U \cup \Z_{\tau - \tau^\prime}(S^\ast(\V, k, \tau) - \U))
    = h(S^\ast(\V, k, \tau) - \V^\prime \cup \Z_{\tau - \tau^\prime}(S^\ast(\V, k, \tau) - \V^\prime)).
\end{equation}
Note that sequence $S^\ast(\V, k, \tau) - \V^\prime$ does not contain any elements in $\V^\prime$ and has $k - \tau^\prime$ elements. Hence, sequence $S^\ast(\V, k, \tau) - \V^\prime$ is a feasible solution to Problem~\eqref{eq:Robust_sequence_selection} (with respect to $\V \setminus \V^\prime$, $k - \tau^\prime$, and $\tau - \tau^\prime$). This implies the following:
\begin{equation}
\label{eq:g_tau_tau_prime}
     g_{\tau - \tau^\prime}(S^\ast(\V, k, \tau) - \V^\prime) \leq g_{\tau - \tau^\prime}(S^\ast(\V \setminus \V^\prime, k - \tau^\prime, \tau - \tau^\prime)). 
\end{equation}
Also, by replacing $\V$, $k$, and $\tau$ in Eq.~\eqref{eq:g_vs_h} with $\V \setminus \V^\prime$, $k - \tau^\prime$, and $\tau - \tau^\prime$, respectively, we immediately obtain the following:
\begin{equation}
\label{eq:eq:g_0}
    g_{\tau - \tau^\prime}(S^\ast(\V \setminus \V^\prime, k - \tau^\prime, \tau - \tau^\prime)) \leq h(S^\ast(\V \setminus \V^\prime, k - \tau, 0)).
\end{equation}
By combining Eqs.~\eqref{eq:hZhU}-\eqref{eq:eq:g_0}, we have the following:
 \begin{equation}
     \begin{aligned}
        g_{\tau}(S^\ast(\V, k, \tau)) & \stackrel{\text{(a)}} = h(S^\ast(\V, k, \tau) - \Z_\tau(S^\ast(\V, k, \tau))) \\ 
        & \stackrel{\text{(b)}} \leq h(S^\ast(\V, k, \tau) - \U \cup \Z_{\tau - \tau^\prime}(S^\ast(\V, k, \tau) - \U))  \\
        & \stackrel{\text{(c)}} = h(S^\ast(\V, k, \tau) - \V^\prime \cup \Z_{\tau - \tau^\prime}(S^\ast(\V, k, \tau) - \V^\prime)) \\
        & \stackrel{\text{(d)}} = h((S^\ast(\V, k, \tau) - \V^\prime) - \Z_{\tau - \tau^\prime}(S^\ast(\V, k, \tau) - \V^\prime)) \\
        & \stackrel{\text{(e)}} = g_{\tau-\tau^\prime}((S^\ast(\V, k, \tau) - \V^\prime)) \\
        & \stackrel{\text{(f)}} \leq g_{\tau-\tau^\prime}(S^\ast(\V \setminus \V^\prime, k - \tau^\prime, \tau - \tau^\prime)) \\
        & \stackrel{\text{(g)}} \leq h(S^\ast(\V \setminus \V^\prime, k - \tau, 0)),
     \end{aligned}
    \end{equation}
where (a) is from the definition of function $g_\tau$, (b) is from Eq.~\eqref{eq:hZhU}, (c) is from Eq.~\eqref{eq:hUhV}, (d) is a rewriting, (e) is from the definition of function $g_{\tau-\tau^\prime}$, (f) is from Eq.~\eqref{eq:g_tau_tau_prime}, and (g) is from Eq.~\eqref{eq:eq:g_0}.
\end{proof}

\subsection{Proof of Lemma~\ref{lemma:value_concentration}}
\label{proof:value_concentration}

Before we prove Lemma~\ref{lemma:value_concentration}, we introduce two lemmas: Lemmas~\ref{lemma:node_marginal_value} and \ref{lemma:S1_given_S2}. Lemma~\ref{lemma:node_marginal_value} is borrowed from \cite{alaei2010maximizing} and will be used in the proof of Lemma~\ref{lemma:S1_given_S2}, which will be used in the proof of Lemma~\ref{lemma:value_concentration}.

 \begin{lemma}[Lemma~10 of \cite{alaei2010maximizing}]
    \label{lemma:node_marginal_value}
   Suppose that function $h$ is sequence-submodular. For any sequences $S_1^\prime, S_2^\prime \in \HH$, there exists an element $v \in \V$ such that $h((v)|S_1^\prime) \geq \frac{1}{|S_2^\prime|} h(S_2^\prime | S_1^\prime)$.
\end{lemma}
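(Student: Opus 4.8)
The plan is to expand the marginal value $h(S_2^\prime \mid S_1^\prime)$ as a telescoping sum of single-element marginals taken along $S_2^\prime$, dominate each summand by the corresponding marginal taken with respect to $S_1^\prime$ alone using sequence-submodularity, and then extract a good element by an averaging (pigeonhole) argument.

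Concretely, I would first strip off the elements of $S_2^\prime$ already present in $S_1^\prime$: writing $S_2^\prime - \V(S_1^\prime) = (w_1, \dots, w_{m_3})$, the definition of concatenation in Eq.~\eqref{eq:concatenation_unique_elements} gives $S_1^\prime \oplus S_2^\prime = S_1^\prime \oplus (w_1, \dots, w_{m_3})$, and note $m_3 = |S_2^\prime - \V(S_1^\prime)| \le |S_2^\prime|$. If $m_3 = 0$ then $h(S_2^\prime \mid S_1^\prime) = 0$ and any $v$ with a nonnegative marginal works, so assume $m_3 \ge 1$. Telescoping over the appended elements then yields
\begin{equation}
h(S_2^\prime \mid S_1^\prime) = \sum_{j=1}^{m_3} h\bigl((w_j) \bigm| S_1^\prime \oplus (w_1, \dots, w_{j-1})\bigr).
\end{equation}
For each $j$ the chain $S_1^\prime \preceq S_1^\prime \oplus (w_1, \dots, w_{j-1})$ holds, so applying sequence-submodularity (Eq.~\eqref{eq:sequence_submodular} with $S_3 = (w_j)$; in fact element-sequence-submodularity, Eq.~\eqref{eq:element_sequence_submodular}, already suffices) gives $h\bigl((w_j) \bigm| S_1^\prime \oplus (w_1, \dots, w_{j-1})\bigr) \le h((w_j) \mid S_1^\prime)$. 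Summing, $h(S_2^\prime \mid S_1^\prime) \le \sum_{j=1}^{m_3} h((w_j) \mid S_1^\prime) \le m_3 \cdot \max_{1 \le j \le m_3} h((w_j) \mid S_1^\prime)$. Taking $v$ to be a maximizer and using $m_3 \le |S_2^\prime|$ together with nonnegativity of $h(S_2^\prime \mid S_1^\prime)$ (which holds since $h$ is forward-monotone, as is in force throughout) gives $h((v) \mid S_1^\prime) \ge \tfrac{1}{m_3} h(S_2^\prime \mid S_1^\prime) \ge \tfrac{1}{|S_2^\prime|} h(S_2^\prime \mid S_1^\prime)$, which is the claim.

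The argument is short, so I do not expect a genuine obstacle; the only points requiring care are that $\oplus$ silently discards elements of $S_2^\prime$ already in $S_1^\prime$ — hence the telescoping runs over $m_3 \le |S_2^\prime|$ terms rather than $|S_2^\prime|$, and the passage from the $\tfrac{1}{m_3}$ bound to the claimed $\tfrac{1}{|S_2^\prime|}$ bound relies on nonnegativity of the marginal — and that the intermediate sequences genuinely form a $\preceq$-chain above $S_1^\prime$, which is precisely the relation under which (element-)sequence-submodularity is defined.
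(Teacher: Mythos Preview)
Your proof is correct and follows essentially the same telescoping-plus-pigeonhole argument as the paper. The only cosmetic difference is that the paper telescopes directly over all $|S_2^\prime|$ elements (terms corresponding to elements already in $S_1^\prime$ simply vanish), applies pigeonhole first to select one term, and then invokes submodularity once on that term; this yields the $1/|S_2^\prime|$ factor immediately and sidesteps the appeal to forward monotonicity that your duplicate-stripping step introduces when passing from $1/m_3$ to $1/|S_2^\prime|$.
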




\begin{lemma}
\label{lemma:S1_given_S2}
Consider $1 \le k^\prime \le k$. Suppose that the sequence selected by the SSG algorithm is $S$ with $|S|=k$ and that there exist sequences $S_1$ and $S_2$ such that sequence $S$ can be written as $S = S_1 \oplus S_2$ with $|S_1| = k - k^\prime$ and $|S_2| = k^\prime$.
Then, under Assumption~\ref{assump:forward_backward_sequence}, we have $h(S_2|S_1) \geq (1 - 1/e^{\frac{k^\prime}{k}}) (h(S^\ast(\V, k, 0)) -h(S_1))$.
\end{lemma}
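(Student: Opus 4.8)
The plan is to mimic the classical greedy analysis from \cite{streeter2009online} (i.e., the proof of Lemma~\ref{lemma:classic_greedy}), but carried out \emph{relative to the prefix $S_1$} rather than from the empty sequence. Write $S_2 = (u_1, \dots, u_{k^\prime})$ and let $S_2^j = (u_1,\dots,u_j)$ be its first $j$ elements, so $S_1 \oplus S_2^j$ is exactly the sequence held by the SSG algorithm after $k-k^\prime+j$ greedy steps. Let $O = S^\ast(\V,k,0)$ denote an optimal length-$k$ sequence. The key quantity to track is the ``gap'' $\delta_j \triangleq h(O) - h(S_1 \oplus S_2^j)$; the goal is to show $\delta_{k^\prime} \le e^{-k^\prime/k}\,\delta_0$ where $\delta_0 = h(O) - h(S_1)$, which upon rearranging is exactly the claimed bound $h(S_2 \mid S_1) = h(S_1\oplus S_2) - h(S_1) = \delta_0 - \delta_{k^\prime} \ge (1 - e^{-k^\prime/k})\,\delta_0$.

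First I would establish the one-step recursion $\delta_{j} - \delta_{j+1} \ge \tfrac{1}{k}\,\delta_{j}$, equivalently $\delta_{j+1} \le (1-\tfrac1k)\delta_j$. The left-hand side equals the marginal gain $h((u_{j+1}) \mid S_1 \oplus S_2^j)$ of the $(j+1)$-st greedy pick. Since SSG chooses $u_{j+1}$ to maximize this marginal value over all available elements, it suffices to exhibit \emph{some} element $v$ (available at that stage) with $h((v)\mid S_1\oplus S_2^j) \ge \tfrac1k \delta_j$. Here Lemma~\ref{lemma:node_marginal_value} is the right tool: applied with $S_1^\prime = S_1\oplus S_2^j$ and $S_2^\prime = O$, it produces an element $v\in\V$ with $h((v)\mid S_1\oplus S_2^j) \ge \tfrac{1}{|O|} h(O \mid S_1\oplus S_2^j) = \tfrac1k\big(h((S_1\oplus S_2^j)\oplus O) - h(S_1\oplus S_2^j)\big) \ge \tfrac1k\big(h(O) - h(S_1\oplus S_2^j)\big) = \tfrac1k\delta_j$, where the last inequality uses backward monotonicity (Eq.~\eqref{eq:backward_monotone}) to say $h((S_1\oplus S_2^j)\oplus O) \ge h(O)$. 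Note Lemma~\ref{lemma:node_marginal_value} requires sequence-submodularity, which is why we invoke Assumption~\ref{assump:forward_backward_sequence} rather than only element-sequence-submodularity.

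The one subtlety is whether the element $v$ furnished by Lemma~\ref{lemma:node_marginal_value} is actually \emph{available} to the greedy algorithm at step $j+1$, i.e.\ $v \notin \V(S_1\oplus S_2^j)$. If $v$ is already in the current sequence then $h((v)\mid S_1\oplus S_2^j) = 0$ by the convention on concatenation (appending an already-present element changes nothing), which would make the lower bound $\tfrac1k\delta_j$ vacuous unless $\delta_j \le 0$. But if $\delta_j \le 0$ at some stage, then $h(S_1\oplus S_2^j) \ge h(O)$ already, and by forward monotonicity $h(S_1\oplus S_2) \ge h(S_1\oplus S_2^j) \ge h(O)$, so $h(S_2\mid S_1) \ge h(O) - h(S_1) = \delta_0$, which is even stronger than the claim (since $1 - 1/e^{k^\prime/k} < 1$ and $\delta_0 \ge 0$ by forward monotonicity applied to $S_1 \preceq O$... actually $\delta_0 \ge 0$ needs care, but if $\delta_0 < 0$ the statement is trivial as the RHS is negative). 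So we may assume $\delta_j > 0$ throughout, and then a careful reading of Lemma~\ref{lemma:node_marginal_value}'s proof (it compares marginal values of \emph{individual elements of $S_2^\prime$}, which can be taken among elements not yet used, or else already present and contributing zero while the average is positive) guarantees the chosen $v$ can be taken outside $\V(S_1\oplus S_2^j)$ — this bookkeeping is the main obstacle and the place to be careful.

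Finally, I would iterate the recursion: $\delta_{k^\prime} \le (1-\tfrac1k)^{k^\prime}\delta_0 \le e^{-k^\prime/k}\delta_0$, using $1-x\le e^{-x}$. Rearranging, $h(S_2\mid S_1) = \delta_0 - \delta_{k^\prime} \ge (1 - e^{-k^\prime/k})\delta_0 = (1-1/e^{k^\prime/k})\big(h(S^\ast(\V,k,0)) - h(S_1)\big)$, which is the desired inequality. All ingredients — backward monotonicity, forward monotonicity, and sequence-submodularity via Lemma~\ref{lemma:node_marginal_value} — are exactly what Assumption~\ref{assump:forward_backward_sequence} provides, and the argument is a clean ``shifted'' version of the standard greedy telescoping bound.
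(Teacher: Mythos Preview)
Your proposal is correct and follows essentially the same route as the paper: both apply Lemma~\ref{lemma:node_marginal_value} with $S_1^\prime = S_1 \oplus S_2^{i-1}$ and $S_2^\prime = S^\ast(\V,k,0)$, invoke backward monotonicity to obtain the one-step inequality $h(S_1\oplus S_2^{i}) - h(S_1\oplus S_2^{i-1}) \ge \tfrac{1}{k}\bigl(h(S^\ast(\V,k,0)) - h(S_1\oplus S_2^{i-1})\bigr)$, iterate, and bound $(1-1/k)^{k^\prime}\le e^{-k^\prime/k}$. Your ``gap'' variable $\delta_j$ is just a reparametrization of the paper's recursion, and your discussion of whether the exhibited $v$ is available is more cautious than necessary---the paper simply notes that if $v\in\V(S_1\oplus S_2^{i-1})$ then $h((v)\mid S_1\oplus S_2^{i-1})=0$ by the concatenation convention, so the greedy pick (whose marginal value is nonnegative by forward monotonicity) dominates it regardless.
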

\begin{proof}

Let $v_2^i$ denote the $i$-th element of sequence $S_2$, and let $S_2^i \triangleq (v_2^1, \dots, v_2^i)$ denote the sequence consisting of the first $i$ elements of sequence $S_2$. Since function $h$ is forward-monotone, we can assume that $|S^\ast(\V, k, 0)| = k$ as adding more elements to the end of a sequence does not reduce its overall value. 

Due to Lemma~\ref{lemma:node_marginal_value}, there exists some element $v^\prime \in \V$ such that $h((v^\prime)|S_1 \oplus S_2^{i-1}) \geq \frac{1 }{k} h(S^\ast(\V, k, 0)|S_1 \oplus S_2^{i-1})$.
Then, we have the following:
    \begin{equation}
    \begin{aligned}
    \label{eq:element_marginal_value2}
        h(S_1 \oplus S_2^{i}) - h(S_1 \oplus S_2^{i-1})
        & = h((v_2^i)|S_1 \oplus S_2^{i-1}) \\
        & \stackrel{\text{(a)}} \ge h((v^\prime)|S_1 \oplus S_2^{i-1}) \\
        & \stackrel{\text{(b)}} \geq \frac{1 }{k} h(S^\ast(\V, k, 0)|S_1 \oplus S_2^{i-1}) \\
        & = \frac{1}{k} (h(S_1 \oplus S_2^{i-1} \oplus S^\ast(\V, k, 0)) - h(S_1 \oplus S_2^{i-1})) \\
        & \stackrel{\text{(c)}} \geq \frac{1}{k} (h(S^\ast(\V, k, 0)) - h(S_1 \oplus S_2^{i-1})),
    \end{aligned}
    \end{equation}
where (a) is due to the greedy manner of the SSG algorithm (Line~4 of Algorithm~\ref{alg:classicGreedy}), (b) is from the property of element $v^\prime$ (due to Lemma~\ref{lemma:node_marginal_value}), and (c) is due to the backward monotonicity of function $h$. Rewriting Eq.~\eqref{eq:element_marginal_value2} yields the following equivalent inequality:
     \begin{equation}
     \label{eq:recursive_i_2}
        h(S_1 \oplus S_2^{i})  \geq  \frac{1}{k} h(S^\ast(\V, k, 0)) + (1-\frac{1}{k})h(S_1 \oplus S_2^{i-1}).
    \end{equation}

By writing Eq.~\eqref{eq:recursive_i_2} for $i \in \{1, \dots, k^\prime\}$ and combining them, we obtain the following:
    \begin{equation}
     \begin{aligned}
     \label{eq:S1_plus_S_2}
        h(S_1 \oplus S_2^{k^\prime})  & \geq  \sum_{j =0}^{k^\prime - 1} \frac{1}{ k} (1-\frac{1 }{k})^j ~h(S^\ast(\V, k, 0)) + (1-\frac{1 }{k})^{k^\prime} h(S_1) \\
        & = (1 - (1 - \frac{1 }{k})^{k^\prime}) h(S^\ast(\V, k, 0)) + (1-\frac{1 }{k})^{k^\prime} h(S_1).
     \end{aligned}
    \end{equation}

Applying Eq.~\eqref{eq:S1_plus_S_2} and the fact that $S_2 = S_2^{k^\prime}$ yields the following:
 \begin{equation}
     \begin{aligned}
       h(S_2|S_1) & = h(S_1 \oplus S_2^{k^\prime}) -h(S_1) \\
        & \geq (1 - (1 - \frac{1 }{k})^{k^\prime}) h(S^\ast(\V, k, 0))  + (1-\frac{1 }{k})^{k^\prime} h(S_1) -h(S_1) \\
        & = (1 - (1 - \frac{1 }{k})^{k^\prime}) h(S^\ast(\V, k, 0)) - (1 - (1 - \frac{1 }{k})^{k^\prime})h(S_1) \\
        & = (1 - (1 - \frac{1 }{k})^{k^\prime}) (h(S^\ast(\V, k, 0)) - h(S_1))\\
        & \ge (1 - 1/e^{\frac{k^\prime}{k}}) ( h(S^\ast(\V, k, 0)) - h(S_1)),
     \end{aligned}
    \end{equation}
     where the last inequality holds because $(1 - \frac{1 }{k}) \leq e^{-\frac{1 }{k}}$ and $h(S^\ast(\V, k, 0)) - h(S_1)$ is nonnegative. This completes the proof.
\end{proof}

\begin{remark}
We note a subtle yet critical difference between the monotonicity properties of set functions and sequence functions, which complicates the proof of Lemma~\ref{lemma:S1_given_S2}. Consider a set function $r(\cdot)$. We define $r(\V_2|\V_1) \triangleq r(\V_1 \cup \V_2) - r(\V_1)$ as the marginal value of adding set $\V_2 \subseteq \V$ to another set $\V_1 \subseteq \V$.
It is easy to see that the monotonicity of function $r(\cdot)$ implies the monotonicity of function $r(\cdot|\V_1)$ for any given $\V_1 \subseteq \V$. Such an analogy also exists for forward monotonicity of sequence functions, i.e., the forward monotonicity of a sequence function $h(\cdot)$ implies the forward monotonicity of function $h(\cdot|S_1)$ for any given $S_1 \in \HH(\V)$. However, a similar property does not hold for backward monotonicity. Due to lack of such a monotonicity property, the proof of Lemma~\ref{lemma:S1_given_S2} becomes more involved and requires more careful derivations.
\end{remark}

Having introduced Lemmas~\ref{lemma:node_marginal_value} and \ref{lemma:S1_given_S2}, we are now ready to prove Lemma~\ref{lemma:value_concentration}.

\begin{proof}[Proof of Lemma~\ref{lemma:value_concentration}]
Suppose $h(S) = \delta \cdot h(S^\ast(\V, k, 0))$ for some $\delta \in (0, 1]$. 
Then, we have
 \begin{equation}
    \label{eq:h_s}
     \begin{aligned}
       \delta \cdot h(S^\ast(\V, k, 0)) & = h(S) \\
       & = h(S_1) + h(S_2|S_1) \\ 
            & \stackrel{\text{(a)}} \geq h(S_1) + (1 - 1/e^{\frac{k^\prime}{k}}) ( h(S^\ast(\V, k, 0)) - h(S_1)) \\
            & =(1/e^{\frac{k^\prime}{k}}) h(S_1) +  (1 - 1/e^{\frac{k^\prime}{k}}) h(S^\ast(\V, k, 0)) \\
             & \stackrel{\text{(b)}} \geq (c \cdot \delta /e^{\frac{k^\prime}{k}})  h(S^\ast(\V, k, 0)) +  (1 - 1/e^{\frac{k^\prime}{k}}) h(S^\ast(\V, k, 0)),
     \end{aligned}
    \end{equation}
where (a) follows from Lemma~\ref{lemma:S1_given_S2} and (b) holds because $h(S_1) \geq c \cdot h(S) = c \cdot \delta \cdot h(S^\ast(\V, k, 0))$. Dividing both sides of Eq.~\eqref{eq:h_s} by $h(S^\ast(\V, k, 0))$ yields the following:
 \begin{equation}
       \delta \geq (c \cdot \delta /e^{\frac{k^\prime}{k}})  +  (1 - 1/e^{\frac{k^\prime}{k}}),
    \end{equation}
which implies
\begin{equation}
            \delta \geq  \frac{e^{ \frac{k^\prime}{k}} - 1}{e^{ \frac{k^\prime}{k}} - c}.
\end{equation}
The above equation, along with $h(S) = \delta \cdot h(S^\ast(\V, k, 0))$, implies $h(S) \geq \frac{e^{ \frac{k^\prime}{k}} - 1}{e^{ \frac{k^\prime}{k}} - c} h(S^\ast(\V, k, 0))$. This completes the proof.
\end{proof}

\subsection{Proof of Theorem \ref{theorem:robustGreedy_tau_1_case_a}}
\label{proof:robustGreedy_tau_1_case_a}
\begin{proof}
Suppose that function $h$ is forward-monotone, backward-monotone, and sequence-submodular (Assumption~\ref{assump:forward_backward_sequence}). We use Lemmas~\ref{lemma:classic_greedy}, \ref{lemma:value_concentration}, and \ref{lemma:Optimal_value_from_subset} presented in Appendix~\ref{subsec:preliminary_results} to prove that Algorithm~\ref{alg:robustGreedy_tau_contiguous} achieves an approximation ratio of
$\max \left\{ \frac{e-1}{2e}, \frac{e^{\frac{k-2}{k-1}} - 1}{2 e^{\frac{k-2}{k-1}} - 1} \right\}$
in the case of $\tau = 1$. 

Given $\tau = 1$, in Step 1 of Algorithm~\ref{alg:robustGreedy_tau_contiguous}, the selected sequence $S_1$ consists of one element only; this element is denoted by $v_1$, i.e., $S_1 = (v_1)$. In Step 2 of Algorithm~\ref{alg:robustGreedy_tau_contiguous}, it is equivalent that sequence $S_2$ is selected by the SSG algorithm from set $\V \setminus \{v_1\}$, and we have $|S_2| = k-\tau=k-1$. Hence, the sequence selected by Algorithm~\ref{alg:robustGreedy_tau_contiguous} can be written as $S = S_1 \oplus S_2 = (v_1) \oplus S_2$. Recall that for any given sequence $S$, set $\Z_{\tau}(S)$ denotes the set of elements removed from sequence $S$ in the worst case (i.e., $\Z_{\tau}(S)$ is an optimal solution to Problem~\eqref{eq:Z_S}).
Note that only one element will be removed from $S$, i.e., $|\Z_\tau(S)|=1$. For ease of notation, we use $z$ to denote the only element in $\Z_\tau(S)$, i.e., $\Z_\tau(S) = \{z\}$.

We want to show the following two bounds, which establish the approximation ratio of Algorithm~\ref{alg:robustGreedy_tau_contiguous}:
\begin{subequations}
\begin{align}
    h(S - \{z\}) &\ge \frac{e - 1}{2 e} g_{\tau}(S^\ast(\V, k, \tau)), \label{eq:tau1bound1} \\ 
    h(S - \{z\}) &\ge \frac{e^{\frac{k-2}{k-1}} - 1}{2e^{\frac{k-2}{k-1}} - 1} g_{\tau}(S^\ast(\V, k, \tau)). \label{eq:tau1bound2}
\end{align}
\end{subequations}

To begin with, we present a lower bound on $h(S_2)$, which will be used throughout the proof:
\begin{equation}
\label{eq:h_S_2_main_tau_1}
        \begin{aligned}
         h(S_2) & \geq (1 - 1/e) h(S^\ast(\V \setminus \{v_1\}, k - \tau, 0)) \\ 
        & \geq (1 - 1/e) g_{\tau}(S^\ast(\V, k, \tau)),
    \end{aligned}
    \end{equation}
where the first inequality is from Lemma~\ref{lemma:classic_greedy} (where we replace $\V$ with $\V \setminus \{v_1\}$ and both $k$ and $i$ with $k-\tau$) and the second inequality is from Lemma~\ref{lemma:Optimal_value_from_subset} (where we replace $\V^\prime$ with $\{v_1\}$).

The proof proceeds as follows. Element $z$ is an element that will be removed, which can be either $v_1$ or an element in $S_2$. Therefore, we consider two cases: (I) $z = v_1$ and (II) $z \neq v_1$. 

In Case I, we have $z = v_1$, which implies the following: 
\begin{equation}
\label{eq:s_z_case_I_tau_1}
    \begin{aligned}
        h(S - \{z\})  & = h(S_2) \geq (1-1/e) g_{\tau}(S^\ast(\V, k, \tau)),
    \end{aligned}
\end{equation}
where the inequality follows from Eq.~\eqref{eq:h_S_2_main_tau_1}. 

In Case II, we have $z \neq v_1$ (or $z \in \V(S_2)$). Depending on the impact of removing element $z$, we consider two subcases: (II-a) $h(S_2) \leq h(S_2 - \{z\})$ and (II-b) $h(S_2) > h(S_2 - \{z\})$.
 
In Case II-a: we have $h(S_2) \leq h(S_2 - \{z\})$. In this case, the removal of element $z$ does not reduce the overall value of the remaining sequence $S_2 - \{z\}$. Then, we have
\begin{equation}
 \label{eq:s_z_case_II_a_tau_1}  
        h(S - \{z\}) = h((v_1) \oplus (S_2 - \{z\})) \stackrel{\text{(a)}} \geq h(S_2 - \{z\})
        \stackrel{\text{(b)}} \geq   h(S_2) 
        \stackrel{\text{(c)}} \geq  (1-1/e) g_{\tau}(S^\ast(\V, k, \tau)),
\end{equation}
where (a) is due to the backward monotonicity of function $h$, (b) holds from the condition of this subcase, and (c) follows from Eq.~\eqref{eq:h_S_2_main_tau_1}.

In Case II-b: we have $h(S_2) > h(S_2 - \{z\})$. Suppose $k=2$. Then, it is trivial that the sequence selected by Algorithm~\ref{alg:robustGreedy_tau_contiguous} (i.e., $S=(v_1) \oplus (z)$) yields an optimal solution. This is because removing element $z$ from $S$ gives $(v_1)$, which has the largest individual value among all the elements. Therefore, we assume $k>2$ throughout the rest of the proof. Let $\eta \triangleq \frac{h(S_2) - h(S_2 - \{z\})}{h(S_2)}$ denote the ratio of the loss of removing element $z$ from sequence $S_2$ to the value of sequence $S_2$, and we have $\eta \in (0, 1]$ due to $h(S_2) > h(S_2 - \{z\})$. We first state the following:
\begin{subequations}
\begin{align}
    &h(S - \{z\})  \ge \max \{\eta \cdot h(S_2), (1 - \eta) \cdot h(S_2)\} \label{eq:s_z_case_II_b_tau_1_assumption_a},\\
    & \max\{\eta, (1 - \eta) \} \geq \frac{1}{2} \label{eq:max_two_terms_assumption_a},\\
    & h(S_2) \geq  \frac{e^{\frac{k-2}{k-1}} - 1}{e^{\frac{k-2}{k-1}} - \eta} g_{\tau}(S^\ast(\V, k, \tau)) \label{eq:S_2_tau_1_large_k},\\
    &  \max \left\{ \eta \cdot \frac{e^{\frac{k-2}{k-1}} - 1}{e^{\frac{k-2}{k-1}} - \eta}, (1 - \eta) \cdot \frac{e^{\frac{k-2}{k-1}} - 1}{e^{\frac{k-2}{k-1}} - \eta} \right\} \geq \frac{e^{\frac{k-2}{k-1}} - 1}{2 (e^{\frac{k-2}{k-1}} -\frac{1}{2})} \label{eq:max_two_terms_assumption_a_2}.
\end{align}
\end{subequations}

We will prove Eqs.~\eqref{eq:s_z_case_II_b_tau_1_assumption_a}-\eqref{eq:max_two_terms_assumption_a_2} later; for now, we assume that they all hold. 
Then, we can obtain the following bound:
\begin{equation}
 \label{eq:s_z_case_II_b_tau_1_any_k}
    \begin{aligned}
        h(S - \{z\}) 
        & \geq \max\{\eta \cdot h(S_2),   (1 - \eta) \cdot h(S_2)\} \\
        & \geq \max\{\eta, 1 - \eta \} \cdot (1 - 1/e) g_{\tau}(S^\ast(\V, k, \tau)) \\
        & \geq \frac{e - 1}{2 e} g_{\tau}(S^\ast(\V, k, \tau)),
    \end{aligned}
\end{equation}
where the three inequalities are from Eqs.~\eqref{eq:s_z_case_II_b_tau_1_assumption_a}, \eqref{eq:h_S_2_main_tau_1}, and \eqref{eq:max_two_terms_assumption_a}, respectively. 

Similarly, we can also obtain the following bound:
\begin{equation}
   \label{eq:s_z_case_II_b_tau_1_large_k}
    \begin{aligned}
       h(S - \{z\})
       & \geq \max\{\eta \cdot h(S_2),   (1 - \eta) \cdot h(S_2)\} \\
       & \geq \max \left \{ \frac{\eta(e^{\frac{k-2}{k-1}} - 1)}{e^{\frac{k-2}{k-1}} - \eta}, \frac{(1 - \eta)(e^{\frac{k-2}{k-1}} - 1)}{e^{\frac{k-2}{k-1}} - \eta} \right\} g_{\tau}(S^\ast(\V, k, \tau)) \\
       & \geq \frac{e^{\frac{k-2}{k-1}} - 1}{2 (e^{\frac{k-2}{k-1}} -\frac{1}{2})} g_{\tau}(S^\ast(\V, k, \tau)),
       \end{aligned}
   \end{equation}
where the three inequalities are from Eqs.~\eqref{eq:s_z_case_II_b_tau_1_assumption_a}, \eqref{eq:S_2_tau_1_large_k}, and \eqref{eq:max_two_terms_assumption_a_2}, respectively.

Combining all the cases establishes an approximation ratio of Algorithm~\ref{alg:robustGreedy_tau_contiguous} and completes the proof. Specifically, combining the bounds in Eqs.~\eqref{eq:s_z_case_I_tau_1}, \eqref{eq:s_z_case_II_a_tau_1}, and \eqref{eq:s_z_case_II_b_tau_1_any_k} (resp., \eqref{eq:s_z_case_II_b_tau_1_large_k}) yields the bound in Eq.~\eqref{eq:tau1bound1} (resp., \eqref{eq:tau1bound2}).


Now, it remains to show that Eqs.~\eqref{eq:s_z_case_II_b_tau_1_assumption_a}-\eqref{eq:max_two_terms_assumption_a_2} hold in Case II-b, where we have $z \in \V(S_2)$ and $h(S_2) > h(S_2 - \{z\})$.
We first rewrite $S_2$ as $S_2=S_2^1 \oplus (z) \oplus S_2^2$, where $S_2^1$ and $S_2^2$ denote the subsequences of $S_2$ before and after element $z$, respectively. Note that $S_2^1$ or $S_2^2$ could be an empty sequence, depending on the position of $z$ in $S_2$. Then, we characterize $h((z))$ in terms of $h(S_2)$:
 \begin{equation}
 \label{eq:z2_tau_1_assumption_a}
    \begin{aligned}
        \eta \cdot h(S_2) & \stackrel{\text{(a)}} = h(S_2) - h(S_2 - \{z\}) \\
        & = h(S_2^1 \oplus (z) \oplus S_2^2) - h(S_2^1 \oplus S_2^2) \\
        & = h(S_2^1) + h((z)|S_2^1) +  h(S_2^2|S_2^1 \oplus (z))  - h(S_2^1) -  h(S_2^2|S_2^1) \\
        & = h((z)|S_2^1) +  h(S_2^2|S_2^1 \oplus (z)) -  h(S_2^2|S_2^1) \\
        & \leq h((z)|S_2^1) \\
        & \leq h((z)),
    \end{aligned}
 \end{equation}
  where (a) is from the definition of $\eta$ and the two inequalities are due to the sequence submodularity of function $h$. We are now ready to prove Eqs.~\eqref{eq:s_z_case_II_b_tau_1_assumption_a}-\eqref{eq:max_two_terms_assumption_a_2}.

To prove Eq.~\eqref{eq:s_z_case_II_b_tau_1_assumption_a}, we decompose it into two parts: (i) $h(S - \{z\})  \ge \eta \cdot h(S_2)$ and (ii) $h(S - \{z\}) \ge (1 - \eta) \cdot h(S_2)$.

Part (i) can be shown through the following:
\begin{equation}
        h(S - \{z\}) \stackrel{\text{(a)}} \geq h((v_1))  \stackrel{\text{(b)}} \geq h((z))  \stackrel{\text{(c)}} \geq \eta \cdot h(S_2),
\end{equation}
where (a) is form the forward monotonicity of function $h$, (b) is due to the greedy manner of Algorithm~\ref{alg:robustGreedy_tau_contiguous} (Lines~3-5), and (c) is from Eq.~\eqref{eq:z2_tau_1_assumption_a}.

Part (ii) can be shown through the following:
\begin{equation*}
   h(S - \{z\}) 
   = h((v_1) \oplus (S_2 - \{z\})) 
   \stackrel{\text{(a)}} \geq  h(S_2 - \{z\})  \stackrel{\text{(b)}} =  (1-\eta) \cdot h(S_2), 
 \end{equation*}
 where (a) is from the backward monotonicity of function $h$ and (b) is from the definition of $\eta$.
 
Eq.~\eqref{eq:max_two_terms_assumption_a} holds trivially for any $\eta \in (0, 1]$ by setting $\eta$ and $1-\eta$ to be equal and solving for $\eta$.

Next, we show that Eq.~\eqref{eq:S_2_tau_1_large_k} holds. Let $v^1_2$ denote the first element of sequence $S_2$. Then, we have the following: 
\begin{equation*}
    \begin{aligned}
        h((v^1_2)) & \stackrel{\text{(a)}} \geq h((z)) \stackrel{\text{(b)}}  \geq \eta \cdot h(S_2),
    \end{aligned}
\end{equation*}
where (a) holds because element $v^1_2$ has the largest individual value among all elements in $S_2$ and (b) follows from Eq.~\eqref{eq:z2_tau_1_assumption_a}. Then, we can characterize the value of $h(S_2)$ as follows:
 \begin{equation*}
    \begin{aligned}
        h(S_2) & \geq \frac{e^{\frac{k-2}{k-1}} - 1}{e^{\frac{k-2}{k-1}} -\eta} h(S^\ast(\V \setminus \{v_1\}, k - \tau, 0)) \\
        & \geq  \frac{e^{\frac{k-2}{k-1}} - 1}{e^{\frac{k-2}{k-1}} -\eta} g_{\tau}(S^\ast(\V, k, \tau)),
    \end{aligned}
 \end{equation*}
\noindent where the first inequality is from Lemma~\ref{lemma:value_concentration} (where we replace $\V$, $S$, $S_1$, $k$, $k^\prime$, and $c$ with $\V \setminus \{v_1\}$, $S_2$, $(v^1_2)$, $k-1$, $k-2$, and $\eta$, respectively) and the second inequality is from Lemma~\ref{lemma:Optimal_value_from_subset}.

Finally, we show that Eq.~\eqref{eq:max_two_terms_assumption_a_2} holds. We define two functions of $\eta$:
    $l_1(\eta) \triangleq \eta \cdot \frac{e^{\frac{k-2}{k-1}} - 1}{e^{\frac{k-2}{k-1}} - \eta}$ and $l_2(\eta) \triangleq (1 - \eta) \cdot \frac{e^{\frac{k-2}{k-1}} - 1}{e^{\frac{k-2}{k-1}} - \eta}$. 
It is easy to verify that for $k > 2$ and $\eta \in (0, 1]$, function $l_1(\eta)$ is monotonically increasing and function $l_2(\eta)$ is monotonically decreasing. Also, we have $l_1(\frac{1}{2}) = l_2(\frac{1}{2}) = \frac{e^{\frac{k-2}{k-1}} - 1}{2 (e^{\frac{k-2}{k-1}} -\frac{1}{2})}$. We consider two cases for $\eta$: $\eta \in [\frac{1}{2}, 1]$ and $\eta \in (0, \frac{1}{2}]$. For $\eta \in [\frac{1}{2}, 1]$, we have $\max\{l_1(\eta), l_2(\eta)\} \geq l_1(\eta) \geq l_1(\frac{1}{2})$ as $l_1(\eta)$ is monotonically increasing; for $\eta \in (0, \frac{1}{2}]$, we have $\max\{l_1(\eta), l_2(\eta)\} \geq l_2(\eta) \geq l_2(\frac{1}{2}) = l_1(\frac{1}{2})$ as $l_2(\eta)$ is monotonically decreasing. Therefore, for $\eta \in (0, 1]$, we have $\max\{l_1(\eta), l_2(\eta)\} \geq l_1(\frac{1}{2}) = \frac{e^{\frac{k-2}{k-1}} - 1}{2 (e^{\frac{k-2}{k-1}} -\frac{1}{2})}$. This gives Eq.~\eqref{eq:max_two_terms_assumption_a_2} and completes the proof.
\end{proof}

\subsection{Proof of Theorem \ref{theorem:robustGreedy_tau_contigous_case_a}}
\label{proof:robustGreedy_tau_contigous_case_a}
\begin{proof}
Suppose that function $h$ is forward-monotone, backward-monotone, and sequence-submodular (Assumption~\ref{assump:forward_backward_sequence}). We use Lemmas~\ref{lemma:classic_greedy}, \ref{lemma:value_concentration}, and \ref{lemma:Optimal_value_from_subset} presented in Appendix~\ref{subsec:preliminary_results} to prove that Algorithm~\ref{alg:robustGreedy_tau_contiguous} achieves an approximation ratio of $\max \left \{ \frac{(e - 1)^2}{e(2e-1)}, \frac{(e - 1)(e^{\frac{k-2\tau}{k-\tau}} -1)}{(2e - 1)e^{\frac{k-2\tau}{k-\tau}}- (e - 1)} \right \}$ in the case of $1 \leq \tau \leq k$, assuming the removal of $\tau$ contiguous elements. 

In Step 1 of Algorithm~\ref{alg:robustGreedy_tau_contiguous}, it is equivalent that sequence $S_1$ is selected by the SSG algorithm from set $\V$, and we have $|S_1| = \tau$. Similarly, in Step 2 of Algorithm~\ref{alg:robustGreedy_tau_contiguous}, it is also equivalent that sequence $S_2$ is selected by the SSG algorithm from set $\V \setminus \V(S_1)$, and we have $|S_2| = k-\tau$. Hence, the sequence selected by Algorithm~\ref{alg:robustGreedy_tau_contiguous} can be written as $S = S_1 \oplus S_2$. Recall that for any given sequence $S$, set $\Z_{\tau}(S)$ denotes the set of elements removed from sequence $S$ in the worst case (i.e., $\Z_{\tau}(S)$ is an optimal solution to Problem~\eqref{eq:Z_S}). We define $\Z^1_\tau(S) \triangleq \Z_\tau(S) \cap \V(S_1)$ and $\Z^2_\tau(S) \triangleq \Z_\tau(S) \cap \V(S_2)$ as the set of elements removed from subsequences $S_1$ and $S_2$, respectively.

The proof of Theorem~\ref{theorem:robustGreedy_tau_contigous_case_a} follows a similar line of analysis as in the proof of Theorem~\ref{theorem:robustGreedy_tau_1_case_a} for the case of $\tau=1$. Specifically, we will also consider three cases: (I) $\Z^2_\tau(S) = \emptyset$, (II-a) $\Z^2_\tau(S) \neq \emptyset$ and $h(S_2) \leq h(S_2 - \Z^2_\tau(S))$, and (II-b) $\Z^2_\tau(S) \neq \emptyset$ and $h(S_2) > h(S_2 - \Z^2_\tau(S))$. The proofs of Case I and Case II-a are almost the same as those in Theorem~\ref{theorem:robustGreedy_tau_1_case_a}, except for some minor technical differences. However, the proof of Case II-b is substantially different. The reason is the following. In the proof of Theorem~\ref{theorem:robustGreedy_tau_1_case_a}, only one element can be removed from $S$, and in Case II-b, this element has to be in $\V(S_2)$, which makes it easier to characterize the impact of removing such an element. In the case of $1 \leq \tau \leq k$, however, more than one element may be removed, which could be in either $\V(S_1)$ or $\V(S_2)$ or both. Therefore, we present a different approach to address this new technical challenge.

We want to show the following two bounds that establish the approximation ratio of Algorithm~\ref{alg:robustGreedy_tau_contiguous}:

\begin{subequations}
\begin{align}
    h(S - \Z_\tau(S)) &\ge \frac{(e - 1)^2}{e(2e-1)} g_{\tau}(S^\ast(\V, k, \tau)), \label{eq:tau_contigous_bound1} \\
    h(S - \Z_\tau(S)) &\ge \frac{(e - 1)(e^{\frac{k-2\tau}{k-\tau}} -1)}{(2e - 1)e^{\frac{k-2\tau}{k-\tau}}- (e - 1)}g_{\tau}(S^\ast(\V, k, \tau)). \label{eq:tau_contigous_bound2}
\end{align}
\end{subequations}

To begin with, we present a lower bound on $h(S_2)$, which will be used throughout the proof:
\begin{equation}
\label{eq:h_S_2_main_tau_contigous}
        \begin{aligned}
         h(S_2) & \geq (1 - 1/e) h(S^\ast(\V \setminus \V(S_1), k - \tau, 0)) \\ 
        & \geq (1 - 1/e) g_{\tau}(S^\ast(\V, k, \tau)),
    \end{aligned}
    \end{equation}
where the first inequality is from Lemma~\ref{lemma:classic_greedy} (where we replace $\V$, $k$, and $i$ with $\V \setminus \V(S_1)$, $k-\tau$, and $k-\tau$, respectively) and the second inequality is from Lemma~\ref{lemma:Optimal_value_from_subset} (where we replace $\V^\prime$ with $\V(S_1)$).

The proof proceeds as follows. Elements in $\Z_\tau(S)$ will be removed from sequence $S$. These elements can be either fully or partially in $\V(S_1)$ (i.e., $\Z^2_\tau(S) = \emptyset$ or $\Z^2_\tau(S) \neq \emptyset$). Therefore, we consider two cases: (I) $\Z^2_\tau(S) = \emptyset$ and (II) $\Z^2_\tau(S) \neq \emptyset$. 

In Case I, we have $\Z^2_\tau(S) = \emptyset$, i.e., $\Z_\tau(S) = \V(S_1)$. Then, we have the following: 
\begin{equation}
\label{eq:s_z_case_I_tau_congtigous}
    \begin{aligned}
        h(S - \Z_\tau(S)) & = h(S_2)  \geq (1 - 1/e) g_\tau(S^\ast(\V, k, \tau)),
    \end{aligned}
\end{equation}
where the inequality follows from Eq.~\eqref{eq:h_S_2_main_tau_contigous}. 

In Case II, we have $\Z^2_\tau(S) \neq \emptyset$.
Depending on the impact of removing elements in $\Z^2_\tau(S)$, we consider two subcases: (II-a) $h(S_2) \leq h(S_2 - \Z^2_\tau(S))$ and (II-b) $h(S_2) > h(S_2 - \Z^2_\tau(S))$.

In Case II-a: we have $h(S_2) \leq h(S_2 - \Z^2_\tau(S))$. In this case, the removal of elements in $\Z^2_\tau(S)$ does not reduce the overall value of the remaining sequence $S_2 - \Z^2_\tau(S)$. Then, we have
\begin{equation}
 \label{eq:s_z_case_II_a_tau_congtigous}  
    \begin{aligned}
        h(S - \Z_\tau(S)) & = h((S_1 - \Z^1_\tau(S)) \oplus (S_2 - \Z^2_\tau(S))) \\
        & \stackrel{\text{(a)}} \geq h(S_2 - \Z^2_\tau(S)) \\
        & \stackrel{\text{(b)}} \geq   h(S_2) \\
        & \stackrel{\text{(c)}} \geq  (1 - 1/e) g_\tau(S^\ast(\V, k, \tau)),
    \end{aligned}
\end{equation}
where (a) is due to the backward monotonicity of function $h$, (b) holds from the condition of this subcase, and (c) follows from Eq.~\eqref{eq:h_S_2_main_tau_contigous}.

In Case II-b: we have $h(S_2) > h(S_2 - \Z^2_\tau(S))$. Let $\tau_1 \triangleq |\Z^1_\tau(S)|$ and $\tau_2 \triangleq |\Z^2_\tau(S)|$. Then, we have $\tau=\tau_1+\tau_2$ and $k = |S| = |S_1| + |S_2| \ge \tau + \tau_2$. We consider two cases: $k = \tau + \tau_2$ and $k > \tau + \tau_2$. 

Suppose $k = \tau + \tau_2$. Then, it implies $\Z^2_\tau(S) = \V(S_2)$, i.e., all the elements in $S_2$ are removed. This further implies that the elements in $\Z^1_\tau(S)$ are at the end of sequence $S_1$ (due to the contiguous assumption of elements in $\Z_\tau(S)$). Let $S_1^{\tau_2} \preceq S_1$ denote the subsequence consisting of the first $\tau_2$ elements in $S_1$. It is easy to see $S_1 - \Z^1_\tau(S) = S_1^{\tau_2}$. Then, we have the following:
\begin{equation}
\label{eq:h_S_1}
\begin{aligned}
         h(S - \Z_\tau(S)) & = h((S_1 - \Z^1_\tau(S)) \oplus (S_2 - \Z^2_\tau(S))) \\
         & = h(S_1 - \Z^1_\tau(S)) \\
         & = h(S_1^{\tau_2}) \\
         & \geq (1 - 1/e) h(S^\ast(\V, \tau_2, 0)) \\ 
        & \geq (1 - 1/e) g_{\tau}(S^\ast(\V, k, \tau)),
    \end{aligned}
\end{equation}
where the first inequality is from Lemma~\ref{lemma:classic_greedy} (where we replace both $k$ and $i$ with $\tau_2$) and the second inequality is due to $\tau_2=k-\tau$ and Lemma~\ref{lemma:Optimal_value_from_subset} (where $\V^\prime$ is an empty set).

Now, suppose $k>\tau + \tau_2$. Let $\eta \triangleq \frac{h(S_2) - h(S_2 - \Z^2_\tau(S))}{h(S_2)}$ denote the ratio of the loss of removing elements in $\Z^2_\tau(S)$ from sequence $S_2$ to the value of sequence $S_2$, and we have $\eta \in (0, 1]$ due to $h(S_2) > h(S_2 - \Z^2_\tau(S))$. We first state the following:
\begin{subequations}
\begin{align}
    & h(S - \Z_\tau(S)) \geq \max\{\eta (1 - 1/e) h(S_2),  (1 - \eta) h(S_2)\} \label{eq:s_z_case_II_b_tau_contigous_assumption_a},\\
    & \max\{\eta (1 - 1/e),  (1 - \eta)\} \geq \frac{ e  - 1}{2e - 1}     \label{eq:max_two_terms_tau_contigous_assumption_a},\\
    & h(S_2) \geq \frac{e^{\frac{k-\tau - \tau_2}{k-\tau_2}} - 1}{e^{\frac{k-\tau - \tau_2}{k-\tau_2}} - \eta (1 - 1/e)} g_\tau(S^\ast(\V, k, \tau))  \label{eq:S_2_large_k_tau_contigous_assumption_a},\\
    & \max \left\{\frac{\eta(1 - 1/e)(e^{\frac{k-\tau - \tau_2}{k-\tau_2}} - 1)}{e^{\frac{k-\tau - \tau_2}{k-\tau_2}} - \eta (1 - 1/e)},  \frac{(1 - \eta)(e^{\frac{k-\tau - \tau_2}{k-\tau_2}} - 1)}{e^{\frac{k-\tau - \tau_2}{k-\tau_2}} - \eta (1 - 1/e)} \right\} \geq \frac{(e - 1)(e^{\frac{k-\tau - \tau_2}{k-\tau_2}} -1)}{(2e - 1)e^{\frac{k-\tau - \tau_2}{k-\tau_2}}- (e - 1)} \label{eq:max_two_terms_tau_contigous_2_assumption_a}.
\end{align}
\end{subequations}

We will prove Eqs.~\eqref{eq:s_z_case_II_b_tau_contigous_assumption_a}-\eqref{eq:max_two_terms_tau_contigous_2_assumption_a} later; for now, we assume that they all hold. Then, we can obtain the following bound:
\begin{equation}
 \label{eq:s_z_case_II_b_tau_congtigous_any_k}
    \begin{aligned}
        h(S - \Z_\tau(S))&\geq \max\{\eta (1 - 1/e) h(S_2),   (1 - \eta) h(S_2)\} \\
        & \geq \max\{\eta (1 - 1/e), (1 - \eta) \} \cdot (1 - 1/e) g_\tau(S^\ast(\V, k, \tau)) \\
        & \geq \frac{ (e - 1)^2}{e(2e-1)} g_\tau(S^\ast(\V, k, \tau)),
    \end{aligned}
\end{equation}
where the three inequalities are from Eqs.~\eqref{eq:s_z_case_II_b_tau_contigous_assumption_a}, \eqref{eq:h_S_2_main_tau_contigous}, and \eqref{eq:max_two_terms_tau_contigous_assumption_a}, respectively.  

Similarly, we can also obtain the following bound:
\begin{equation}
   \label{eq:s_z_case_II_b_tau_congtigous_large_k}
    \begin{aligned}
       h(S - \Z_\tau(S)) &\geq \max\{\eta(1 - 1/e) h(S_2),   (1 - \eta) h(S_2)\} \\
       & \geq \max \left\{\frac{\eta(e - 1)(e^{\frac{k-\tau - \tau_2}{k-\tau_2}} - 1)}{e(e^{\frac{k-\tau - \tau_2}{k-\tau_2}} - \eta (1 - 1/e))},  \frac{(1 - \eta)(e^{\frac{k-\tau - \tau_2}{k-\tau_2}} - 1)}{e^{\frac{k-\tau - \tau_2}{k-\tau_2}} - \eta (1 - 1/e)} \right\} g_\tau(S^\ast(\V, k, \tau)) \\
       & \geq \frac{(e - 1)(e^{\frac{k-\tau - \tau_2}{k-\tau_2}} -1)}{(2e - 1)e^{\frac{k-\tau - \tau_2}{k-\tau_2}}- (e - 1)} g_\tau(S^\ast(\V, k, \tau)) \\
       & \geq \frac{(e - 1)(e^{\frac{k-2\tau}{k-\tau}} -1)}{(2e - 1)e^{\frac{k-2\tau}{k-\tau}}- (e - 1)} g_\tau(S^\ast(\V, k, \tau)),
       \end{aligned}
   \end{equation}
where the first three inequalities are from Eqs.~\eqref{eq:s_z_case_II_b_tau_contigous_assumption_a}, \eqref{eq:S_2_large_k_tau_contigous_assumption_a}, and \eqref{eq:max_two_terms_tau_contigous_2_assumption_a}, respectively, and the last inequality is due to $\tau_2 \leq \tau$ and that $\frac{(e - 1)(e^{\frac{k-\tau - \tau_2}{k-\tau_2}} -1)}{(2e - 1)e^{\frac{k-\tau - \tau_2}{k-\tau_2}}- (e - 1)}$ is a decreasing function of $\tau_2$. 

Combining all the cases establishes the approximation ratios of Algorithm~\ref{alg:robustGreedy_tau_contiguous} and completes the proof. Specifically, combining the bounds in Eqs.~\eqref{eq:s_z_case_I_tau_congtigous}, \eqref{eq:s_z_case_II_a_tau_congtigous}, \eqref{eq:h_S_1}, and \eqref{eq:s_z_case_II_b_tau_congtigous_any_k} (resp., \eqref{eq:s_z_case_II_b_tau_congtigous_large_k}) yields the bound in Eq.~\eqref{eq:tau_contigous_bound1} (resp., \eqref{eq:tau_contigous_bound2}).

Now, it remains to show that Eqs.~\eqref{eq:s_z_case_II_b_tau_contigous_assumption_a}-\eqref{eq:max_two_terms_tau_contigous_2_assumption_a} hold in Case II-b, where we have $\Z^2_\tau(S) \neq \emptyset$ and $h(S_2) > h(S_2 - \Z^2_\tau(S))$. Recall that the $\tau$ elements in $\Z_\tau(S)$ form a contiguous subsequence of $S$. Then, the elements in $\Z^1_\tau(S)$ and $\Z^2_\tau(S)$ also form a contiguous subsequence of $S_1$ and $S_2$, respectively. We use $Z_1$ and $Z_2$ to denote the contiguous subsequence of elements in $\Z^1_\tau(S)$ and $\Z^2_\tau(S)$, respectively. We first rewrite $S_2$ as $S_2=S_2^1 \oplus Z_2 \oplus S_2^2$, where $S_2^1$ and $S_2^1$ denote the subsequences of $S_2$ before and after subsequence $Z_2$, respectively. Note that $S_2^1$ or $S_2^2$ could be an empty sequence, depending on the position of subsequence $Z_2$ in $S_2$. Then, we characterize $h(Z_2)$ in terms of $h(S_2)$:
 \begin{equation}
 \label{eq:z2_tau_contigous_assumption_a}
    \begin{aligned}
        \eta \cdot h(S_2) & \stackrel{\text{(a)}} = h(S_2) - h(S_2 - \Z^2_\tau(S)) \\
        & = h(S_2^1 \oplus Z_2 \oplus S_2^2) - h(S_2^1 \oplus S_2^2) \\
        & = h(S_2^1) + h(Z_2|S_2^1) +  h(S_2^2|S_2^1 \oplus Z_2)  - h(S_2^1) -  h(S_2^2|S_2^1) \\
        & = h(Z_2|S_2^1) +  h(S_2^2|S_2^1 \oplus Z_2) -  h(S_2^2|S_2^1) \\
        & \leq h(Z_2|S_2^1) \\
        & \leq h(Z_2),
    \end{aligned}
 \end{equation}
 where (a) is from the definition of $\eta$ and the two inequalities are due the sequence submodularity of function $h$. We are now ready to prove Eqs.~\eqref{eq:s_z_case_II_b_tau_contigous_assumption_a}-\eqref{eq:max_two_terms_tau_contigous_2_assumption_a}.

To prove Eq.~\eqref{eq:s_z_case_II_b_tau_contigous_assumption_a}, we decompose it into two parts: (i) $h(S - \Z_\tau(S))  \ge \eta (1 - 1/e) h(S_2)$ and (ii) $h(S - \Z_\tau(S)) \ge (1 - \eta) h(S_2)$.

Recall that $S_1^{\tau_2}$ denote the subsequence consisting of the first $\tau_2$ elements in $S_1$. Then, Part (i) can be shown through the following: 
\begin{equation}
\label{eq:s_z_case_II_b_tau_contigous_assumption_a_1}
    \begin{aligned}
        h(S - \Z_\tau(S)) & = h((S_1 - \Z^1_\tau(S)) \oplus (S_2 - \Z^2_\tau(S))) \\
        & \stackrel{\text{(a)}} \geq h(S_1 - \Z^1_\tau(S)) \\
        & = h(S_1^{\tau_2}) \\
        & \stackrel{\text{(b)}} \geq (1 - 1/e) h(S^\ast(\V, \tau_2, 0)) \\
        & \stackrel{\text{(c)}} \geq (1 - 1/e) h(Z_2)\\
        & \stackrel{\text{(d)}} \geq \eta (1 - 1/e) h(S_2),
    \end{aligned}
 \end{equation}
\noindent where (a) is due to the forward monotonicity of function $h$, (b) is due to the greedy manner of selecting subsequence $S_1^{\tau_2}$ from set $\V$ and Lemma~\ref{lemma:classic_greedy} (where we replace both $k$ and $i$ with $\tau_2$), (c) holds because sequence $Z_2$ is a feasible solution to Problem~\eqref{eq:sequence_selection} for selecting a sequence of $\tau_2$ elements from $\V$, and (d) is from Eq.~\eqref{eq:z2_tau_contigous_assumption_a}.

Part (ii) can be shown through the following:
\begin{equation}
\label{eq:s_z_case_II_b_tau_contigous_assumption_a_2}
    \begin{aligned}
   h(S - \Z_\tau(S)) &= h((S_1 - \Z^1_\tau(S)) \oplus (S_2 - \Z^2_\tau(S))) \\ &\stackrel{\text{(a)}} \geq  h(S_2 - \Z^2_\tau(S)) \\ &\stackrel{\text{(b)}} =  (1-\eta) h(S_2), 
    \end{aligned}
 \end{equation}
 where (a) is from the backward monotonicity of function $h$ and (b) is from the definition of $\eta$.
 
Eq.~\eqref{eq:max_two_terms_tau_contigous_assumption_a} holds trivially for any $\eta \in (0, 1]$ by setting $\eta (1-1/e)$ and $1-\eta$ to be equal, solving for $\eta$, and plugging it back.

Next, we show that Eq.~\eqref{eq:S_2_large_k_tau_contigous_assumption_a} holds. Let $S_{2}^{\tau_2} \preceq S_2$ denote the subsequence consisting of the first $\tau_2$ elements of sequence $S_2$. Then, we have the following:
  \begin{equation}
    \begin{aligned}
    h(S_{2}^{\tau_2})& \stackrel{\text{(a)}} \geq (1 - 1/e) h(S^\ast(\V \setminus \V(S_1), \tau_2, 0)) \\
        & \stackrel{\text{(b)}} \geq (1 - 1/e) h(Z_2)\\
        & \stackrel{\text{(c)}} \geq \eta (1 - 1/e) h(S_2),
    \end{aligned}
 \end{equation}
where (a) is due to the greedy manner of selecting subsequence $S_{2}^{\tau_2}$ from set $\V \setminus \V(S_1)$ and Lemma~\ref{lemma:classic_greedy} (where we replace $\V$, $k$, and $i$ with $\V \setminus \V(S_1)$, $\tau_2$, and $\tau_2$, respectively), (b) holds because sequence $Z_2$ is a feasible solution to Problem \eqref{eq:sequence_selection} for selecting a sequence of $\tau_2$ elements from $\V = \V \setminus \V(S_1)$, and (c) is from Eq.~\eqref{eq:z2_tau_contigous_assumption_a}. Therefore, we can characterize the value of $h(S_2)$ as follows:
 \begin{equation}
    \begin{aligned}
        h(S_2) & \geq  \frac{e^{\frac{k-\tau - \tau_2}{k-\tau_2}} - 1}{e^{\frac{k-\tau - \tau_2}{k-\tau_2}} - \eta (1 - 1/e)} h(S^\ast(\V \setminus \V(S_1), k - \tau, 0)) \\
        & \geq \frac{e^{\frac{k-\tau - \tau_2}{k-\tau_2}} - 1}{e^{\frac{k-\tau - \tau_2}{k-\tau_2}} - \eta (1 - 1/e)} g_\tau(S^\ast(\V, k, \tau)),
    \end{aligned}
 \end{equation}
\noindent  where the first inequality is from Lemma~\ref{lemma:value_concentration} (where we replace $\V$, $S$, $S_1$, $k$, $k^\prime$, and $c$ with $\V \setminus \V(S_1)$, $S_2$, $S_{2}^{\tau_2}$, $k-\tau$, $k-\tau-\tau_2$, and $\eta (1 - 1/e)$, respectively), and the second inequality is from Lemma~\ref{lemma:Optimal_value_from_subset}.

Finally, we show that Eq.~\eqref{eq:max_two_terms_tau_contigous_2_assumption_a} holds. We define two functions of $\eta$:
\begin{equation*}
l_1(\eta) = \eta (1 - 1/e) \cdot \frac{e^{\frac{k-\tau - \tau_2}{k-\tau_2}} - 1}{e^{\frac{k-\tau - \tau_2}{k-\tau_2}} - \eta (1 - 1/e)} ~\text{and}~ l_2(\eta) = (1 - \eta) \cdot \frac{(e^{\frac{k-\tau - \tau_2}{k-\tau_2}} - 1)}{e^{\frac{k-\tau - \tau_2}{k-\tau_2}} - \eta (1 - 1/e)}.
\end{equation*}
It is easy to verify that for $k > \tau + \tau_2$ and $\eta \in (0, 1]$, function $l_1(\eta)$ is monotonically increasing and function $l_2(\eta)$ is monotonically decreasing. 
Also, we have $l_1(\frac{e}{2e - 1}) = l_2(\frac{e}{2e - 1}) = \frac{(e - 1)(e^{\frac{k-\tau - \tau_2}{k-\tau_2}} -1)}{(2e - 1)e^{\frac{k-\tau - \tau_2}{k-\tau_2}}- (e - 1)}$.
We consider two cases for $\eta$: $\eta \in [\frac{e}{2e - 1}, 1]$ and $\eta \in (0, \frac{e}{2e - 1}]$. For $\eta \in [\frac{e}{2e - 1}, 1]$, we have $\max\{l_1(\eta), l_2(\eta)\} \geq l_1(\eta) \geq l_1(\frac{e}{2e - 1})$ as $l_1(\eta)$ is monotonically increasing; for $\eta \in (0, \frac{e}{2e - 1}]$, we have $\max\{l_1(\eta), l_2(\eta)\} \geq l_2(\eta) \geq l_2(\frac{e}{2e - 1}) = l_1(\frac{e}{2e - 1})$ as $l_2(\eta)$ is monotonically decreasing. Therefore, for $\eta \in (0, 1]$, we have $\max\{l_1(\eta), l_2(\eta)\} \geq l_1(\frac{e}{2e - 1}) = \frac{(e - 1)(e^{\frac{k-\tau - \tau_2}{k-\tau_2}} -1)}{(2e - 1)e^{\frac{k-\tau - \tau_2}{k-\tau_2}}- (e - 1)}$. This gives Eq.~\eqref{eq:max_two_terms_tau_contigous_2_assumption_a} and completes the proof.
\end{proof}

\subsection{Proof of Theorem \ref{theorem:robustGreedy_general_tau_case_a}}
\label{proof:robustGreedy_general_tau_case_a}
\begin{proof}
Suppose that function $h$ is forward-monotone, backward-monotone, and general-sequence-submodular (Assumption~\ref{assump:forward_backward_general}). We use Lemmas~\ref{lemma:classic_greedy} and \ref{lemma:Optimal_value_from_subset} presented in Appendix~\ref{subsec:preliminary_results} to prove that Algorithm \ref{alg:robustGreedy_tau_general} achieves an approximation ratio of $\frac{1- 1/e}{1 + \tau}$ in the case of $1 \leq \tau \leq k$, assuming the removal of an arbitrary subset of $\tau$ selected elements, which are not necessarily contiguous.

In Step 1 of Algorithm~\ref{alg:robustGreedy_tau_general}, sequence $S_1$ is selected by choosing $\tau$ elements with the highest individual values in a greedy manner, and we have $|S_1| = \tau$. In Step 2 of Algorithm~\ref{alg:robustGreedy_tau_contiguous}, it is equivalent that sequence $S_2$ is selected by the SSG algorithm from set $\V \setminus \V(S_1)$, and we have $|S_2| = k-\tau$. Hence, the sequence selected by Algorithm~\ref{alg:robustGreedy_tau_contiguous} can be written as $S = S_1 \oplus S_2$. Recall that for any given sequence $S$, set $\Z_{\tau}(S)$ denotes the set of elements removed from sequence $S$ in the worst case (i.e., $\Z_{\tau}(S)$ is an optimal solution to Problem~\eqref{eq:Z_S}). We define $\Z^1_\tau(S) \triangleq \Z_\tau(S) \cap \V(S_1)$ and $\Z^2_\tau(S) \triangleq \Z_\tau(S) \cap \V(S_2)$ as the set of elements removed from subsequences $S_1$ and $S_2$, respectively.

The proof of Theorem~\ref{theorem:robustGreedy_general_tau_case_a} follows a similar line of analysis as in the proof of Theorem~\ref{theorem:robustGreedy_tau_contigous_case_a}. Specifically, we will also consider three cases: (I) $\Z^2_\tau(S) = \emptyset$, (II-a) $\Z^2_\tau(S) \neq \emptyset$ and $h(S_2) \leq h(S_2 - \Z^2_\tau(S))$, and (II-b) $\Z^2_\tau(S) \neq \emptyset$ and $h(S_2) > h(S_2 - \Z^2_\tau(S))$. The proofs of Case I and Case II-a are identical to those in Theorem~\ref{theorem:robustGreedy_tau_contigous_case_a}. Therefore, we focus on Case~II-b, which requires a different proof strategy.

We want to show the following bound that establishes the approximation ratio of Algorithm~\ref{alg:robustGreedy_tau_general}:
\begin{equation}
\label{eq:tau_general_bound1}
    h(S - \Z_\tau(S)) \ge \frac{1- 1/e}{1 + \tau} g_{\tau}(S^\ast(\V, k, \tau)).
\end{equation}

To begin with, we present a lower bound on $h(S_2)$, which will be used throughout the proof:
\begin{equation}
\label{eq:h_S_2_main_tau_general}
        \begin{aligned}
         h(S_2) & \geq (1 - 1/e) h(S^\ast(\V \setminus \V(S_1), k - \tau, 0)) \\ 
        & \geq (1 - 1/e) g_{\tau}(S^\ast(\V, k, \tau)),
    \end{aligned}
    \end{equation}
where the first inequality is from Lemma~\ref{lemma:classic_greedy} (where we replace $\V$, $k$, and $i$ with $\V \setminus \V(S_1)$, $k-\tau$, and $k-\tau$, respectively) and the second inequality is from Lemma~\ref{lemma:Optimal_value_from_subset} (where we replace $\V^\prime$ with $\V(S_1)$).

We now focus on Case II-b, where we have $\Z^2_\tau(S) \neq \emptyset$ and $h(S_2) > h(S_2 - \Z^2_\tau(S))$. Let $\eta \triangleq \frac{h(S_2) - h(S_2 - \Z^2_\tau(S))}{h(S_2)}$ denote the ratio of the loss of removing elements in $\Z^2_\tau(S)$ from sequence $S_2$ to the value of sequence $S_2$, and we have $\eta \in (0, 1]$ due to $h(S_2) > h(S_2 - \Z^2_\tau(S))$. We first state the following:
\begin{subequations}
\begin{align}
    & h(S - \Z_\tau(S)) \geq \max\{\frac{ \eta}{\tau} \cdot h(S_2),  (1 - \eta) \cdot h(S_2)\}, \label{eq:s_z_case_II_b_tau_general_assumption_a}\\
    & \max\{\frac{ \eta}{\tau}, (1 - \eta)\} \geq  \frac{1}{\tau + 1}.     \label{eq:max_two_terms_tau_general_assumption_a}
\end{align}
\end{subequations}

We will prove Eqs.~\eqref{eq:s_z_case_II_b_tau_general_assumption_a} and \eqref{eq:max_two_terms_tau_general_assumption_a} later; for now, we assume that they both hold. Then, we can obtain the following:
\begin{equation}
\label{eq:s_z_any_k_assumption_a_general_tau}
    \begin{aligned}
        h(S - \Z_\tau(S)) & \geq \max\{\frac{ \eta}{\tau} \cdot h(S_2),  (1 - \eta) \cdot h(S_2)\} \\
        & \geq \max\{\frac{ \eta}{\tau},  (1 - \eta)\} \cdot (1 - 1/e)g_\tau(S^\ast(\V, k, \tau))  \\
        & \geq  \frac{1 - 1/e}{\tau+1} g_\tau(S^\ast(\V, k, \tau)),
    \end{aligned}
\end{equation}
where the three inequalities are from Eqs.~\eqref{eq:s_z_case_II_b_tau_general_assumption_a}, \eqref{eq:h_S_2_main_tau_general}, and \eqref{eq:max_two_terms_tau_general_assumption_a}, respectively.  

Now, we show that Eqs.~\eqref{eq:s_z_case_II_b_tau_general_assumption_a} and \eqref{eq:max_two_terms_tau_general_assumption_a} hold. We start by characterizing the value of elements in $\Z^2_\tau(S)$. Let $\tau_2 \triangleq |\Z^2_\tau(S)|$, and let the elements in $\Z^2_\tau(S)$ be denoted by $z_1, z_2, \dots, z_{\tau_2}$ according to their order in sequence $S_2$. Then, we can rewrite $S_2$ as  $S_2 = S_2^1 \oplus (z_1) \oplus S_2^2 \oplus (z_2) \oplus \dots \oplus S_2^{\tau_2 + 1}$, where $S_2^i$ is the subsequence between elements $z_{i-1}$ and $z_i$, for $i = 1,2, \dots, \tau_2 + 1$, and both $z_0$ and $z_{\tau+1}$ are an empty sequence. Note that subsequence $S_2^i$ could be an empty sequence, for $i = 1, 2, \dots, \tau_2+1$. We characterize the value of elements in $\Z^2_\tau(S)$ in the following:
 \begin{equation}
 \label{eq:z2_general_tau}
    \begin{aligned}
        \eta \cdot h(S_2)  \stackrel{\text{(a)}}=& h(S_2) - h(S_2 - \Z^2_\tau(S)) \\
        =& h(S_2^1) + h((z_1)|S_2^1) +  h(S_2^2|S_2^1 \oplus (z_1)) + \dots   \\
        & + h(S_2^{\tau_2+1}|S_2^1 \oplus (z_1) \oplus \dots \oplus S_2^{\tau_2} \oplus (z_{\tau_2}))  \\
        & -  h(S_2^1) - h(S_2^2|S_2^1) - \dots   - h(S_2^{\tau_2+1}|S_2^1 \oplus \dots \oplus S_2^{\tau_2}) \\
        =& \sum_{i =1}^{\tau_2} h((z_i)|S_2^1 \oplus (z_1) \oplus \dots \oplus (z_{i-1}) \oplus S_2^{i})   + h(S_2^2|S_2^1 \oplus (z_1)) - h(S_2^2|S_2^1) \\
        & + \dots + h(S_2^{\tau_2+1}|S_2^1 \oplus (z_1) \oplus \dots \oplus S_2^{\tau_2} \oplus (z_{\tau_2}))  - 
        h(S_2^{\tau_2+1}|S_2^1 \oplus \dots \oplus S_2^{\tau_2}) \\
        \leq& \sum_{i =1}^{\tau_2} h((z_i)|S_2^1 \oplus (z_1) \oplus \dots \oplus (z_{i-1}) \oplus S_2^{i}) \\
        \leq& \sum_{i =1}^{\tau_2} h((z_i)),
    \end{aligned}
 \end{equation}
 where (a) is from the definition of $\eta$ and the two inequalities are due to the general sequence submodularity of function $h$.

To prove Eq.~\eqref{eq:s_z_case_II_b_tau_general_assumption_a}, we decompose it into two parts: (i) $h(S - \Z_\tau(S))  \ge \frac{\eta}{\tau} \cdot h(S_2)$ and (ii) $h(S - \Z_\tau(S)) \ge (1 - \eta) \cdot h(S_2)$.

Let $v^\prime$ denote the first element in $S_1 - \Z^1_\tau(S)$. Then, Part (i) can be shown through the following:
 \begin{equation*}
        h(S - \Z_\tau(S)) \stackrel{\text{(a)}} \geq h((v^\prime)) 
        \stackrel{\text{(b)}} \geq \frac{1}{\tau_2}  \sum_{i =1}^{\tau_2} h((z_i))
        \stackrel{\text{(c)}} \geq \frac{\eta}{\tau_2} \cdot  h(S_2)
        \stackrel{\text{(d)}} \geq \frac{\eta}{\tau} \cdot h(S_2),
 \end{equation*}  
where (a) is due to the forward monotonicity of function $h$, (b) is due to $h(v^\prime) \geq h(z_i)$ for any $i = 1,2, \dots, \tau_2$, (c) is from Eq.~\eqref{eq:z2_general_tau}, and (d) is due to $\tau_2 \leq \tau$.

Part (ii) can be shown through the following:
\begin{equation*}
    \begin{aligned}
   h(S - \Z_\tau(S)) = h((S_1 - \Z^1_\tau(S)) \oplus (S_2 - \Z^2_\tau(S))) \stackrel{\text{(a)}} \geq h(S_2 - \Z^2_\tau(S)) \stackrel{\text{(b)}} = (1-\eta) \cdot h(S_2),
    \end{aligned}
 \end{equation*}
 where (a) is from the backward monotonicity of function $h$ and (b) is from the definition of $\eta$.
 
 Eq.~\eqref{eq:max_two_terms_tau_general_assumption_a} holds trivially for any $\eta \in (0, 1]$ by setting $\frac{\eta}{\tau}$ and $1-\eta$ to be equal, solving for $\eta$, and plugging it back. This completes the proof.
\end{proof}

\subsection{Preliminary Results: Approximate Sequence Submodular Maximization}
\label{subsec:approximate_preliminary_results}
In this section, we generalize the results of Lemmas~\ref{lemma:classic_greedy} and \ref{lemma:value_concentration} under weaker assumptions based on approximate versions of sequence submodularity and backward monotonicity. Specifically, we introduce two weaker assumptions that will be used in this section. 



\begin{assumption}
\label{assump:forward_alpha_mu1}
Function $h$ is forward-monotone, $\alpha$-backward-monotone, and $\mu_1$-element-sequence-submodular.
\end{assumption}

\begin{assumption}
\label{assump:forward_backward_mu1}
Function $h$ is forward-monotone, backward-monotone, and $\mu_1$-element-sequence-submodular.
\end{assumption}

We first generalize the approximation result of the SSG algorithm (Algorithm~\ref{alg:classicGreedy}) under Assumption~\ref{assump:forward_alpha_mu1}. Let sequence $S$ with $|S|=k$ be the one selected by the SSG algorithm, let $S^i$ be the sequence consisting of the first $i$ elements of sequence $S$ with $1 \leq i \leq k$. We borrow the result of \cite{sallam2019placement} and restate a generalized version in Lemma~\ref{lemma:approximate_classic_greedy}. This is also a generalization of Lemma~\ref{lemma:classic_greedy}. While \cite{sallam2019placement} assumes that function $h$ is forward-monotone, $\alpha$-backward-monotone, and element-sequence-submodular and considers the case of $i = k$ only, the generalization to the result in Lemma~\ref{lemma:approximate_classic_greedy} is fairly straightforward. Therefore, we refer the interested reader to \cite{sallam2019placement} for the proof.

 \begin{lemma}[Theorem~1 of \cite{sallam2019placement}]
        \label{lemma:approximate_classic_greedy}
       Consider $1 \le i \leq k$. Under Assumption~\ref{assump:forward_alpha_mu1}, we have $h(S^i) \geq \alpha (1 - \frac{1}{e^{ \mu_1 \cdot \frac{i}{k}}}) h(S^\ast(\V, k, 0))$.  
\end{lemma}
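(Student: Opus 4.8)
The plan is to run the classical greedy potential argument of Nemhauser--Wolsey--Fisher, adapted to sequences, while tracking the two degradation factors: $\mu_1$ coming from $\mu_1$-element-sequence-submodularity and $\alpha$ coming from $\alpha$-backward-monotonicity. Write $S^\ast \triangleq S^\ast(\V, k, 0)$; by forward monotonicity we may assume $|S^\ast| = k$. Let $S^j$ denote the greedy sequence after $j$ iterations (so $S^0 = ()$ and $h(S^0) = 0$), and let $v_j$ be the element appended at iteration $j$, i.e.\ $S^j = S^{j-1} \oplus (v_j)$ with $v_j \in \argmax_{v \in \V \setminus \V(S^{j-1})} h((v)\mid S^{j-1})$.

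The core step is the per-iteration progress bound $h((v_j)\mid S^{j-1}) \ge \frac{\mu_1}{k}\bigl(h(S^{j-1}\oplus S^\ast) - h(S^{j-1})\bigr)$. To prove it, write $S^\ast - \V(S^{j-1}) = (w_1,\dots,w_m)$; since $|\V(S^{j-1})| = j-1 < k = |\V(S^\ast)|$ we have $1 \le m \le k$, and every $w_t$ lies in $\V \setminus \V(S^{j-1})$, hence is a candidate for the greedy choice at iteration $j$. Telescoping the definition of the marginal value gives $h(S^\ast\mid S^{j-1}) = \sum_{t=1}^m h\bigl((w_t)\mid S^{j-1}\oplus(w_1,\dots,w_{t-1})\bigr)$; since $S^{j-1} \preceq S^{j-1}\oplus(w_1,\dots,w_{t-1})$, Eq.~\eqref{eq:element_sequence_submodular_approximate} bounds each summand by $\tfrac1{\mu_1} h((w_t)\mid S^{j-1}) \le \tfrac1{\mu_1} h((v_j)\mid S^{j-1})$, and summing the $m \le k$ terms yields the claim. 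Now invoke $\alpha$-backward-monotonicity (Eq.~\eqref{eq:backward_monotone_approximate}), $h(S^{j-1}\oplus S^\ast) \ge \alpha\,h(S^\ast)$, to get the recursion
\[
 h(S^j) = h(S^{j-1}) + h((v_j)\mid S^{j-1}) \;\ge\; \Bigl(1-\tfrac{\mu_1}{k}\Bigr)h(S^{j-1}) + \tfrac{\mu_1\alpha}{k}\,h(S^\ast),
\]
whose coefficient $1-\mu_1/k$ is nonnegative because $\mu_1 \in (0,1]$ and $k \ge 1$.

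Unrolling this recursion from $j = 0$ with $h(S^0) = 0$ and summing the resulting geometric series gives $h(S^i) \ge \alpha\bigl(1 - (1-\mu_1/k)^i\bigr) h(S^\ast)$; the elementary inequality $1-\mu_1/k \le e^{-\mu_1/k}$ then yields $h(S^i) \ge \alpha\bigl(1 - e^{-\mu_1 i/k}\bigr) h(S^\ast)$, which is the claim. The only part that requires genuine care --- though it remains routine --- is the per-iteration bound: one must check that the remaining optimal elements $S^\ast - \V(S^{j-1})$ are all legitimate greedy candidates and number at most $k$, and that each $\preceq$-comparison in the telescoped sum is between the prefix $S^{j-1}$ and a longer sequence extending it, so that the basic form Eq.~\eqref{eq:element_sequence_submodular_approximate} (rather than its general-subsequence variant) suffices. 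Setting $\mu_1 = \alpha = 1$ recovers Lemma~\ref{lemma:classic_greedy}, and taking $i = k$ recovers the statement of \cite{sallam2019placement}.
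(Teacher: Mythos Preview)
Your proof is correct and follows essentially the same approach the paper has in mind: the paper does not reprove this lemma but defers to \cite{sallam2019placement}, and the argument there (mirrored in the paper's own proofs of Lemmas~\ref{lemma:approximate_node_marginal_value} and \ref{lemma:approximate_S1_given_S2}) is exactly the greedy potential argument you give --- telescope $h(S^\ast\mid S^{j-1})$, apply $\mu_1$-element-sequence-submodularity and the greedy choice to get the per-step bound, invoke $\alpha$-backward-monotonicity, and unroll the recursion. Your inline derivation of the per-iteration bound is what the paper packages separately as Lemma~\ref{lemma:approximate_node_marginal_value}.
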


We also generalize the result of Lemma~\ref{lemma:value_concentration} and present the generalized result in Lemma~\ref{lemma:approximate_value_concentration}. Note that the current proof techniques we use could only lead to the generalization under Assumption~\ref{assump:forward_backward_mu1}, which is slightly stronger than Assumption~\ref{assump:forward_alpha_mu1} (backward-monotone vs. $\alpha$-backward-monotone). 

\begin{lemma}
\label{lemma:approximate_value_concentration}
Consider $c \in (0, 1]$ and $1 \le k^\prime \le k$. Suppose that the sequence selected by the SSG algorithm is $S$ with $|S|=k$ and that there exists a sequence $S_1$ with $|S_1| = k - k^\prime$ such that $S_1 \preceq S$ and $h(S_1) \geq c \cdot h(S)$. Then, under Assumption~\ref{assump:forward_backward_mu1},  we have $h(S) \geq \frac{e^{ \mu_1 \cdot \frac{ k^\prime}{k}} - 1}{e^{ \mu_1 \cdot \frac{k^\prime}{k}}-c}~ h(S^\ast(\V, k, 0))$.
\end{lemma}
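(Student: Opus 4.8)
The plan is to re-run the proof of Lemma~\ref{lemma:value_concentration} while carrying a factor of $\mu_1$ through every place where element-sequence-submodularity is invoked; the exponent $k'/k$ simply becomes $\mu_1 k'/k$, exactly as in the passage from Lemma~\ref{lemma:classic_greedy} to Lemma~\ref{lemma:approximate_classic_greedy}. Three ingredients have to be adapted: a $\mu_1$-version of the averaging bound (Lemma~\ref{lemma:node_marginal_value}), a $\mu_1$-version of Lemma~\ref{lemma:S1_given_S2}, and the final bookkeeping that combines them with the hypothesis $h(S_1) \geq c\, h(S)$.

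First I would prove a $\mu_1$-analog of Lemma~\ref{lemma:node_marginal_value}: for any $S_1', S_2' \in \HH(\V)$ there is an element $v \in \V(S_2')$ with $h((v)|S_1') \geq \tfrac{\mu_1}{|S_2'|}\, h(S_2'|S_1')$. Expanding $h(S_2'|S_1')$ as the telescoping sum of the marginal values of the new elements of $S_2'$ appended one at a time after $S_1'$, each such mid-sequence marginal is at most $\tfrac{1}{\mu_1}$ times the corresponding front marginal $h((\cdot)|S_1')$ by $\mu_1$-element-sequence-submodularity; summing over at most $|S_2'|$ terms and taking the maximum gives the claim. Crucially, this step needs only the single-element form \eqref{eq:element_sequence_submodular_approximate} and not a $\mu_1$-sequence-submodular hypothesis, so the lemma is valid under Assumption~\ref{assump:forward_backward_mu1}.

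Next I would repeat the induction of Lemma~\ref{lemma:S1_given_S2} with $\tfrac{\mu_1}{k}$ in place of $\tfrac{1}{k}$. Writing $S = S_1 \oplus S_2$ with $|S_1| = k - k'$, $|S_2| = k'$, and $S_2^i$ the first $i$ elements of $S_2$, the greedy rule of the SSG algorithm, the $\mu_1$-averaging bound applied with $S_1' = S_1 \oplus S_2^{i-1}$ and $S_2' = S^\ast(\V, k, 0)$ (which has $k$ elements), and backward monotonicity give
\[
h(S_1 \oplus S_2^i) \;\geq\; \tfrac{\mu_1}{k}\, h(S^\ast(\V, k, 0)) + \bigl(1 - \tfrac{\mu_1}{k}\bigr) h(S_1 \oplus S_2^{i-1}).
\]
Unrolling for $i = 1, \dots, k'$ and using $\bigl(1 - \tfrac{\mu_1}{k}\bigr)^{k'} \leq e^{-\mu_1 k'/k}$ yields $h(S_2|S_1) \geq \bigl(1 - 1/e^{\mu_1 k'/k}\bigr)\bigl(h(S^\ast(\V, k, 0)) - h(S_1)\bigr)$, the $\mu_1$-version of Lemma~\ref{lemma:S1_given_S2}. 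Note that full backward monotonicity (rather than $\alpha$-backward-monotonicity) is exactly what makes the step $h(S_1 \oplus S_2^{i-1} \oplus S^\ast(\V,k,0)) \geq h(S^\ast(\V,k,0))$ go through unchanged, which is why the generalization is stated under Assumption~\ref{assump:forward_backward_mu1}.

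Finally I would plug this into the closing calculation of Lemma~\ref{lemma:value_concentration}: setting $h(S) = \delta\, h(S^\ast(\V, k, 0))$, combining $h(S) = h(S_1) + h(S_2|S_1)$ with the previous display and $h(S_1) \geq c\, h(S) = c\delta\, h(S^\ast(\V,k,0))$ gives $\delta \geq c\delta/e^{\mu_1 k'/k} + \bigl(1 - 1/e^{\mu_1 k'/k}\bigr)$, which rearranges to $\delta \geq (e^{\mu_1 k'/k} - 1)/(e^{\mu_1 k'/k} - c)$ and hence to the stated bound. I expect the only genuinely delicate point to be the direction of the approximate inequality in the averaging step — one needs an \emph{upper} bound on the mid-sequence marginals in terms of $1/\mu_1$ times the front marginals, which is precisely what $\mu_1$-element-sequence-submodularity supplies — together with verifying that this weaker hypothesis suffices; everything after that is a mechanical substitution.
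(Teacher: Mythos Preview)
Your proposal is correct and matches the paper's proof essentially line for line: the paper likewise first establishes the $\mu_1$-averaging bound (its Lemma~\ref{lemma:approximate_node_marginal_value}), then the $\mu_1$-version of Lemma~\ref{lemma:S1_given_S2} (its Lemma~\ref{lemma:approximate_S1_given_S2}) via the same recursion with $\mu_1/k$ in place of $1/k$, and finishes with the identical $\delta$-bookkeeping. Your observations that only $\mu_1$-element-sequence-submodularity is needed for the averaging step and that full backward monotonicity is what makes the recursion go through are also exactly the points the paper emphasizes.
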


Before we prove Lemma~\ref{lemma:approximate_value_concentration}, we introduce two lemmas: Lemmas~\ref{lemma:approximate_node_marginal_value} and \ref{lemma:approximate_S1_given_S2}. Lemma~\ref{lemma:approximate_node_marginal_value} is a generalization of Lemma~\ref{lemma:node_marginal_value} and will be used in the proof of Lemma~\ref{lemma:approximate_S1_given_S2}, which will be used in the proof of Lemma~\ref{lemma:approximate_value_concentration}.

 \begin{lemma}
    \label{lemma:approximate_node_marginal_value}
   Suppose that function $h$ is $\mu_1$-element-sequence-submodular. For any sequences $S_1^\prime, S_2^\prime \in \HH$, there exists an element $v \in \V$ such that $h((v)|S_1^\prime) \geq \frac{\mu_1}{|S_2^\prime|} h(S_2^\prime | S_1^\prime)$.
\end{lemma}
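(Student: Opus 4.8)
The plan is to mimic the proof of Lemma~\ref{lemma:node_marginal_value} (Lemma~10 of \cite{alaei2010maximizing}), replacing each invocation of exact sequence submodularity by its $\mu_1$-approximate counterpart, via a telescoping decomposition of $h(S_2' \mid S_1')$ followed by an averaging (pigeonhole) argument. Fix $S_1', S_2' \in \HH(\V)$ and write $S_2' - \V(S_1') = (w_1, \dots, w_{m_3})$, so that $S_1' \oplus S_2' = S_1' \oplus (w_1, \dots, w_{m_3})$ and $m_3 \le |S_2'|$; for $0 \le j \le m_3$ set $W_j \triangleq (w_1, \dots, w_j)$, with $W_0$ the empty sequence.

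First I would telescope $h(S_2' \mid S_1') = h(S_1' \oplus W_{m_3}) - h(S_1')$ over the chain $S_1' = S_1'\oplus W_0 \preceq S_1'\oplus W_1 \preceq \cdots \preceq S_1'\oplus W_{m_3}$, where consecutive terms differ by one appended element, to obtain
\begin{equation*}
h(S_2' \mid S_1') \;=\; \sum_{j=1}^{m_3} \big( h(S_1' \oplus W_j) - h(S_1' \oplus W_{j-1}) \big) \;=\; \sum_{j=1}^{m_3} h\big((w_j) \,\big|\, S_1' \oplus W_{j-1}\big).
\end{equation*}
Next, for each $j$ I note $S_1' \preceq S_1' \oplus W_{j-1}$, so $\mu_1$-element-sequence-submodularity (Eq.~\eqref{eq:element_sequence_submodular_approximate}, applied with $v = w_j$, $S_1 = S_1'$, $S_2 = S_1' \oplus W_{j-1}$) gives $h((w_j)\mid S_1' \oplus W_{j-1}) \le \tfrac{1}{\mu_1}\, h((w_j)\mid S_1')$. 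Summing these $m_3$ inequalities against the telescoping identity yields
\begin{equation*}
\sum_{j=1}^{m_3} h\big((w_j)\,\big|\,S_1'\big) \;\ge\; \mu_1 \sum_{j=1}^{m_3} h\big((w_j)\,\big|\,S_1' \oplus W_{j-1}\big) \;=\; \mu_1 \, h(S_2' \mid S_1').
\end{equation*}
Finally, since the maximum of the $m_3$ terms on the left is at least their average, there is an index $j^\ast$ with $h((w_{j^\ast})\mid S_1') \ge \tfrac{\mu_1}{m_3}\, h(S_2'\mid S_1') \ge \tfrac{\mu_1}{|S_2'|}\, h(S_2'\mid S_1')$, the last step using $m_3 \le |S_2'|$ together with $h(S_2'\mid S_1') \ge 0$ (which holds whenever $h$ is forward-monotone, as it is in every context where this lemma is used); the degenerate case $m_3 = 0$ is immediate since then $h(S_2'\mid S_1') = 0$. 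Taking $v = w_{j^\ast}$ completes the argument.

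I do not expect a genuine obstacle: this is a direct $\mu_1$-weighted generalization of a known lemma, and the only points needing minor care are (i) that $\oplus$ may discard elements of $S_2'$ already present in $S_1'$, which is handled by passing to $S_2' - \V(S_1')$ and using $m_3 \le |S_2'|$, and (ii) the sign bookkeeping in the last inequality, handled by the forward-monotonicity remark above.
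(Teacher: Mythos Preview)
Your argument is correct and follows essentially the same telescoping-plus-pigeonhole route as the paper. One small difference worth noting: the paper telescopes over all $|S_2'|$ elements of $S_2'$ (terms corresponding to elements already in $S_1'$ simply vanish), so the pigeonhole average is taken directly over $|S_2'|$ terms and no sign assumption on $h(S_2'\mid S_1')$ is needed; by first passing to $S_2'-\V(S_1')$ and averaging over $m_3$ terms, you incur the extra step $\tfrac{\mu_1}{m_3}\,h(S_2'\mid S_1') \ge \tfrac{\mu_1}{|S_2'|}\,h(S_2'\mid S_1')$, which requires $h(S_2'\mid S_1')\ge 0$ and hence forward monotonicity---an assumption not in the lemma's hypotheses, though (as you observe) satisfied in every application.
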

\begin{proof}
Let $S_2^\prime = (u_1, \dots, u_{|S_2^\prime|})$. We can rewrite $h(S_2^\prime|S_1^\prime)$ as
         \begin{equation}
         \label{eq:sum_S1_terms}
            h(S_2^\prime|S_1^\prime) =  \sum_{j=1}^{{|S_2^\prime|}} h((u_j)|S_1^\prime \oplus (u_1, \dots, u_{j-1})),
    \end{equation} 
where sequence $(u_1, \dots, u_{j-1})$ is an empty sequence when $j = 1$. 
Due to the Pigeonhole principle, there exists some $j^\prime \in \{1, 2, \dots |S^\prime|\}$ such that the following is satisfied:
    \begin{equation}
         \begin{aligned}
         \label{eq:max_term}
            h((u_{j^\prime})|S_1^\prime \oplus (u_1, \dots, u_{j^\prime -1 })) 
            &\geq \frac{1}{|S_2^\prime|} \sum_{j=1}^{{|S_2^\prime|}} h((u_j)|S_1^\prime \oplus (u_1, \dots, u_{j-1}))\\
            &= \frac{1}{|S_2^\prime|}  h(S_2^\prime|S_1^\prime),
            \end{aligned}
    \end{equation} 
    where the equality is from Eq.~\eqref{eq:sum_S1_terms}.
    Furthermore, element $u_{j^\prime}$ satisfies the following:
    \begin{equation}
    \begin{aligned}
             h((u_{j^\prime})|S_1^\prime) & \stackrel{\text{(a)}} {\geq} \mu_1 \cdot h((u_{j^\prime})|S_1^\prime \oplus (u_1, \dots, u_{j^\prime -1 })) \\
             & \stackrel{\text{(b)}} \geq \frac{\mu_1}{|S_2^\prime|}  h(S_2^\prime|S_1^\prime),
             \end{aligned}
    \end{equation} 
    where (a) is due to function $h$ being $\mu_1$-element-sequence-submodular and (b) is from Eq.~\eqref{eq:max_term}. This completes the proof.
\end{proof}

\begin{lemma}
\label{lemma:approximate_S1_given_S2}
Consider $1 \le k^\prime \le k$. Suppose that the sequence selected by the SSG algorithm is $S$ with $|S|=k$ and that there exist sequences $S_1$ and $S_2$ such that sequence $S$ can be written as $S = S_1 \oplus S_2$ with $|S_1| = k - k^\prime$ and $|S_2| = k^\prime$. Then, under Assumption~\ref{assump:forward_backward_mu1}, we have $h(S_2|S_1) \geq (1 - \frac{1}{e^{\mu_1 \cdot \frac{k^\prime}{k}}}) (h(S^\ast(\V, k, 0)) -h(S_1))$.
\end{lemma}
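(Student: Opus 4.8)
The plan is to follow the same three-move argument used for Lemma~\ref{lemma:S1_given_S2}, inserting the factor $\mu_1$ wherever an exact pigeonhole or submodularity bound gets replaced by its approximate version. Write $S_2 = (v_2^1, \dots, v_2^{k^\prime})$ and let $S_2^i \triangleq (v_2^1, \dots, v_2^i)$ denote the length-$i$ prefix of $S_2$, with $S_2^0$ the empty sequence. Since $h$ is forward-monotone we may assume without loss of generality that $|S^\ast(\V, k, 0)| = k$.

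First I would apply Lemma~\ref{lemma:approximate_node_marginal_value} with $S_1^\prime = S_1 \oplus S_2^{i-1}$ and $S_2^\prime = S^\ast(\V, k, 0)$ to obtain an element $v^\prime \in \V$ with $h((v^\prime)\mid S_1 \oplus S_2^{i-1}) \geq \frac{\mu_1}{k} h(S^\ast(\V, k, 0)\mid S_1 \oplus S_2^{i-1})$. Combining this with the greedy rule of the SSG algorithm (which picks $v_2^i$ to maximize the marginal over all still-available elements, where $h((v_2^i)\mid S_1 \oplus S_2^{i-1}) \geq h((v^\prime)\mid S_1 \oplus S_2^{i-1})$ holds even if $v^\prime$ is already chosen, because marginals are nonnegative by forward monotonicity) and with backward monotonicity applied to $h(S_1 \oplus S_2^{i-1} \oplus S^\ast(\V,k,0)) \geq h(S^\ast(\V,k,0))$, I get the one-step recursion
\[
 h(S_1 \oplus S_2^{i}) \;\geq\; \frac{\mu_1}{k}\, h(S^\ast(\V, k, 0)) + \Big(1 - \frac{\mu_1}{k}\Big)\, h(S_1 \oplus S_2^{i-1}).
\]
Unrolling this for $i = 1, \dots, k^\prime$ yields $h(S_1 \oplus S_2^{k^\prime}) \geq \big(1 - (1 - \mu_1/k)^{k^\prime}\big) h(S^\ast(\V, k, 0)) + (1 - \mu_1/k)^{k^\prime} h(S_1)$; subtracting $h(S_1)$, observing that the coefficients of $h(S^\ast(\V,k,0))$ and of $-h(S_1)$ coincide, and bounding $(1 - \mu_1/k)^{k^\prime} \leq e^{-\mu_1 k^\prime/k}$ (legitimate since $\mu_1 \in (0,1]$, $k^\prime \le k$, and $h(S^\ast(\V,k,0)) - h(S_1) \ge 0$ by optimality of $S^\ast$) gives $h(S_2 \mid S_1) \geq (1 - e^{-\mu_1 k^\prime/k})(h(S^\ast(\V,k,0)) - h(S_1))$, as claimed.

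The only genuinely non-routine point --- and the reason this is isolated as a separate lemma rather than obtained by quoting Lemma~\ref{lemma:approximate_classic_greedy} with $S_1$ already in place --- is the one flagged in the remark after Lemma~\ref{lemma:S1_given_S2}: backward monotonicity of $h$ does not descend to the conditional function $h(\cdot \mid S_1)$, so the SSG guarantee cannot be applied verbatim ``on top of'' $S_1$. The workaround is exactly as above: keep the full absolute marginals $h((v_2^i)\mid S_1 \oplus S_2^{i-1})$ throughout, and invoke backward monotonicity on the concatenation $S_1 \oplus S_2^{i-1} \oplus S^\ast(\V,k,0)$ rather than on any conditional function. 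Once that is in place, introducing $\mu_1$ merely rescales the per-step progress and the geometric-series bookkeeping is identical to that in Lemma~\ref{lemma:S1_given_S2}; I do not expect any further obstacle.
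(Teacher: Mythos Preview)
Your proposal is correct and follows essentially the same approach as the paper's proof: invoke Lemma~\ref{lemma:approximate_node_marginal_value} to get the $\mu_1/k$ per-step bound, combine with greedy selection and backward monotonicity to obtain the recursion $h(S_1 \oplus S_2^{i}) \geq \frac{\mu_1}{k} h(S^\ast(\V,k,0)) + (1-\frac{\mu_1}{k}) h(S_1 \oplus S_2^{i-1})$, unroll, subtract $h(S_1)$, and apply $(1-\mu_1/k)^{k'} \le e^{-\mu_1 k'/k}$. Your parenthetical handling of the case where $v'$ is already in the current sequence (so its marginal is zero by the concatenation convention, while the greedy pick has nonnegative marginal by forward monotonicity) is a detail the paper leaves implicit, so if anything you are slightly more careful here.
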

\begin{proof}
The proof follows a similar line of analysis as in Lemma~\ref{lemma:S1_given_S2}.
Let $v_i$ denote the $i$-th element of sequence $S_2$, and let $S_2^i \triangleq (v_1, \dots, v_i)$ denote the sequence consisting of the first $i$ elements of sequence $S_2$. Since function $h$ is forward-monotone, we can assume that $|S^\ast(\V, k, 0)| = k$ as adding more elements to the end of a sequence does not reduce its overall value. 

Due to Lemma~\ref{lemma:approximate_node_marginal_value}, there exists some element $v^\prime \in \V$ such that $h((v^\prime)|S_1 \oplus S_2^{i-1}) \geq \frac{\mu_1}{k} h(S^\ast(\V, k, 0)|S_1 \oplus S_2^{i-1})$.
Then, we have the following:
    \begin{equation}
    \begin{aligned}
    \label{eq:approximate_element_marginal_value2}
        h(S_1 \oplus S_2^{i}) - h(S_1 \oplus S_2^{i-1})
        & = h((v_i)|S_1 \oplus S_2^{i-1}) \\
        & \stackrel{\text{(a)}} \ge h((v^\prime)|S_1 \oplus S_2^{i-1}) \\
        & \stackrel{\text{(b)}} \geq \frac{\mu_1}{k} h(S^\ast(\V, k, 0)|S_1 \oplus S_2^{i-1}) \\
        & = \frac{\mu_1}{k} (h(S_1 \oplus S_2^{i-1} \oplus S^\ast(\V, k, 0)) - h(S_1 \oplus S_2^{i-1})) \\
        & \stackrel{\text{(c)}} \geq \frac{\mu_1}{k} (h(S^\ast(\V, k, 0)) - h(S_1 \oplus S_2^{i-1})),
    \end{aligned}
    \end{equation}
where (a) is due to the greedy manner of the SSG algorithm (Line~4 of Algorithm~\ref{alg:classicGreedy}), (b) is from the property of element $v^\prime$ (due to Lemma~\ref{lemma:approximate_node_marginal_value}), and (c) is due to the backward monotonicity of function $h$. Rewriting Eq.~\eqref{eq:approximate_element_marginal_value2} yields the following equivalent inequality:
 
     \begin{equation}
     \label{eq:approximate_recursive_i_2}
        h(S_1 \oplus S_2^{i})  \geq  \frac{\mu_1}{k} h(S^\ast(\V, k, 0)) + (1-\frac{\mu_1}{k})h(S_1 \oplus S_2^{i-1}).
    \end{equation}

By writing Eq.~\eqref{eq:approximate_recursive_i_2} for $i \in \{1, \dots, k^\prime\}$ and combining them, we obtain the following:
    \begin{equation}
     \begin{aligned}
     \label{eq:approximate_S1_plus_S_2}
        h(S_1 \oplus S_2^{k^\prime})  & \geq  \sum_{j =0}^{k^\prime - 1} \frac{\mu_1}{ k} (1-\frac{\mu_1}{k})^j ~h(S^\ast(\V, k, 0)) + (1-\frac{\mu_1}{k})^{k^\prime} h(S_1) \\
        & = (1 - (1 - \frac{\mu_1}{k})^{k^\prime}) h(S^\ast(\V, k, 0)) + (1-\frac{\mu_1}{k})^{k^\prime} h(S_1).
     \end{aligned}
    \end{equation}

Applying Eq.~\eqref{eq:approximate_S1_plus_S_2} and the fact that $S_2 = S_2^{k^\prime}$ yields the following:
 \begin{equation}
     \begin{aligned}
       h(S_2|S_1) & = h(S_1 \oplus S_2^{k^\prime}) -h(S_1) \\
        & \geq (1 - (1 - \frac{\mu_1}{k})^{k^\prime}) h(S^\ast(\V, k, 0))  + (1-\frac{\mu_1}{k})^{k^\prime} h(S_1) -h(S_1) \\
        & = (1 - (1 - \frac{\mu_1}{k})^{k^\prime}) h(S^\ast(\V, k, 0)) - (1 - (1 - \frac{\mu_1}{k})^{k^\prime})h(S_1) \\
        & = (1 - (1 - \frac{\mu_1}{k})^{k^\prime}) (h(S^\ast(\V, k, 0)) - h(S_1))\\
        & \ge (1 - \frac{1}{e^{\mu_1 \cdot \frac{k^\prime}{k}}}) (h(S^\ast(\V, k, 0)) - h(S_1)),
     \end{aligned}
    \end{equation}
     where the last inequality holds because $(1 - \frac{\mu_1}{k}) \leq e^{-\frac{\mu_1}{k}}$ and $h(S^\ast(\V, k, 0)) - h(S_1)$ is nonnegative. This completes the proof.
\end{proof}
Having introduced Lemmas~\ref{lemma:approximate_node_marginal_value} and \ref{lemma:approximate_S1_given_S2}, we are now ready to prove Lemma~\ref{lemma:approximate_value_concentration}.

\begin{proof}[Proof of Lemma~\ref{lemma:approximate_value_concentration}]

Suppose $h(S) = \delta \cdot h(S^\ast(\V, k, 0))$ for some $\delta \in (0, 1]$. Then, we have
 \begin{equation}
    \label{eq:approximate_h_s}
     \begin{aligned}
       \delta \cdot h(S^\ast(\V, k, 0)) & = h(S) \\
       & = h(S_1) + h(S_2|S_1) \\ 
            & \stackrel{\text{(a)}} \geq h(S_1) + (1 - \frac{1}{e^{\mu_1 \cdot \frac{k^\prime}{k}}}) ( h(S^\ast(\V, k, 0)) - h(S_1)) \\
            & =\frac{1}{e^{\mu_1 \cdot \frac{ k^\prime}{k}}} h(S_1) +  (1 - \frac{1}{e^{\mu_1 \cdot \frac{k^\prime}{k}}}) h(S^\ast(\V, k, 0)) \\
             & \stackrel{\text{(b)}} \geq \frac{c \delta}{e^{\mu_1 \cdot \frac{ k^\prime}{k}}}  h(S^\ast(\V, k, 0)) +  (1 - \frac{1}{e^{\mu_1 \cdot \frac{k^\prime}{k}}}) h(S^\ast(\V, k, 0)),
     \end{aligned}
    \end{equation}
where (a) is from Lemma~\ref{lemma:approximate_S1_given_S2} and (b) is due to $h(S_1) \geq c \cdot h(S) = c \cdot \delta \cdot h(S^\ast(\V, k, 0))$. Dividing both sides of Eq.~\eqref{eq:approximate_h_s} by $h(S^\ast(\V, k, 0))$ yields the following:
 \begin{equation}
       \delta \geq \frac{c \delta}{e^{\mu_1 \cdot \frac{ k^\prime}{k}}}  +  1 - \frac{1}{e^{\mu_1 \cdot \frac{k^\prime}{k}}},
    \end{equation}
which implies
\begin{equation}
            \delta \geq  \frac{e^{\mu_1 \cdot  \frac{k^\prime}{k}} - 1}{e^{ \mu_1 \cdot \frac{k^\prime}{k}} - c}.
\end{equation}
Since $h(S) = \delta \cdot h(S^\ast(\V, k, 0))$, we have $h(S) \geq \frac{e^{\mu_1 \cdot  \frac{k^\prime}{k}} - 1}{e^{ \mu_1 \cdot \frac{k^\prime}{k}} - c} h(S^\ast(\V, k, 0))$. This completes the proof.
\end{proof}

\subsection{Proof of Theorem \ref{theorem:robustGreedy_tau_1_case_b}}
\label{proof:robustGreedy_tau_1_case_b}
\begin{proof}
We use Lemma~\ref{lemma:Optimal_value_from_subset} presented in Appendix~\ref{subsec:preliminary_results} and Lemmas~\ref{lemma:approximate_classic_greedy} and \ref{lemma:approximate_value_concentration} presented in Appendix~\ref{subsec:approximate_preliminary_results} to prove the approximation ratios of Algorithm~\ref{alg:robustGreedy_tau_contiguous} in the case of $\tau = 1$. The proof follows a similar line of analysis as in the proof of Theorem~\ref{theorem:robustGreedy_tau_1_case_a}.

Given $\tau = 1$, in Step 1 of Algorithm~\ref{alg:robustGreedy_tau_contiguous}, the selected sequence $S_1$ consists of one element only; this element is denoted by $v_1$, i.e., $S_1 = (v_1)$. In Step 2 of Algorithm~\ref{alg:robustGreedy_tau_contiguous}, it is equivalent that sequence $S_2$ is selected by the SSG algorithm from set $\V \setminus \{v_1\}$, and we have $|S_2| = k-\tau=k-1$. Hence, the sequence selected by Algorithm~\ref{alg:robustGreedy_tau_contiguous} can be written as $S = S_1 \oplus S_2 = (v_1) \oplus S_2$. Recall that for any given sequence $S$, set $\Z_{\tau}(S)$ denotes the set of elements removed from sequence $S$ in the worst case (i.e., $\Z_{\tau}(S)$ is an optimal solution to Problem~\eqref{eq:Z_S}).
Note that only one element will be removed from $S$, i.e., $|\Z_\tau(S)|=1$. For ease of notation, we use $z$ to denote the only element in $\Z_\tau(S)$, i.e., $\Z_\tau(S) = \{z\}$.

We want to show the following two bounds, which establish the approximation ratios of Algorithm~\ref{alg:robustGreedy_tau_contiguous}: 
1) Suppose that function $h$ is forward-monotone, backward-monotone,  $\mu_1$-element-sequence-submodular, and $\mu_2$-sequence-submodular (Assumption~\ref{assump:forward_backward_mu1_mu2}), we have
\begin{equation}
    h(S - \{z\}) \ge \frac{a(e^{b} - 1)}{e^{b} - a} g_{\tau}(S^\ast(\V, k, \tau)), ~\text{where}~ a = \frac{\mu_1 \mu_2}{\mu_1 + 1} ~\text{and}~ b = \frac{\mu_1 (k -2)}{k-1}; \label{eq:approximate_tau1bound1} \\ 
\end{equation}
2) Suppose that function $h$ is forward-monotone, $\alpha$-backward-monotone,  $\mu_1$-element-sequence-submodular, and $\mu_2$-sequence-submodular (Assumption~\ref{assump:forward_alpha_mu1_mu2}), we have
\begin{equation}
\label{eq:approximate_tau1bound2}
    h(S - \{z\}) \ge \frac{\alpha^2 \mu_1 \mu_2 (e^{\mu_1} - 1)}{(\mu_1  + \alpha  )e^{\mu_1}} g_{\tau}(S^\ast(\V, k, \tau)). 
\end{equation}

We begin with the proof of Eq.~\eqref{eq:approximate_tau1bound2} as some of the intermediate results will be used in the proof of Eq.~\eqref{eq:approximate_tau1bound1}.
We first present a lower bound on $h(S_2)$, which will be used throughout the proof:
\begin{equation}
\label{eq:approximate_h_S_2_main_tau_1}
        \begin{aligned}
         h(S_2) & \geq \alpha (1 - 1/e^{ \mu_1}) h(S^\ast(\V \setminus \{v_1\}, k - \tau, 0)) \\ 
        & \geq \alpha (1 - 1/e^{ \mu_1}) g_{\tau}(S^\ast(\V, k, \tau)),
    \end{aligned}
    \end{equation}
where the first inequality is from Lemma~\ref{lemma:approximate_classic_greedy} (where we replace $\V$, $k$, and $i$ with $\V \setminus \{v_1\}$, $k-\tau$, and $k-\tau$, respectively) and the second inequality is from Lemma~\ref{lemma:Optimal_value_from_subset} (where we replace $\V^\prime$ with $\{v_1\}$).

The proof proceeds as follows. Element $z$ is an element that will be removed, which can be either $v_1$ or an element in $S_2$. Therefore, we consider two cases: (I) $z = v_1$ and (II) $z \neq v_1$. 

In Case I, we have $z = v_1$, which implies the following: 
\begin{equation}
\label{eq:approximate_s_z_case_I_tau_1}
    \begin{aligned}
        h(S - \{z\})  & = h(S_2) \geq \alpha (1-1/e^{ \mu_1}) g_{\tau}(S^\ast(\V, k, \tau)),
    \end{aligned}
\end{equation}
where the inequality follows from Eq.~\eqref{eq:approximate_h_S_2_main_tau_1}. 

In Case II, we have $z \neq v_1$ (or $z \in \V(S_2)$). Depending on the impact of removing element $z$, we consider two subcases: (II-a) $h(S_2) \leq h(S_2 - \{z\})$ and (II-b) $h(S_2) > h(S_2 - \{z\})$.
 
In Case II-a: we have $h(S_2) \leq h(S_2 - \{z\})$. In this case, the removal of element $z$ does not reduce the overall value of the remaining sequence $S_2 - \{z\}$. Then, we have
\begin{equation}
 \label{eq:approximate_s_z_case_II_a_tau_1}  
    \begin{aligned}
        h(S - \{z\})  & = h((v_1) \oplus (S_2 - \{z\})) \\
        & \stackrel{\text{(a)}} \geq \alpha \cdot h(S_2 - \{z\}) \\
        & \stackrel{\text{(b)}} \geq   \alpha \cdot h(S_2) \\
        & \stackrel{\text{(c)}} \geq  \alpha^2 \cdot (1-1/e^{ \mu_1}) g_{\tau}(S^\ast(\V, k, \tau)),
    \end{aligned}
\end{equation}
where (a) is due to the $\alpha$-backward monotonicity of function $h$, (b) holds from the condition of this subcase, and (c) follows from Eq.~\eqref{eq:approximate_h_S_2_main_tau_1}.

In Case II-b: we have $h(S_2) > h(S_2 - \{z\})$. Let $\eta \triangleq \frac{h(S_2) - h(S_2 - \{z\})}{h(S_2)}$ denote the ratio of the loss of removing element $z$ from sequence $S_2$ to the value of sequence $S_2$, and we have $\eta \in (0, 1]$ due to $h(S_2) > h(S_2 - \{z\})$. We first state the following:
\begin{subequations}
\begin{align}
    &h(S - \{z\})  \ge \max \{ \mu_1(\eta + \mu_2 -1) \cdot h(S_2), \alpha (1 - \eta) \cdot h(S_2)\} \label{eq:approximate_s_z_case_II_b_tau_1_assumption_a},\\
    & \max\{\mu_1(\eta + \mu_2 -1), \alpha (1 - \eta) \} \geq \frac{\alpha \mu_1 \mu_2}{\mu_1  + \alpha  } \label{eq:approximate_max_two_terms_assumption_a}.
\end{align}
\end{subequations}

We will prove Eqs.~\eqref{eq:approximate_s_z_case_II_b_tau_1_assumption_a} and \eqref{eq:approximate_max_two_terms_assumption_a} later; for now, we assume that they both hold. 
Then, we can obtain the following bound:
\begin{equation}
 \label{eq:approximate_s_z_case_II_b_tau_1_any_k}
    \begin{aligned}
        h(S - \{z\}) 
        & \geq \max\{\mu_1(\eta + \mu_2 -1) \cdot h(S_2),   \alpha (1 - \eta) \cdot h(S_2)\} \\
        & \geq \max\{\mu_1(\eta + \mu_2 -1), \alpha (1 - \eta) \} \cdot \alpha  (1 - 1/e^{ \mu_1}) g_{\tau}(S^\ast(\V, k, \tau)) \\
        & \geq \frac{\alpha^2 \mu_1  \mu_2 (e^{ \mu_1} - 1)}{(\mu_1  + \alpha )e^{ \mu_1}} g_{\tau}(S^\ast(\V, k, \tau)),
    \end{aligned}
\end{equation}
where the three inequalities are from Eqs.~\eqref{eq:approximate_s_z_case_II_b_tau_1_assumption_a}, \eqref{eq:approximate_h_S_2_main_tau_1}, and \eqref{eq:approximate_max_two_terms_assumption_a}, respectively.

By combining the bounds in Eqs.~\eqref{eq:approximate_s_z_case_I_tau_1}, \eqref{eq:approximate_s_z_case_II_a_tau_1}, and \eqref{eq:approximate_s_z_case_II_b_tau_1_any_k}, we obtain the bound in Eq.~\eqref{eq:approximate_tau1bound2}.  

Next, we prove Eq.~\eqref{eq:approximate_s_z_case_II_b_tau_1_assumption_a}. We first rewrite $S_2$ as $S_2=S_2^1 \oplus (z) \oplus S_2^2$, where $S_2^1$ and $S_2^2$ denote the subsequences of $S_2$ before and after element $z$, respectively. Note that $S_2^1$ or $S_2^2$ could be an empty sequence, depending on the position of $z$ in $S_2$. Then, we characterize $h((z))$ in terms of $h(S_2)$:
 \begin{equation}
 \label{eq:intermediate_approximate_z2_tau_1_assumption_a}
    \begin{aligned}
        \eta \cdot h(S_2) & \stackrel{\text{(a)}} = h(S_2) - h(S_2 - \{z\}) \\
        & = h(S_2^1 \oplus (z) \oplus S_2^2) - h(S_2^1 \oplus S_2^2) \\
        & = h(S_2^1) + h((z)|S_2^1) +  h(S_2^2|S_2^1 \oplus (z))  - h(S_2^1) -  h(S_2^2|S_2^1) \\
        & = h((z)|S_2^1) +  h(S_2^2|S_2^1 \oplus (z)) -  h(S_2^2|S_2^1) \\
        & \stackrel{\text{(b)}} \leq h((z)|S_2^1) +  h(S_2^2|S_2^1 \oplus (z)) -  \mu_2 h(S_2^2|S_2^1 \oplus (z)) \\
        & = h((z)|S_2^1) +   ( 1 -  \mu_2) h(S_2^2|S_2^1 \oplus (z)) \\
        & \stackrel{\text{(c)}} \leq h((z)|S_2^1) +   ( 1 -  \mu_2) h(S_2) \\
        & \stackrel{\text{(d)}} \leq h((z))/\mu_1 + ( 1 -  \mu_2) h(S_2),
    \end{aligned}
 \end{equation}
  where (a) is from the definition of $\eta$, (b) is  due to function $h$ being $\mu_2$-sequence-submodular, (c) holds because $\mu_2 \leq 1$ and $h(S_2^2|S_2^1 \oplus (z)) \leq h(S_2^2|S_2^1 \oplus (z)) + h(S_2^1 \oplus (z)) = h(S_2)$ (recall that $S_2 = S_2^1\oplus (z) \oplus S_2^2$), and (d) is due to function $h$ being $\mu_1$-element-sequence-submodular.
From Eq.~\eqref{eq:intermediate_approximate_z2_tau_1_assumption_a}, we have the following: 
\begin{equation}
    \label{eq:approximate_z2_tau_1_assumption_a}
    h((z)) \geq \mu_1(\eta + \mu_2 -1) h(S_2).
\end{equation}

To prove Eq.~\eqref{eq:approximate_s_z_case_II_b_tau_1_assumption_a}, we decompose it into two parts: (i) $h(S - \{z\})  \ge \mu_1(\eta + \mu_2 -1) \cdot h(S_2)$ and (ii) $h(S - \{z\}) \ge \alpha (1-\eta) \cdot  h(S_2)$.

Part (i) can be shown through the following:
\begin{equation}
        h(S - \{z\}) \stackrel{\text{(a)}} \geq h((v_1))  \stackrel{\text{(b)}} \geq h((z))  \stackrel{\text{(c)}} \geq \mu_1(\eta + \mu_2 -1) \cdot h(S_2),
\end{equation}
where (a) is from the forward monotonicity of function $h$, (b) is due to the greedy manner of Algorithm~\ref{alg:robustGreedy_tau_contiguous} (Lines~3-5), and (c) is from Eq.~\eqref{eq:approximate_z2_tau_1_assumption_a}.

Part (ii) can be shown through the following:
\begin{equation*}
   h(S - \{z\}) 
   = h((v_1) \oplus (S_2 - \{z\})) 
   \stackrel{\text{(a)}} \geq  \alpha \cdot h(S_2 - \{z\})  \stackrel{\text{(b)}} =  \alpha (1-\eta) \cdot h(S_2), 
 \end{equation*}
 where (a) is from the $\alpha$-backward monotonicity of function $h$ and (b) is from the definition of $\eta$. 

Eq.~\eqref{eq:approximate_max_two_terms_assumption_a} holds trivially for any $\eta \in (0, 1]$ by setting $\mu_1(\eta + \mu_2 -1)$ and $\alpha (1 - \eta)$ to be equal, solving for $\eta$, and plugging it back. This completes the proof of the bound in Eq.~\eqref{eq:approximate_tau1bound2}.

Next, we prove the bound in Eq.~\eqref{eq:approximate_tau1bound1}. Note that the analysis so far applies to function $h$ that is forward-monotone, $\alpha$-backward-monotone,  $\mu_1$-element-sequence-submodular, and $\mu_2$-sequence-submodular (Assumption~\ref{assump:forward_alpha_mu1_mu2}), and hence, it also applies to  Assumption~\ref{assump:forward_backward_mu1_mu2}, which is a special case of Assumption~\ref{assump:forward_alpha_mu1_mu2} with $\alpha = 1$. The bound in Eq.~\eqref{eq:approximate_tau1bound1} requires backward monotonicity (i.e., $\alpha = 1$), but it becomes better than the bound in Eq.~\eqref{eq:approximate_tau1bound2} when $k$ is large. In the following analysis, we assume that function $h$ is forward-monotone, backward-monotone,  $\mu_1$-element-sequence-submodular, and $\mu_2$-sequence-submodular (Assumption~\ref{assump:forward_backward_mu1_mu2}).

The proof proceeds as follows. We borrow the analysis of Case-I, Case~II-a, and Eq.~\eqref{eq:approximate_s_z_case_II_b_tau_1_assumption_a} from the previous analysis by setting $\alpha = 1$. For Case-II-b, we provide a different analysis. In Case II-b: we have $z \in \V(S_2)$ and $h(S_2) > h(S_2 - \{z\})$. Suppose $k=2$. Then, it is trivial that the sequence selected by Algorithm~\ref{alg:robustGreedy_tau_contiguous} (i.e., $S=(v_1) \oplus (z)$) yields an optimal solution. This is because removing element $z$ from $S$ gives $(v_1)$, which has the largest individual value among all the elements. Therefore, we assume $k>2$ throughout the rest of the proof. Recall that $\eta \triangleq \frac{h(S_2) - h(S_2 - \{z\})}{h(S_2)}$. We first state the following:
\begin{subequations}
\begin{align}
    & h(S_2) \geq  \frac{e^{\mu_1 \cdot \frac{k-2}{k-1}} - 1}{e^{\mu_1 \cdot \frac{k-2}{k-1}} - \mu_1(\eta + \mu_2 -1)} g_{\tau}(S^\ast(\V, k, \tau)) \label{eq:approximate_S_2_tau_1_large_k},\\
    &  \max \left\{ \frac{\mu_1(\eta + \mu_2 -1) (e^{\mu_1 \cdot \frac{k-2}{k-1}} - 1)}{e^{\mu_1 \cdot \frac{k-2}{k-1}} - \mu_1(\eta + \mu_2 -1)},  \frac{(1 - \eta) (e^{\mu_1 \cdot \frac{k-2}{k-1}} - 1)}{e^{\mu_1 \cdot \frac{k-2}{k-1}} - \mu_1(\eta + \mu_2 -1)} \right\} \nonumber \\
    & \geq \frac{\mu_1 \mu_2}{1 + \mu_2} \cdot \frac{e^{\mu_1 \cdot \frac{k-2}{k-1}} - 1}{ e^{\mu_1 \cdot \frac{k-2}{k-1}} -\frac{\mu_1 \mu_2}{1 + \mu_2}} \label{eq:approximate_max_two_terms_assumption_a_2}.
\end{align}
\end{subequations}
We will prove Eqs.~\eqref{eq:approximate_S_2_tau_1_large_k} and \eqref{eq:approximate_max_two_terms_assumption_a_2} later; for now, we assume that they both hold. 
Then, we can obtain the following bound:
\begin{equation}
   \label{eq:approximate_s_z_case_II_b_tau_1_large_k}
    \begin{aligned}
       h(S - \{z\})
       & \geq \max \{\mu_1(\eta + \mu_2 -1) \cdot h(S_2),   (1 - \eta) \cdot h(S_2)\} \\
       & \geq \max \left\{ \frac{\mu_1(\eta + \mu_2 -1) (e^{\mu_1 \cdot \frac{k-2}{k-1}} - 1)}{e^{\mu_1 \cdot \frac{k-2}{k-1}} - \mu_1(\eta + \mu_2 -1)},  \frac{(1 - \eta) (e^{\mu_1 \cdot \frac{k-2}{k-1}} - 1)}{e^{\mu_1 \cdot \frac{k-2}{k-1}} - \mu_1(\eta + \mu_2 -1)} \right\}
       g_{\tau}(S^\ast(\V, k, \tau)) \\
       & \geq \frac{\mu_1 \mu_2}{1 + \mu_2} \cdot \frac{e^{\mu_1 \cdot \frac{k-2}{k-1}} - 1}{ e^{\mu_1 \cdot \frac{k-2}{k-1}} -\frac{\mu_1 \mu_2}{1 + \mu_2}} \cdot g_{\tau}(S^\ast(\V, k, \tau)),
       \end{aligned}
   \end{equation}
where the three inequalities are from Eq.~\eqref{eq:approximate_s_z_case_II_b_tau_1_assumption_a} (with $\alpha = 1$), \eqref{eq:approximate_S_2_tau_1_large_k}, and \eqref{eq:approximate_max_two_terms_assumption_a_2}, respectively.

By combining the bounds in Eqs.~\eqref{eq:approximate_s_z_case_I_tau_1}, \eqref{eq:approximate_s_z_case_II_a_tau_1} (with $\alpha = 1$), and \eqref{eq:approximate_s_z_case_II_b_tau_1_large_k}, we obtain Eq.~\eqref{eq:approximate_tau1bound1}.  

Next, we show that Eq.~\eqref{eq:approximate_S_2_tau_1_large_k} holds. Let $v^1_2$ denote the first element of sequence $S_2$. Then, we have the following: 
\begin{equation*}
    \begin{aligned}
        h((v^1_2)) & \stackrel{\text{(a)}} \geq h((z)) \stackrel{\text{(b)}}  \geq \mu_1(\eta + \mu_2 -1) \cdot h(S_2),
    \end{aligned}
\end{equation*}
where (a) holds because element $v^1_2$ has the largest individual value among all elements in $S_2$ and (b) follows from Eq.~\eqref{eq:approximate_z2_tau_1_assumption_a}.Then, we can characterize the value of $h(S_2)$ as follows:
 \begin{equation*}
    \begin{aligned}
        h(S_2) & \geq \frac{e^{\mu_1 \cdot \frac{k-2}{k-1}} - 1}{e^{\mu_1 \cdot \frac{k-2}{k-1}} - \mu_1(\eta + \mu_2 -1)} h(S^\ast(\V \setminus \{v_1\}, k - \tau, 0)) \\
        & \geq  \frac{e^{\mu_1 \cdot \frac{k-2}{k-1}} - 1}{e^{\mu_1 \cdot \frac{k-2}{k-1}} - \mu_1(\eta + \mu_2 -1)} g_{\tau}(S^\ast(\V, k, \tau)),
    \end{aligned}
 \end{equation*}
\noindent where the inequalities are from Lemma~\ref{lemma:approximate_value_concentration} (where we replace $\V$, $S$, $S_1$, $k$, $k^\prime$, and $c$ with $\V \setminus \{v_1\}$, $S_2$, $(v^1_2)$, $k-1$, $k-2$, and $\mu_1(\eta + \mu_2 -1)$, respectively) and Lemma~\ref{lemma:Optimal_value_from_subset}, respectively.

Finally, we show that Eq.~\eqref{eq:approximate_max_two_terms_assumption_a_2} holds. We define two functions of $\eta$:
    $l_1(\eta) \triangleq \frac{\mu_1(\eta + \mu_2 -1) (e^{\mu_1 \cdot \frac{k-2}{k-1}} - 1)}{e^{\mu_1 \cdot \frac{k-2}{k-1}} - \mu_1(\eta + \mu_2 -1)}$ and $l_2(\eta) \triangleq \frac{(1 - \eta) (e^{\mu_1 \cdot \frac{k-2}{k-1}} - 1)}{e^{\mu_1 \cdot \frac{k-2}{k-1}} - \mu_1(\eta + \mu_2 -1)}$. 
It is easy to verify that for $k > 2$ and $\eta \in (0, 1]$, function $l_1(\eta)$ is monotonically increasing and function $l_2(\eta)$ is monotonically decreasing. Let $\eta^* = \frac{1 + \mu_1 - \mu_1 \mu_2}{1 + \mu_1 }$, and we have $l_1(\eta^*) = l_2(\eta^*) = \frac{\mu_1 \mu_2}{1 + \mu_2} \cdot \frac{e^{\mu_1 \cdot \frac{k-2}{k-1}} - 1}{ e^{\mu_1 \cdot \frac{k-2}{k-1}} -\frac{\mu_1 \mu_2}{1 + \mu_2}}$. We consider two cases for $\eta$: $\eta \in [\eta^*, 1]$ and $\eta \in (0, \eta^*]$. For $\eta \in [\eta^*, 1]$, we have $\max\{l_1(\eta), l_2(\eta)\} \geq l_1(\eta) \geq l_1(\eta^*)$ as $l_1(\eta)$ is monotonically increasing; for $\eta \in (0, \eta^*]$, we have $\max\{l_1(\eta), l_2(\eta)\} \geq l_2(\eta) \geq l_2(\eta^*) = l_1(\eta^*)$ as $l_2(\eta)$ is monotonically decreasing. Therefore, for $\eta \in (0, 1]$, we have $\max\{l_1(\eta), l_2(\eta)\} \geq l_1(\eta^*) = \frac{\mu_1 \mu_2}{1 + \mu_2} \cdot \frac{e^{\mu_1 \cdot \frac{k-2}{k-1}} - 1}{ e^{\mu_1 \cdot \frac{k-2}{k-1}} -\frac{\mu_1 \mu_2}{1 + \mu_2}}$. This gives Eq.~\eqref{eq:approximate_max_two_terms_assumption_a_2} and completes the proof.
\end{proof}

\subsection{Proof of Theorem \ref{theorem:robustGreedy_tau_contigous_case_b}}
\label{proof:robustGreedy_tau_contigous_case_b}
\begin{proof}
We use Lemma~\ref{lemma:Optimal_value_from_subset} presented in Appendix~\ref{subsec:preliminary_results} and Lemmas~\ref{lemma:approximate_classic_greedy} and \ref{lemma:approximate_value_concentration} presented in Appendix~\ref{subsec:approximate_preliminary_results} to prove the approximation ratios of Algorithm~\ref{alg:robustGreedy_tau_contiguous} in the case of $1 \leq \tau \leq k$, assuming the removal of $\tau$ contiguous elements. The proof follows a similar line of analysis as in the proof of Theorem~\ref{theorem:robustGreedy_tau_contigous_case_a}.

In Step 1 of Algorithm~\ref{alg:robustGreedy_tau_contiguous}, it is equivalent that sequence $S_1$ is selected by the SSG algorithm from set $\V$, and we have $|S_1| = \tau$. Similarly, in Step 2 of Algorithm~\ref{alg:robustGreedy_tau_contiguous}, it is also equivalent that sequence $S_2$ is selected by the SSG algorithm from set $\V \setminus \V(S_1)$, and we have $|S_2| = k-\tau$. Hence, the sequence selected by Algorithm~\ref{alg:robustGreedy_tau_contiguous} can be written as $S = S_1 \oplus S_2$. Recall that for any given sequence $S$, set $\Z_{\tau}(S)$ denotes the set of elements removed from sequence $S$ in the worst case (i.e., $\Z_{\tau}(S)$ is an optimal solution to Problem~\eqref{eq:Z_S}). We define $\Z^1_\tau(S) \triangleq \Z_\tau(S) \cap \V(S_1)$ and $\Z^2_\tau(S) \triangleq \Z_\tau(S) \cap \V(S_2)$ as the set of elements removed from subsequences $S_1$ and $S_2$, respectively.

We want to show the following two bounds, which establish the approximation ratios of Algorithm~\ref{alg:robustGreedy_tau_contiguous}: 
1) Suppose that function $h$ is forward-monotone, backward-monotone,  $\mu_1$-element-sequence-submodular, and $\mu_2$-sequence-submodular (Assumption~\ref{assump:forward_backward_mu1_mu2}), we have
\begin{equation}
    h(S - \Z_\tau(S)) \ge \frac{ a \mu_2 (e^b - 1)}{(a+1)e^b - a \mu_2}g_{\tau}(S^\ast(\V, k, \tau)),  \label{eq:approximate_tau_contigous_bound1}
\end{equation}
where $a = \mu_1 \cdot (1-1/e^{\mu_1})$ and $b = \mu_1 \cdot \frac{k-2\tau}{k-\tau}$;
2) Suppose that function $h$ is forward-monotone, $\alpha$-backward-monotone,  $\mu_1$-element-sequence-submodular, and $\mu_2$-sequence-submodular (Assumption~\ref{assump:forward_alpha_mu1_mu2}), we have
\begin{equation}
\label{eq:approximate_tau_contigous_bound2}
    h(S - \Z_\tau(S)) \ge \frac{\alpha^2 \mu_1 \mu_2 (e^{\mu_1} - 1)^2}{\mu_1 e^{\mu_1}(e^{\mu_1} - 1) + e^{2\mu_1}}g_{\tau}(S^\ast(\V, k, \tau)). 
\end{equation}

We begin with the proof of Eq.~\eqref{eq:approximate_tau_contigous_bound2} as some of the intermediate results will be used in the proof of Eq.~\eqref{eq:approximate_tau_contigous_bound1}.
We first present a lower bound on $h(S_2)$, which will be used throughout the proof:
\begin{equation}
\label{eq:approximate_h_S_2_main_tau_contigous}
        \begin{aligned}
         h(S_2) & \geq \alpha (1 - 1/e^{ \mu_1}) h(S^\ast(\V \setminus \V(S_1), k - \tau, 0)) \\ 
        & \geq \alpha (1 - 1/e^{ \mu_1}) g_{\tau}(S^\ast(\V, k, \tau)),
    \end{aligned}
    \end{equation}
where the first inequality is from Lemma~\ref{lemma:approximate_classic_greedy} (where we replace $\V$, $k$, and $i$ with $\V \setminus \V(S_1)$, $k-\tau$, and $k-\tau$, respectively) and the second inequality is from Lemma~\ref{lemma:Optimal_value_from_subset} (where we replace $\V^\prime$ with $\V(S_1)$).

The proof proceeds as follows. Elements in $\Z_\tau(S)$ will be removed from sequence $S$. These elements can be either fully or partially in $\V(S_1)$ (i.e., $\Z^2_\tau(S) = \emptyset$ or $\Z^2_\tau(S) \neq \emptyset$). Therefore, we consider two cases: (I) $\Z^2_\tau(S) = \emptyset$ and (II) $\Z^2_\tau(S) \neq \emptyset$. 

In Case I, we have $\Z^2_\tau(S) = \emptyset$, i.e., $\Z_\tau(S) = \V(S_1)$. Then, we have the following: 
\begin{equation}
\label{eq:approximate_s_z_case_I_tau_congtigous}
    \begin{aligned}
        h(S - \Z_\tau(S)) & = h(S_2)  \geq \alpha (1 - 1/e^{ \mu_1}) g_\tau(S^\ast(\V, k, \tau)),
    \end{aligned}
\end{equation}
where the inequality follows from Eq.~\eqref{eq:approximate_h_S_2_main_tau_contigous}. 

In Case II, we have $\Z^2_\tau(S) \neq \emptyset$.
Depending on the impact of removing elements in $\Z^2_\tau(S)$, we consider two subcases: (II-a) $h(S_2) \leq h(S_2 - \Z^2_\tau(S))$ and (II-b) $h(S_2) > h(S_2 - \Z^2_\tau(S))$.

In Case II-a: we have $h(S_2) \leq h(S_2 - \Z^2_\tau(S))$. In this case, the removal of elements in $\Z^2_\tau(S)$ does not reduce the overall value of the remaining sequence $S_2 - \Z^2_\tau(S)$. Then, we have
\begin{equation}
 \label{eq:approximate_s_z_case_II_a_tau_congtigous}  
    \begin{aligned}
        h(S - \Z_\tau(S)) & = h((S_1 - \Z^1_\tau(S)) \oplus (S_2 - \Z^2_\tau(S))) \\
        & \stackrel{\text{(a)}} \geq \alpha \cdot h(S_2 - \Z^2_\tau(S)) \\
        & \stackrel{\text{(b)}} \geq   \alpha \cdot  h(S_2) \\
        & \stackrel{\text{(c)}} \geq  \alpha^2 (1 - 1/e^{ \mu_1}) \cdot g_\tau(S^\ast(\V, k, \tau)),
    \end{aligned}
\end{equation}
where (a) is due to the $\alpha$-backward monotonicity of function $h$, (b) holds from the condition of this subcase, and (c) follows from Eq.~\eqref{eq:approximate_h_S_2_main_tau_contigous}.

In Case II-b: we have $h(S_2) > h(S_2 - \Z^2_\tau(S))$. Let $\eta \triangleq \frac{h(S_2) - h(S_2 - \Z^2_\tau(S))}{h(S_2)}$ denote the ratio of the loss of removing elements in $\Z^2_\tau(S)$ from sequence $S_2$ to the value of sequence $S_2$, and we have $\eta \in (0, 1]$ due to $h(S_2) > h(S_2 - \Z^2_\tau(S))$. We first state the following:
\begin{subequations}
\begin{align}
    & h(S - \Z_\tau(S)) \geq \max\{\mu_1 \alpha(\eta + \mu_2 -1) (1 - 1/e^{ \mu_1}) \cdot h(S_2),  \alpha (1 - \eta) \cdot h(S_2)\} \label{eq:approximate_s_z_case_II_b_tau_contigous_assumption_a},\\
    & \max\{\mu_1 \alpha(\eta + \mu_2 -1) (1 - 1/e^{ \mu_1}), \alpha(1-\eta)\} \geq \frac{\alpha \mu_1 \mu_2 (e^{ \mu_1} - 1)}{\mu_1 (e^{ \mu_1} - 1) + e^{\mu_1}}     \label{eq:approximate_max_two_terms_tau_contigous_assumption_a}.
\end{align}
\end{subequations}

We will prove Eqs.~\eqref{eq:approximate_s_z_case_II_b_tau_contigous_assumption_a} and \eqref{eq:approximate_max_two_terms_tau_contigous_assumption_a} later; for now, we assume that they both hold. Then, we can obtain the following bound:
\begin{equation}
 \label{eq:approximate_s_z_case_II_b_tau_congtigous_any_k}
    \begin{aligned}
        h(S - \Z_\tau(S))&\geq \max\{\mu_1 \alpha(\eta + \mu_2 -1) (1 - 1/e^{ \mu_1}) \cdot h(S_2),   \alpha (1 - \eta) \cdot h(S_2)\} \\
        & \geq \max\{\mu_1 \alpha(\eta + \mu_2 -1) (1 - 1/e^{ \mu_1}) ,   \alpha (1 - \eta) \} \cdot \alpha (1 - 1/e^{ \mu_1}) \cdot g_\tau(S^\ast(\V, k, \tau)) \\
        & \geq \frac{\alpha^2 \mu_1 \mu_2 (e^{\mu_1} - 1)^2}{\mu_1  e^{\mu_1}(e^{\mu_1} - 1) + e^{2\mu_1}} \cdot g_\tau(S^\ast(\V, k, \tau)),
    \end{aligned}
\end{equation}
where the three inequalities are from Eqs.~\eqref{eq:approximate_s_z_case_II_b_tau_contigous_assumption_a}, \eqref{eq:approximate_h_S_2_main_tau_contigous}, and \eqref{eq:approximate_max_two_terms_tau_contigous_assumption_a}, respectively.  

By combining the bounds in Eqs.~\eqref{eq:approximate_s_z_case_I_tau_congtigous}, \eqref{eq:approximate_s_z_case_II_a_tau_congtigous}, and \eqref{eq:approximate_s_z_case_II_b_tau_congtigous_any_k}, we obtain the second bound in Eq.~\eqref{eq:approximate_tau_contigous_bound2}.

Next, we show that Eq.~\eqref{eq:approximate_s_z_case_II_b_tau_contigous_assumption_a} holds. Recall that the $\tau$ elements in $\Z_\tau(S)$ form a contiguous subsequence of $S$. Then, the elements in $\Z^1_\tau(S)$ and $\Z^2_\tau(S)$ also form a contiguous subsequence of $S_1$ and $S_2$, respectively. We use $Z_1$ and $Z_2$ to denote the contiguous subsequence of elements in $\Z^1_\tau(S)$ and $\Z^2_\tau(S)$, respectively. We first rewrite $S_2$ as $S_2=S_2^1 \oplus Z_2 \oplus S_2^2$, where $S_2^1$ and $S_2^1$ denote the subsequences of $S_2$ before and after subsequence $Z_2$, respectively. Note that $S_2^1$ or $S_2^2$ could be an empty sequence, depending on the position of subsequence $Z_2$ in $S_2$. Then, we characterize $h(Z_2)$ in terms of $h(S_2)$:
 \begin{equation}
 \label{eq:intermediate_approximate_z2_tau_contigous_assumption_a}
    \begin{aligned}
        \eta \cdot h(S_2) & \stackrel{\text{(a)}} = h(S_2) - h(S_2 - \Z^2_\tau(S)) \\
        & = h(S_2^1 \oplus Z_2 \oplus S_2^2) - h(S_2^1 \oplus S_2^2) \\
        & = h(S_2^1) + h(Z_2|S_2^1) +  h(S_2^2|S_2^1 \oplus Z_2)  - h(S_2^1) -  h(S_2^2|S_2^1) \\
        & = h(Z_2|S_2^1) +  h(S_2^2|S_2^1 \oplus Z_2) -  h(S_2^2|S_2^1) \\
              & \stackrel{\text{(b)}} \leq h(Z_2|S_2^1) +  h(S_2^2|S_2^1 \oplus Z_2) -  \mu_2 h(S_2^2|S_2^1 \oplus Z_2) \\
        & = h(Z_2|S_2^1) +   ( 1 -  \mu_2) h(S_2^2|S_2^1 \oplus Z_2) \\
        & \stackrel{\text{(c)}} \leq h(Z_2|S_2^1) +   ( 1 -  \mu_2) h(S_2) \\
        & \stackrel{\text{(d)}} \leq h(Z_2)/\mu_1 + ( 1 -  \mu_2) h(S_2),
    \end{aligned}
 \end{equation}
  where (a) is from the definition of $\eta$, (b) is  due to function $h$ being $\mu_2$-sequence-submodular, (c) holds because $\mu_2 \leq 1$ and $h(S_2^2|S_2^1 \oplus Z_2) \leq h(S_2^2|S_2^1 \oplus Z_2) + h(S_2^1 \oplus Z_2) = h(S_2)$ (recall that $S_2 = S_2^1\oplus Z_2 \oplus S_2^2$), and (d) is  due to function $h$ being $\mu_1$-element-sequence-submodular.
From Eq.~\eqref{eq:intermediate_approximate_z2_tau_contigous_assumption_a}, we have the following:
\begin{equation}
    \label{eq:approximate_z2_tau_contigous_assumption_a}
    h(Z_2) \geq \mu_1(\eta + \mu_2 -1) h(S_2).
\end{equation}

To prove Eq.~\eqref{eq:approximate_s_z_case_II_b_tau_contigous_assumption_a}, we decompose it into two parts: (i) $h(S - \Z_\tau(S))  \ge \mu_1 \alpha(\eta + \mu_2 -1) (1 - 1/e^{ \mu_1}) \cdot h(S_2)$ and (ii) $h(S - \Z_\tau(S)) \ge \alpha (1 - \eta) \cdot h(S_2)$.

Let $\tau_2 \triangleq |\Z^2_\tau(S)|$. We use $S_1^{\tau_2}$ to denote the subsequence consisting of the first $\tau_2$ elements in $S_1$. Then, Part (i) can be shown through the following: 
\begin{equation}
\label{eq:approximate_s_z_case_II_b_tau_contigous_assumption_a_1}
    \begin{aligned}
        h(S - \Z_\tau(S)) & = h((S_1 - \Z^1_\tau(S)) \oplus (S_2 - \Z^2_\tau(S))) \\
        & \stackrel{\text{(a)}} \geq h(S_1 - \Z^1_\tau(S)) \\
        & = h(S_1^{\tau_2}) \\
        & \stackrel{\text{(b)}} \geq \alpha (1 - 1/e^{ \mu_1}) \cdot h(S^\ast(\V, \tau_2, 0)) \\
        & \stackrel{\text{(c)}} \geq \alpha  (1 - 1/e^{ \mu_1}) \cdot h(Z_2)\\
        & \stackrel{\text{(d)}} \geq \mu_1 \alpha(\eta + \mu_2 -1) (1 - 1/e^{ \mu_1}) \cdot h(S_2),
    \end{aligned}
 \end{equation}
\noindent where (a) is due to the forward monotonicity of function $h$, (b) is due to the greedy manner of selecting subsequence $S_1^{\tau_2}$ from set $\V$ and Lemma~\ref{lemma:approximate_classic_greedy} (where we replace both $k$ and $i$ with $\tau_2$), (c) holds because sequence $Z_2$ is a feasible solution to Problem~\eqref{eq:sequence_selection} for selecting a sequence of $\tau_2$ elements from $\V$, and (d) is from Eq.~\eqref{eq:approximate_z2_tau_contigous_assumption_a}.

Part (ii) can be shown through the following:
\begin{equation}
\label{eq:approximate_s_z_case_II_b_tau_contigous_assumption_a_2}
   h(S - \Z_\tau(S)) = h((S_1 - \Z^1_\tau(S)) \oplus (S_2 - \Z^2_\tau(S))) \stackrel{\text{(a)}} \geq \alpha \cdot h(S_2 - \Z^2_\tau(S)) \stackrel{\text{(b)}} =  \alpha (1-\eta) \cdot h(S_2), 
 \end{equation}
 where (a) is from the $\alpha$-backward monotonicity of function $h$ and (b) is from the definition of $\eta$.

Eq.~\eqref{eq:approximate_max_two_terms_tau_contigous_assumption_a} holds trivially for any $\eta \in (0, 1]$ by setting $\mu_1 \alpha(\eta + \mu_2 -1) (1 - 1/e^{ \mu_1})$ and $\alpha(1-\eta)$ to be equal, solving for $\eta$, and plugging it back. This completes the proof of the bound in Eq.~\eqref{eq:approximate_tau_contigous_bound2}.

Now, we prove the bound in Eq.~\eqref{eq:approximate_tau_contigous_bound1}. Note that the analysis so far applies to any function $h$ that is forward-monotone, $\alpha$-backward-monotone,  $\mu_1$-element-sequence-submodular, and $\mu_2$-sequence-submodular (Assumption~\ref{assump:forward_alpha_mu1_mu2}), and hence, it also applies to  Assumption~\ref{assump:forward_backward_mu1_mu2}, which is a special case of Assumption~\ref{assump:forward_alpha_mu1_mu2} with $\alpha = 1$. The bound in Eq.~\eqref{eq:approximate_tau_contigous_bound1} requires backward monotonicity (i.e., $\alpha = 1$), but it becomes better than the bound in Eq.~\eqref{eq:approximate_tau_contigous_bound2} when $k$ is large.
In the following analysis, we assume that function $h$ is forward-monotone, backward-monotone,  $\mu_1$-element-sequence-submodular, and $\mu_2$-sequence-submodular (Assumption~\ref{assump:forward_backward_mu1_mu2}).

The proof proceeds as follows. We borrow the analysis of Case-I, Case~II-a, and Eq.~\eqref{eq:approximate_s_z_case_II_b_tau_contigous_assumption_a} from the previous analysis by setting $\alpha = 1$. Then, for Case-II-b, we provide a different analysis. 

In Case II-b: we have $h(S_2) > h(S_2 - \Z^2_\tau(S))$. Let $\tau_1 \triangleq |\Z^1_\tau(S)|$ and recall that $\tau_2 \triangleq |\Z^2_\tau(S)|$. Then, we have $\tau=\tau_1+\tau_2$ and $k = |S| = |S_1| + |S_2| \ge \tau + \tau_2$. Therefore, we consider two cases: $k = \tau + \tau_2$ and $k > \tau + \tau_2$. 

Suppose $k = \tau + \tau_2$. Then, it implies $\Z^2_\tau(S) = \V(S_2)$, i.e., all the elements in $S_2$ are removed. This further implies that the elements in $\Z^1_\tau(S)$ are at the end of sequence $S_1$ (due to the contiguous assumption of elements in $\Z_\tau(S)$). Let $S_1^{\tau_2} \preceq S_1$ denote the subsequence consisting of the first $\tau_2$ elements in $S_1$. It is easy to see $S_1 - \Z^1_\tau(S) = S_1^{\tau_2}$. Then, we have the following:
\begin{equation}
\label{eq:approximate_h_S_1}
\begin{aligned}
         h(S - \Z_\tau(S)) & = h((S_1 - \Z^1_\tau(S)) \oplus (S_2 - \Z^2_\tau(S))) \\
         & = h(S_1 - \Z^1_\tau(S)) \\
         & = h(S_1^{\tau_2}) \\
         & \geq  (1 - 1/e^{\mu_1}) h(S^\ast(\V, \tau_2, 0)) \\ 
        & \geq  (1 - 1/e^{\mu_1}) g_{\tau}(S^\ast(\V, k, \tau)),
    \end{aligned}
\end{equation}
where the first inequality is from Lemma~\ref{lemma:approximate_classic_greedy} (where we replace both $k$ and $i$ with $\tau_2$ and set $\alpha = 1$) and the second inequality is due to $\tau_2=k-\tau$ and Lemma~\ref{lemma:Optimal_value_from_subset} (where $\V^\prime$ is an empty set).

Now, suppose $k>\tau + \tau_2$. Recall that $\eta \triangleq \frac{h(S_2) - h(S_2 - \Z^2_\tau(S))}{h(S_2)}$. Let $a \triangleq \mu_1 (1- 1/e^{\mu_1})$, $b \triangleq \mu_1 \cdot \frac{k-2\tau}{k-\tau}$, and $b^\prime \triangleq \mu_1 \cdot \frac{k-\tau - \tau_2}{k-\tau_2}$. We first state the following:
\begin{subequations}
\begin{align}
    & h(S_2) \geq \frac{e^{b^\prime} - 1}{e^{b^\prime} - a (\eta + \mu_2 -1)} g_\tau(S^\ast(\V, k, \tau))  \label{eq:approximate_S_2_large_k_tau_contigous_assumption_a},\\
    & \max \left\{\frac{a (\eta + \mu_2 -1)(e^{b^\prime} - 1)}{e^{b^\prime} - a (\eta + \mu_2 -1)},  \frac{(1 - \eta)(e^{b^\prime} - 1)}{e^{b^\prime} - a (\eta + \mu_2 -1)} \right\} \geq \frac{ a \mu_2 (e^{b^\prime} - 1)}{(a+1)e^{b^\prime} - a \mu_2} \label{eq:approximate_max_two_terms_tau_contigous_2_assumption_a}.
\end{align}
\end{subequations}

We will prove Eqs.~\eqref{eq:approximate_S_2_large_k_tau_contigous_assumption_a} and \eqref{eq:approximate_max_two_terms_tau_contigous_2_assumption_a} later; for now, we assume that they both hold. Then, we can obtain the following bound:
\begin{equation}
   \label{eq:approximate_s_z_case_II_b_tau_congtigous_large_k}
    \begin{aligned}
        h(S - \Z_\tau(S))&\geq \max\{\mu_1 (\eta + \mu_2 -1) (1 - 1/e^{ \mu_1}) \cdot h(S_2), (1 - \eta) \cdot h(S_2)\} \\
       & \geq \max \left\{\frac{a (\eta + \mu_2 -1)(e^{b^\prime} - 1)}{e^{b^\prime} - a (\eta + \mu_2 -1)},  \frac{(1 - \eta)(e^{b^\prime} - 1)}{e^{b^\prime} - a (\eta + \mu_2 -1)} \right\} g_\tau(S^\ast(\V, k, \tau)) \\
       & \geq \frac{ a \mu_2 (e^{b^\prime} - 1)}{(a+1)e^{b^\prime} - a \mu_2} g_\tau(S^\ast(\V, k, \tau)) \\
       & \geq \frac{ a \mu_2 (e^{b} - 1)}{(a+1)e^{b} - a \mu_2} g_\tau(S^\ast(\V, k, \tau)),
       \end{aligned}
   \end{equation}
where the first three inequalities are from Eq.~\eqref{eq:approximate_s_z_case_II_b_tau_contigous_assumption_a} ($\alpha = 1$) and Eqs.~\eqref{eq:approximate_S_2_large_k_tau_contigous_assumption_a} and \eqref{eq:approximate_max_two_terms_tau_contigous_2_assumption_a}, respectively; for the last inequality, we replace $\tau_2$ in $b^\prime$ with $\tau$ to obtain $b$, and the inequality holds due to $\tau_2 \leq \tau$ and that $\frac{ a \mu_2 (e^{b^\prime} - 1)}{(a+1)e^{b^\prime} - a \mu_2}$ is a decreasing function of $\tau_2$. 

By combining the bounds in Eqs.~\eqref{eq:approximate_s_z_case_I_tau_congtigous}, \eqref{eq:approximate_s_z_case_II_a_tau_congtigous} (with $\alpha = 1$), \eqref{eq:approximate_h_S_1}, and \eqref{eq:approximate_s_z_case_II_b_tau_congtigous_large_k}, we obtain Eq.~\eqref{eq:approximate_tau_contigous_bound1}.  
 
Next, we show that Eq.~\eqref{eq:approximate_S_2_large_k_tau_contigous_assumption_a} holds. Let $S_{2}^{\tau_2} \preceq S_2$ denote the subsequence consisting of the first $\tau_2$ elements of sequence $S_2$. Then, we have the following:
  \begin{equation}
    \begin{aligned}
    h(S_{2}^{\tau_2})& \stackrel{\text{(a)}} \geq (1 - 1/e^{\mu_1}) h(S^\ast(\V \setminus \V(S_1), \tau_2, 0)) \\
        & \stackrel{\text{(b)}} \geq (1 - 1/e^{\mu_1}) h(Z_2)\\
        & \stackrel{\text{(c)}} \geq \mu_1(\eta + \mu_2 -1) (1 - 1/e^{\mu_1}) h(S_2),
    \end{aligned}
 \end{equation}
where (a) is due to the greedy manner of selecting subsequence $S_{2}^{\tau_2}$ from set $\V \setminus \V(S_1)$ and Lemma~\ref{lemma:approximate_classic_greedy} (where we set $\alpha = 1$ and replace $\V$, $k$, and $i$ with $\V \setminus \V(S_1)$, $\tau_2$, and $\tau_2$, respectively), (b) holds because sequence $Z_2$ is a feasible solution to Problem \eqref{eq:sequence_selection} for selecting a sequence of $\tau_2$ elements from $\V = \V \setminus \V(S_1)$, and (c) is from Eq.~\eqref{eq:approximate_z2_tau_contigous_assumption_a}. 

Therefore, we can characterize the value of $h(S_2)$ as follows (recall that $b^\prime \triangleq \mu_1 \cdot \frac{k-\tau - \tau_2}{k-\tau_2}$):
 \begin{equation}
    \begin{aligned}
        h(S_2) & \geq  \frac{e^{b^\prime} - 1}{e^{b^\prime} - a(\eta + \mu_2 -1)} h(S^\ast(\V \setminus \V(S_1), k - \tau, 0)) \\
        & \geq \frac{e^{b^\prime} - 1}{e^{b^\prime} - a(\eta + \mu_2 -1)} g_\tau(S^\ast(\V, k, \tau)),
    \end{aligned}
 \end{equation}
where the inequalities are from Lemma~\ref{lemma:approximate_value_concentration} (where we replace $\V$, $S$, $S_1$, $k$, $k^\prime$, and $c$ with $\V \setminus \V(S_1)$, $S_2$, $S_{2}^{\tau_2}$, $k-\tau$, $k-\tau-\tau_2$, and $a(\eta + \mu_2 -1)$, respectively) and Lemma~\ref{lemma:Optimal_value_from_subset}, respectively.

Finally, we show that Eq.~\eqref{eq:approximate_max_two_terms_tau_contigous_2_assumption_a} holds. We define two functions of $\eta$:
\begin{equation*}
l_1(\eta) = \frac{a \cdot (\eta + \mu_2 -1)(e^{b^\prime} - 1)}{e^{b^\prime} - a \cdot (\eta + \mu_2 -1)} ~\text{and}~ l_2(\eta) = \frac{(1 - \eta)(e^{b^\prime} - 1)}{e^{b^\prime} - a \cdot (\eta + \mu_2 -1)}.
\end{equation*}
It can be verified that for $k > \tau + \tau_2$ and $\eta \in (0, 1]$, function $l_1(\eta)$ is monotonically increasing and function $l_2(\eta)$ is monotonically decreasing. 
Also, consider $\eta^* \triangleq \frac{1 + a - a \cdot \mu_2}{a+1}$. We have $l_1(\eta^*) = l_2(\eta^*) = \frac{ a \mu_2 (e^{b^\prime} - 1)}{(a+1)e^{b^\prime} - a \mu_2}$.
We consider two cases for $\eta$: $\eta \in [\eta^*, 1]$ and $\eta \in (0, \eta^*]$. For $\eta \in [\eta^*, 1]$, we have $\max\{l_1(\eta), l_2(\eta)\} \geq l_1(\eta) \geq l_1(\eta^*)$ as $l_1(\eta)$ is monotonically increasing; for $\eta \in (0, \eta^*]$, we have $\max\{l_1(\eta), l_2(\eta)\} \geq l_2(\eta) \geq l_2(\eta^*) = l_1(\eta^*)$ as $l_2(\eta)$ is monotonically decreasing. Therefore, for $\eta \in (0, 1]$, we have $\max\{l_1(\eta), l_2(\eta)\} \geq l_1(\eta^*) = \frac{ a \mu_2 (e^{b^\prime} - 1)}{(a+1)e^{b^\prime} - a \mu_2}$. This gives Eq.~\eqref{eq:approximate_max_two_terms_tau_contigous_2_assumption_a} and completes the proof of Eq.~\eqref{eq:approximate_tau_contigous_bound1}.
\end{proof}

\subsection{Proof of Theorem \ref{theorem:robustGreedy_tau_general_case_b}}
\label{proof:robustGreedy_tau_general_case_b}
\begin{proof}
Suppose that function $h$ is forward-monotone, $\alpha$-backward-monotone, $\mu_1$-element-sequence-submodular,  and $\mu_3$-general-sequence-submodular (Assumption~\ref{assump:forward_alpha_mu1_mu3}). We use Lemma~\ref{lemma:Optimal_value_from_subset} presented in Appendix~\ref{subsec:preliminary_results} and Lemma~\ref{lemma:approximate_classic_greedy} presented in Appendix~\ref{subsec:approximate_preliminary_results} to prove that Algorithm \ref{alg:robustGreedy_tau_general} achieves an approximation ratio of $\frac{\alpha^2 \mu_1 \mu_3 (e^{\mu_1} - 1)}{(\mu_1  + \alpha   \tau)e^{\mu_1}}$ in the case of $1 \leq \tau \leq k$, assuming the removal of an arbitrary subset of $\tau$ selected elements, which are not necessarily contiguous.

In Step 1 of Algorithm~\ref{alg:robustGreedy_tau_general}, sequence $S_1$ is selected by choosing $\tau$ elements with the highest individual values in a greedy manner, and we have $|S_1| = \tau$. In Step 2 of Algorithm~\ref{alg:robustGreedy_tau_contiguous}, it is equivalent that sequence $S_2$ is selected by the SSG algorithm from set $\V \setminus \V(S_1)$, and we have $|S_2| = k-\tau$. Hence, the sequence selected by Algorithm~\ref{alg:robustGreedy_tau_contiguous} can be written as $S = S_1 \oplus S_2$. Recall that for any given sequence $S$, set $\Z_{\tau}(S)$ denotes the set of elements removed from sequence $S$ in the worst case (i.e., $\Z_{\tau}(S)$ is an optimal solution to Problem~\eqref{eq:Z_S}). We define $\Z^1_\tau(S) \triangleq \Z_\tau(S) \cap \V(S_1)$ and $\Z^2_\tau(S) \triangleq \Z_\tau(S) \cap \V(S_2)$ as the set of elements removed from subsequences $S_1$ and $S_2$, respectively.

The proof of Theorem~\ref{theorem:robustGreedy_tau_general_case_b} follows a similar line of analysis as in the proof of Theorem~\ref{theorem:robustGreedy_tau_contigous_case_b}. Specifically, we will also consider three cases: (I) $\Z^2_\tau(S) = \emptyset$, (II-a) $\Z^2_\tau(S) \neq \emptyset$ and $h(S_2) \leq h(S_2 - \Z^2_\tau(S))$, and (II-b) $\Z^2_\tau(S) \neq \emptyset$ and $h(S_2) > h(S_2 - \Z^2_\tau(S))$. The proofs of Case I and Case II-a are identical to those in Theorem~\ref{theorem:robustGreedy_tau_contigous_case_b}. Therefore, we focus on Case~II-b, which requires a different proof strategy.

We want to show the following bound that establishes the approximation ratio of Algorithm~\ref{alg:robustGreedy_tau_general}:
\begin{equation}
\label{eq:approximate_tau_general_bound1}
    h(S - \Z_\tau(S)) \ge \frac{\alpha^2 \mu_1 \mu_3 (e^{\mu_1} - 1)}{(\mu_1  + \alpha   \tau)e^{\mu_1}} g_{\tau}(S^\ast(\V, k, \tau)).
\end{equation}

To begin with, we present a lower bound on $h(S_2)$, which will be used throughout the proof:
\begin{equation}
\label{eq:approximate_h_S_2_main_tau_general}
        \begin{aligned}
         h(S_2) & \geq \alpha (1 - 1/e^{\mu_1}) h(S^\ast(\V \setminus \V(S_1), k - \tau, 0)) \\ 
        & \geq \alpha (1 - 1/e^{\mu_1}) g_{\tau}(S^\ast(\V, k, \tau)),
    \end{aligned}
    \end{equation}
where the first inequality is from Lemma~\ref{lemma:approximate_classic_greedy} (where we replace $\V$, $k$, and $i$ with $\V \setminus \V(S_1)$, $k-\tau$, and $k-\tau$, respectively) and the second inequality is from Lemma~\ref{lemma:Optimal_value_from_subset} (where we replace $\V^\prime$ with $\V(S_1)$).

We now focus on Case II-b, where we have $\Z^2_\tau(S) \neq \emptyset$ and $h(S_2) > h(S_2 - \Z^2_\tau(S))$. Let $\eta \triangleq \frac{h(S_2) - h(S_2 - \Z^2_\tau(S))}{h(S_2)}$ denote the ratio of the loss of removing elements in $\Z^2_\tau(S)$ from sequence $S_2$ to the value of sequence $S_2$, and we have $\eta \in (0, 1]$. We first state the following:
\begin{subequations}
\begin{align}
    & h(S - \Z_\tau(S)) \geq \max\{\frac{\mu_1(\eta + \mu_3 -1)}{\tau} \cdot h(S_2),  \alpha (1 - \eta) \cdot h(S_2)\}, \label{eq:approximate_s_z_case_II_b_tau_general_assumption_a}\\
    & \max \left \{\frac{\mu_1(\eta + \mu_3 -1)}{\tau} , \alpha (1 - \eta) \right\} \geq  \frac{\alpha \mu_1 \mu_3}{\mu_1 + \tau \alpha}.     \label{eq:approximate_max_two_terms_tau_general_assumption_a}
\end{align}
\end{subequations}

We will prove Eqs.~\eqref{eq:approximate_s_z_case_II_b_tau_general_assumption_a} and \eqref{eq:approximate_max_two_terms_tau_general_assumption_a} later; for now, we assume that they both hold. 
Then, we can obtain the following bound:
\begin{equation}
 \label{eq:approximate_s_z_any_k_assumption_a_general_tau}
    \begin{aligned}
        h(S - \Z_\tau(S)) 
        & \geq \max \left \{\frac{\mu_1(\eta + \mu_3 -1)}{\tau} \cdot h(S_2),  \alpha (1 - \eta) \cdot h(S_2) \right \} \\
        & \geq \max \left \{\frac{\mu_1(\eta + \mu_3 -1)}{\tau} , \alpha (1 - \eta) \right\} \cdot \alpha (1 - 1/e^{\mu_1}) g_{\tau}(S^\ast(\V, k, \tau)) \\
        & \geq \frac{\alpha^2 \mu_1 \mu_3 (e^{\mu_1} - 1)}{(\mu_1  + \alpha   \tau)e^{\mu_1}} g_{\tau}(S^\ast(\V, k, \tau)),
    \end{aligned}
\end{equation}
where the three inequalities are from Eqs.~\eqref{eq:approximate_s_z_case_II_b_tau_general_assumption_a}, \eqref{eq:approximate_h_S_2_main_tau_general}, and \eqref{eq:approximate_max_two_terms_tau_general_assumption_a}, respectively.

Now, we show that Eqs.~\eqref{eq:approximate_s_z_case_II_b_tau_general_assumption_a} and \eqref{eq:approximate_max_two_terms_tau_general_assumption_a} hold. We start by characterizing the value of elements in $\Z^2_\tau(S)$. Let $\tau_2 \triangleq |\Z^2_\tau(S)|$, and let the elements in $\Z^2_\tau(S)$ be denoted by $z_1, z_2, \dots, z_{\tau_2}$ according to their order in sequence $S_2$. Then, we can rewrite $S_2$ as  $S_2 = S_2^1 \oplus (z_1) \oplus S_2^2 \oplus (z_2) \oplus \dots \oplus S_2^{\tau_2 + 1}$, where $S_2^i$ is the subsequence between elements $z_{i-1}$ and $z_i$, for $i = 1,2, \dots, \tau_2 + 1$, and both $z_0$ and $z_{\tau+1}$ are an empty sequence. Note that subsequence $S_2^i$ could be an empty sequence, for $i = 1, 2, \dots, \tau_2+1$. We characterize the value of elements in $\Z^2_\tau(S)$ in the following:
 \begin{equation}
 \label{eq:intermediate_approximate_z2_general_tau}
    \begin{aligned}
        \eta \cdot h(S_2)  \stackrel{\text{(a)}}=& h(S_2) - h(S_2 - \Z^2_\tau(S)) \\
        =& h(S_2^1) + h((z_1)|S_2^1) +  h(S_2^2|S_2^1 \oplus (z_1)) \\
        & + \dots + h(S_2^{\tau_2+1}|S_2^1 \oplus (z_1) \oplus \dots \oplus S_2^{\tau_2} \oplus (z_{\tau_2}))  \\
        & -  h(S_2^1) - h(S_2^2|S_2^1) - \dots   - h(S_2^{\tau_2+1}|S_2^1 \oplus \dots \oplus S_2^{\tau_2}) \\
        =& \sum_{i =1}^{\tau_2} h((z_i)|S_2^1 \oplus (z_1) \oplus \dots \oplus (z_{i-1}) \oplus S_2^{i})   + h(S_2^2|S_2^1 \oplus (z_1)) - h(S_2^2|S_2^1) \\
        & + \dots + h(S_2^{\tau_2+1}|S_2^1 \oplus (z_1) \oplus \dots \oplus S_2^{\tau_2} \oplus (z_{\tau_2}))  - 
        h(S_2^{\tau_2+1}|S_2^1 \oplus \dots \oplus S_2^{\tau_2}) \\
        \stackrel{\text{(b)}} \leq & \sum_{i =1}^{\tau_2} h((z_i)|S_2^1 \oplus (z_1) \oplus \dots \oplus (z_{i-1}) \oplus S_2^{i}) \\
        & + h(S_2^2|S_2^1 \oplus (z_1)) - \mu_3 \cdot h(S_2^2|S_2^1 \oplus (z_1)) \\
        & + \dots + h(S_2^{\tau_2+1}|S_2^1 \oplus (z_1) \oplus \dots \oplus S_2^{\tau_2} \oplus (z_{\tau_2})) \\
        &- 
        \mu_3 \cdot h(S_2^{\tau_2+1}|S_2^1 \oplus (z_1) \oplus \dots \oplus S_2^{\tau_2} \oplus (z_{\tau_2})) \\
        = & \sum_{i =1}^{\tau_2} h((z_i)|S_2^1 \oplus (z_1) \oplus \dots \oplus (z_{i-1}) \oplus S_2^{i}) + (1 - \mu_3) \cdot h(S_2^2|S_2^1 \oplus (z_1)) \\
        & + \dots + (1 - \mu_3) \cdot h(S_2^{\tau_2+1}|S_2^1 \oplus (z_1) \oplus \dots \oplus S_2^{\tau_2} \oplus (z_{\tau_2})) \\
         = & \sum_{i =1}^{\tau_2} h((z_i)|S_2^1 \oplus (z_1) \oplus \dots \oplus (z_{i-1}) \oplus S_2^{i}) \\
        & + (1 - \mu_3) ( h(S_2^2|S_2^1 \oplus (z_1))  + \dots +  h(S_2^{\tau_2+1}|S_2^1 \oplus (z_1) \oplus \dots \oplus S_2^{\tau_2} \oplus (z_{\tau_2}))) \\
        \stackrel{\text{(c)}} \leq & \sum_{i =1}^{\tau_2} h((z_i)|S_2^1 \oplus (z_1) \oplus \dots \oplus (z_{i-1}) \oplus S_2^{i}) + (1 - \mu_3) \cdot h(S_2)\\
        \stackrel{\text{(d)}} \leq& \sum_{i =1}^{\tau_2} h((z_i))/\mu_1 + (1 - \mu_3) \cdot h(S_2),
    \end{aligned}
 \end{equation}
 where (a) is from the definition of $\eta$, (b) is  due to function $h$ being $\mu_3$-general-sequence-submodular, (c) holds because $\mu_3 \leq 1$ and $S_2 = S_2^1 \oplus (z_1) \oplus S_2^2 \oplus (z_2) \oplus \dots \oplus S_2^{\tau_2 + 1}$, and (d) is due to function $h$ being $\mu_1$-element-sequence-submodular.
 From Eq.~\eqref{eq:intermediate_approximate_z2_general_tau}, we have the following: 
\begin{equation}
    \label{eq:approximate_z2_general_tau}
    \sum_{i =1}^{\tau_2} h((z_i)) \geq \mu_1(\eta + \mu_3 -1) h(S_2).
\end{equation}

To prove Eq.~\eqref{eq:approximate_s_z_case_II_b_tau_general_assumption_a}, we decompose it into two parts: (i) $h(S - \Z_\tau(S))  \ge \frac{\mu_1(\eta + \mu_3 -1)}{\tau} \cdot h(S_2)$ and (ii) $h(S - \Z_\tau(S)) \ge \alpha (1-\eta) \cdot h(S_2)$.

Let $v^\prime$ denote the first element in $S_1 - \Z^1_\tau(S)$. Then, Part (i) can be shown through the following:
 \begin{equation*}
        h(S - \Z_\tau(S)) \stackrel{\text{(a)}} \geq h((v^\prime)) 
        \stackrel{\text{(b)}} \geq \frac{1}{\tau_2}  \sum_{i =1}^{\tau_2} h((z_i))
        \stackrel{\text{(c)}} \geq \frac{\mu_1(\eta + \mu_3 -1)}{\tau_2} \cdot  h(S_2)
        \stackrel{\text{(d)}} \geq \frac{\mu_1(\eta + \mu_3 -1)}{\tau} \cdot h(S_2),
 \end{equation*}  
where (a) is due to the forward monotonicity of function $h$, (b) is due to $h(v^\prime) \geq h(z_i)$ for any $i = 1,2, \dots, \tau_2$, (c) is from Eq.~\eqref{eq:approximate_z2_general_tau}, and (d) is due to $\tau_2 \leq \tau$.

Part (ii) can be shown through the following:
\begin{equation*}
    \begin{aligned}
   h(S - \Z_\tau(S)) = h((S_1 - \Z^1_\tau(S)) \oplus (S_2 - \Z^2_\tau(S))) \stackrel{\text{(a)}} \geq \alpha \cdot h(S_2 - \Z^2_\tau(S)) \stackrel{\text{(b)}} = \alpha (1-\eta) \cdot h(S_2),
    \end{aligned}
 \end{equation*}
 where (a) is from the $\alpha$-backward monotonicity of function $h$ and (b) is from the definition of $\eta$.
 
Eq.~\eqref{eq:approximate_max_two_terms_tau_general_assumption_a} holds trivially for any $\eta \in (0, 1]$ by setting $\frac{\mu_1(\eta + \mu_3 -1)}{\tau} $ and $\alpha (1-\eta)$ to be equal, solving for $\eta$, and plugging it back. This completes the proof.
\end{proof}
\end{document}